\documentclass[10pt]{article} %%Margins must be 1 inch, top & bottom 1 in%%
\marginparwidth 0pt
\oddsidemargin  0pt
\evensidemargin  0pt
\marginparsep 0pt
\topmargin   -0.5in
\textwidth   6.5in
\textheight  9in
\usepackage{graphicx}
%changes line spacing
\linespread{1.07}

\usepackage{amsmath, amsthm}
\usepackage{amsmath}
\usepackage{amsfonts}
\usepackage{amssymb}
\usepackage{mathrsfs}
\usepackage{verbatim}
\usepackage{wrapfig}
\usepackage{pgf}
\usepackage{subfigure}
\usepackage{color}
\usepackage{setspace}

\newtheorem*{mainthm}{Main Theorem}
\newtheorem{thm}{Theorem}[section]
\newtheorem{cor}[thm]{Corollary}
\newtheorem{lemma}[thm]{Lemma}
\newtheorem{prop}[thm]{Proposition}

\theoremstyle{definition}
\newtheorem{remark}[thm] {Remark}

\newtheorem{defn}[thm]{Definition}
\newtheorem{ex}[thm]{Example}

\newcommand{\e}{\varepsilon}

%\DeclareFontFamily{OT1}{pzc}{}
%\DeclareFontShape{OT1}{pzc}{m}{it}
%             {<-> s * [0.900] pzcmi7t}{}
%\DeclareMathAlphabet{\mathfrak}{OT1}{pzc}
%                                 {m}{it}

\begin{document} 
\title{\Large{On Convergence to SLE$_6$ I:\\ \large{Conformal Invariance for Certain Models of the Bond--Triangular Type}}}
\date{}
\author
{\normalsize{I.~Binder$^1$, L.~Chayes$^2$, H.~K.~Lei$^2$}
\thanks{\copyright\, 2010 by I. Binder, L.~Chayes and H.~K.~Lei.  Reproduction, by any means, of the entire article for non-commercial purposes is permitted without charge.}}
\maketitle

\vspace{-4mm}
\centerline{${}^1$\textit{Department of Mathematics, University of Toronto}}
\centerline{${}^2$\textit{Department of Mathematics, UCLA}}
 
\begin{quote}
{\footnotesize {\bf Abstract: }}Following the approach outlined in \cite{stas}, convergence to SLE$_6$ of the Exploration Processes for the correlated bond--triangular type models studied in \cite{cardy} is established.   This puts the said models in the same universality class as the standard site percolation model on the triangular lattice \cite{stas_perc}.  In the context of these models, the result is proven for all domains with boundary Minkowski dimension less than two. Moreover, the proof of convergence applies in the context of general critical 2D percolation models and for general domains, under the stipulation that Cardy's Formula can be established for domains in this generality.  

{\footnotesize {\bf Keywords: }}Universality, conformal invariance, percolation, Cardy's Formula.
\end{quote}

%\begin{center}\large{\textbf{This work is dedicated to the memory of O.~Schramm.}}\end{center}

\section{\large{Introduction}}
In recent years, the scaling behavior of critical 2D percolation systems have been the subject of attention.  While the results proved in this note amount to a statement concerning the scaling limit of the specific percolation models defined in \cite{cardy}, the purpose of this work is actually three--fold: 1) Following the framework described in \cite{stas}, we provide a general proof that (the law of) the ``interface'' of essentially \emph{any} critical 2D percolation model converges to SLE$_6$, whenever Cardy's Formula can be verified.  2) Rigorous extraction of Cardy's Formula for general domains -- including slit domains, given interior analyticity of the Cardy--Carleson functions; this includes clarification of the necessary discretization schemes.  3) Finally, we provide a generalization of Cardy's Formula to an extended class of domains for the specific class of models described in \cite{cardy}, and also establish additional ``typical'' (critical) percolation properties which are required, in accord with 1) and 2) above.  We accomplish 1) and 3) in the current installment of this work; item 2) will be tended to in a separate (companion) note \cite{pt2}.  
%
%However, Cardy's Formula is needed at a certain level of generality; moreover, various additional ``typical'' (critical) percolation properties are required.  Thus, secondly, we establish that indeed, all these properties are satisfied by the models defined in \cite{cardy}.  

It is already well--known \cite{stas_perc} that site percolation on the 2D triangular lattice satisfies these sorts of properties.  While in \cite{cn} an elaborate proof of convergence to SLE$_6$ has been detailed, and while it is possible that the proof therein applies in more generality than claimed, the present approach is manifestly applicable to a variety of systems and in a variety of domains.  As a result we have, in complete accordance with the ideology espoused since the 1960s, demonstrated a non--trivial example of universality: Via the common continuum limit, various aspects of the long distance behavior for the models defined in \cite{cardy} are asymptotically identical to those of the critical triangular site percolation model.   

We remark that in principle, our proof applies in the general context of any critical 2D percolation model.  The required conditions are summarized as follows:
\begin{itemize}
\item Russo--Seymour--Welsh (RSW) theory: Uniform estimates for probabilities of crossings (of either type) on all scales plus the ability to stitch smaller crossings together without substantial degradation of the estimates -- FKG--type inequalities.
\item A self--replicating definition of an Exploration Process and a class of \emph{admissible} domains with the property that this class is preserved under the operation of deleting the beginning of a typical explorer path in an admissible domain.
\item The validity of Cardy's Formula for the above--mentioned admissible domains.
\item BK--type inequalities whereby probabilities of separated path type events can be estimated in terms of the individual probabilities.
\item Explicit (``superuniversal'') ``bounds'' on full--space multiple colored five--arm events and half--space multiple colored three--arm events: The probability of observing disjoint crossings of an annulus with aspect ratio $a$ is, on all scales, bounded above by a constant times $a^{-2}$.
\end{itemize}

The rest of this paper is organized as follows: In Section \ref{proof_convergence}, we assemble the necessary ingredients into the proof of convergence to SLE$_6$ (providing some minor proofs of an analytical nature along the way).  These ingredients amount to a number of technical lemmas, a few of which require a sustained effort and whose proofs are provided in Section \ref{proofs} and, for one of them, a result imported from \cite{pt2}.  
%The key technical difficulty concerns the establishment of \emph{some} continuity of Cardy's Formula with respect to domain, which, as it turns out, can be obtained if one can acquire the necessary continuum limits in a sufficiently robust way; these issues are addressed in a second installment of this work, found in \cite{pt2}.  
Finally, Section \ref{model_specific} is devoted to shoring up the required properties of the models defined in \cite{cardy} to the appropriate level for the program in Section \ref{proof_convergence}.

%As it turns out, the key technical difficulty concerns the uniform continuity (with respect the appropriate norm) of Cardy's Formula.  The proof is detailed and has necessitated the creation of a separate Appendix.  Notwithstanding, the required statement amounts to a standardized and eminently believable 2D percolation proposition.  Finally, Section \ref{model_specific} is devoted to shoring up the required properties of the models defined in \cite{cardy} to the appropriate level for the program in Section \ref{proof_convergence}.

\section{Conformal Invariance of the Scaling Limit}\label{proof_convergence}
\subsection{2D Percolation:  Criticality and  Interfaces (a Brief Discussion)}
In this subsection, we shall elucidate, to some extent, the first and second (bullet) items in the penultimate paragraph of the introduction.  For brevity -- and purposes of clarity -- we will not attempt to axiomatize the relevant notions.  In general, the percolation process consists of two competing species, conveniently denoted by ``blue'' and ``yellow''.  The condition of criticality implies that the two species have roughly equal parity; it need not be the case that the two are exactly equivalent, but neither species is dominant at large scales.  In particular, there is no percolation of either species -- with probability one, all monochrome connected clusters are finite.  As it turns out, this is (more or less) equivalent to the statement that for both species, at all scales, the probability of crossing ``rectangles'' of fixed ratio is bounded above and below uniformly.  Moreover, with some notion of positive correlations for crossing type events of the same color, we may patch together the appropriate crossings to conclude that there are scale--invariant bounds on the existence of circuits in annuli; since \emph{Bernoulli percolation} is supposed to imply independence beyond some fixed scale, this also implies similar estimates for circuits in ``partial annuli'' and approximate independence in disjoint layered annuli.  Typically, the way such estimates are applied is as follows: There is a large outside scale and a small inside scale separated by logarithmically many intermediate scales; the probability of monochrome connections between the inner and outer scale is therefore a power of the ratio.  This is the basis of the so--called Russo--Seymour--Welsh (RSW) theory which will be used throughout this work.  For the standard percolation models, these concepts are discussed in the books \cite{kesten_book}, \cite{grimmett_book} and \cite{br}; see also Sections 2.2 and 2.3 of \cite{les_houches} and the paper  \cite{kesten_power}.  For the particular model of interest in this work, such results are not quite automatic, but anyway have been established in \cite{cardy}, the relevant portions of which will be cited as necessary.    

In a similar spirit, let us now discuss critical interfaces for these models (although strictly speaking, criticality plays no r\^ole).  The general setup is as follows: For any finite connected lattice domain, let us fix two ``boundary points'' $a$ and $c$ and impose boundary conditions so that the portion of the boundary going from $a$ to $c$ one way is colored blue and the complementary portion of the boundary is yellow.  The precise lattice--mechanics depend, of course, on the model at hand (and indeed may involve different procedures on the yellow and blue sides).  In any case, if this procedure has been implemented successfully, then in any percolation configuration there will be an \emph{interface} stretching from $a$ to $c$, which separates the blue connected component of the blue boundary from the yellow component of the yellow boundary.  The explicit construction for our model will be provided in Section \ref{explorer}; well known examples include the triangular site percolation problem and the bond model on $\mathbb Z^2$.  In the former case, 
the interface can be realized as boundary segments of hexagons and in the latter, interface consists of segments which connect sites of the so--called medial lattice.  

The seminal ingredient is the Domain Markov Property: The full percolation model with the above boundary setup conditioned on an initial portion of the Exploration Process is identical to the problem in the ``slit'' domain with additional (two--colored) boundary formed by the corresponding curve segment.  It seems manifest, at least for planar models, that all 2D percolation systems have this property.  Whereas the preceding may seem rather vague and discursive, what is actually needed is somewhat less and succinctly formulated: The precise requirement is the content of Equation \eqref{eq:tot}, which is the restriction of these notions to crossing events.  

\subsection{SLE: Definitions and Notations}
\label{SLE}
As the title of this subsection indicates, we will briefly review the relevant notions of L\"owner evolution -- mostly for the purpose of fixing notation.  Let $\Omega$ be a domain with two boundary prime ends $a$ and $c$.
\begin{defn}
Let $\{\Omega_t\}_{t=0}^\infty$ be a strictly decreasing family of subdomains of $\Omega$ ($t \in [0, \infty)$) which is Carath\'eodory continuous with respect to $c$, such that $\Omega_0 = \Omega$ and $c \in \cap_{t=0}^\infty \overline{\Omega}_t$.  Then we call $\{\Omega_t\}_{t=0}^\infty$ a L\"owner chain.
\end{defn}

Let $\mathbb H$ denote the upper--half plane of $\mathbb C$.  We can select some conformal map $g_0: \Omega \rightarrow \mathbb H$ such that $g_0(a) = 0$ and $g_0(c) = \infty$.  The family of conformal maps $g_t: \Omega_t \rightarrow \mathbb H$ normalized such that  $g_t(c) = \infty$ and $g_t \circ g_0^{-1}(z) = z + \frac{A(t)}{z} + o(1/z)$ are continuous in $t$.  We now reparameterize time so that $A(t)$, the capacity at time $t$, is equal to $2t$.  

We call $\gamma$ a \emph{crosscut} in $\Omega$ from $a$ to $c$ if it is the preimage of a non--self--crossing curve from $0$ to $\infty$ in $\mathbb H$ under $g_0$.  Note that  $\gamma$ is allowed to touch itself but not to cross itself.  We define $\Omega_t$ to be the connected component of $\Omega \setminus \gamma_{[0, t]}$ containing $c$.  It's easy to see that $\Omega_t$ is a L\"owner chain if and only if the following two conditions are satisfied for every $t > 0$:\\
 (L1)~~$\gamma_t \in \overline{\Omega_{t-\varepsilon}}, \quad \forall \varepsilon > 0$\\
and\\
 (L2) ~~$\exists \delta_n \rightarrow 0, \quad \forall \varepsilon > 0, \quad \gamma_{t-\delta_n} \in \Omega_{t - \delta_n - \varepsilon}$.\\
If $\gamma$ satisfies (L1) and (L2), then we say that $\gamma$ is a \emph{L\"owner curve}.
Under these conditions, we can reparametrize $\gamma$ so that the maps $g_t$'s satisfy the following celebrated L\"owner equation:
\[ \partial_t g_t (z) = \frac{2}{g_t(z) - \lambda_t}, \]
where $\lambda_t=g_t(\gamma(t))$ is a continuous real function.  On the other hand, the solution of the L\"owner equation for any initial conformal map $g_0: \Omega \rightarrow \mathbb H$ and any continuous real function $\lambda(t)$ defines a L\"owner chain, but not necessarily a curve (see \cite{lawler} for a complete discussion). The object $\lambda_t$ is called the {\it driving function} of $\Omega_t$.

If we take the very special function $\lambda_t=B(\kappa t)$, where $B(t)$ is one-dimension Brownian motion started at zero, then the corresponding random L\"owner chain is called the Stochastic (or Schramm) L\"owner Evolution with parameter $\kappa$, SLE${}_\kappa$.  
%It is known that at least for Jordan domains the L\"owner chain is a.s.~generated by a random curve (see \cite{schramm}).  
We will be particularly interested in the case $\kappa=6$.

\subsection{Statement of the Main Theorem and Lemmas}\label{statements}

We start with a bounded and connected domain $\Omega \subset \mathbb C$.  We will sometimes assume that $\Omega$ has ``boundary dimension'' $M(\partial \Omega) < 2$.  Here $M(S)$ denotes the (upper) \emph{Minkowski} dimension of the set $S$ which, as usual, is defined as 
\[ M (S) = \limsup_{\vartheta \rightarrow 0} \frac{\log \mathcal N(\vartheta)}{\log (1/\vartheta)},\]
where $\mathcal N(\vartheta)$ is the number of boxes of side length $\vartheta$ needed to cover the set.  We will tile $\Omega$ with the discrete lattice of interest (which may require detail, c.f.~$\S$\ref{explorer} and, especially, the discussion in \cite{pt2}) at scale $\varepsilon > 0$ and denote the resulting object by $\Omega_\varepsilon$.  Critical percolation is then performed in $\Omega_\varepsilon$, with $\varepsilon$ tending to zero.  

While the principal result of this note has more general applicability, for simplicity let us state it for the particular model under consideration: 

\begin{mainthm}
Let $\Omega$ be as described above with $M(\partial \Omega) < 2$, let $\Omega_\varepsilon$ be some suitable discretization (see \cite{pt2} for discussions and results) and consider the percolation model described in \cite{cardy} (see $\S$\ref{model}).  Let $a$ and $c$ denote two prime ends at the boundary of $\Omega$ and let us set the boundary conditions on $\Omega_\varepsilon$ in such a way that the Exploration Process, as defined in $\S$\ref{explorer}, runs between $a$ and $c$.  Let $\mu_\varepsilon$ be the probability measure on random curves induced by the Exploration Process on $\Omega_\varepsilon$, and let us endow the space of curves with the appropriate weighted sup--norm metric as described in Definition \ref{frechet}.  Then, 
\[ \mu_\varepsilon \underset{\mathcal{L~}}{\Longrightarrow} \mu_0,\]
where $\mu_0$ has the law of chordal SLE$_6$ from $a$ to $c$.
\end{mainthm}

We remark that while the above statement appears to require a number of ``future specifics'', these are merely technicalities.  The central requisites are captured in the items listed in the penultimate paragraph of the introduction and will be detailed as the proof of the Main Theorem unfolds.  (In particular, here and throughout, the requirement $M(\partial \Omega) < 2$ is for the specific benefit of the model defined in \cite{cardy}.)

The key ingredient which will be used in the proof of the Main Theorem is Cardy's Formula:

\begin{lemma}[Cardy's Formula]\label{cardy_formula}
Let $(\Omega,a,b,c,d)$ be a conformal rectangle -- that is to say, a domain with boundary prime ends $a,b,c,d$, listed in counter-clockwise order, and let us assume that
$M(\partial \Omega)<2$. Let $C_{\varepsilon}(\Omega,a,b,c,d)$ denote the probability that there exists a blue crossing from $[a,b]$ to $[c,d]$ on the $\varepsilon$-lattice approximation of $\Omega$.  Consider the (unique) conformal map which takes $(\Omega, a, b, c, d)$ to $({\mathbb H}, 1-x, 1, \infty, 0)$, where, clearly, $0 < x < 1$ and $x = x(\Omega, a, b, c, d)$.  Then, for the model described in $\S$\ref{model} (or without the restriction $M(\partial \Omega) < 2$ for the site percolation model)
\begin{equation}\label{eq:cardy}
\lim_{\varepsilon\to0}C_{\varepsilon}(\Omega,a,b,c,d)=F(x):=\frac{\int_0^x(s(1-s))^{-2/3}\,ds}{\int_0^1(s(1-s))^{-2/3}\,ds}.
\end{equation}
\end{lemma}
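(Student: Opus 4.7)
My plan is to combine Smirnov's discrete-holomorphicity technology with a domain-approximation scheme controlled by the Minkowski hypothesis $M(\partial\Omega)<2$.  On a ``nice'' polygonal lattice domain, namely one built from finitely many lattice cells with the four marked prime ends at lattice vertices, I would install the standard triple of Cardy--Carleson functions $H_1^\varepsilon, H_2^\varepsilon, H_3^\varepsilon$, where $H_j^\varepsilon(v)$ is the probability of a designated three-arm, three-color event connecting the interior vertex $v$ to the three arcs $[a,b]$, $[b,c]$, $[c,a]$ (with $d$ a point on this last arc).  The model-specific color-switching identity from \cite{cardy}, rooted in the Domain Markov Property, produces an approximate discrete Cauchy--Riemann system for the combination $H^\varepsilon = H_1^\varepsilon + \tau H_2^\varepsilon + \tau^2 H_3^\varepsilon$ with $\tau = e^{2\pi i/3}$.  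Uniform RSW bounds then force equicontinuity on compact subsets, every subsequential limit $H$ is exactly holomorphic, and its boundary behavior pins it as the conformal map of $\Omega$ onto the equilateral triangle with vertices $1,\tau,\tau^2$ sending $a,b,c$ to those vertices; evaluating $H_1$ at $d$, equivalently pulling back through $g_0:\Omega\to(\mathbb H, 1-x,1,\infty,0)$, produces precisely $F(x)$.  This portion of the argument is carried out in the companion note \cite{pt2}.

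For a general $\Omega$ with $M(\partial\Omega)<2$, I would sandwich $\Omega$ between inner and outer lattice polygonal approximations $\Omega_\delta^- \subset \Omega \subset \Omega_\delta^+$ (unions of $\delta$-cells contained in or meeting $\Omega$, with $\delta \gg \varepsilon$), and select prime ends $a_\delta^\pm,\dots,d_\delta^\pm$ on $\partial\Omega_\delta^\pm$ that Carath\'eodory-converge to $a,b,c,d$.  A straightforward coupling of the underlying percolation configurations then gives $C_\varepsilon(\Omega_\delta^-,\dots) \le C_\varepsilon(\Omega,\dots) \le C_\varepsilon(\Omega_\delta^+,\dots)$ modulo an error term $E(\varepsilon,\delta)$ accounting for boundary-arc mismatches and crossings threading through the $\delta$-collar of $\partial\Omega$.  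Sending $\varepsilon \to 0$ first with $\delta$ fixed, the nice-domain case supplies $F(x_\delta^\pm)$; Carath\'eodory continuity of the cross-ratio $x$ then closes the sandwich as $\delta \to 0$ via $F(x_\delta^\pm)\to F(x)$.

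The decisive obstacle will be the control of $E(\varepsilon,\delta)$.  Any discrepancy between crossings in $\Omega$ and in $\Omega_\delta^\pm$ must be certified by a specific half-plane three-arm or full-plane five-arm configuration near $\partial\Omega$ stretching between the $\varepsilon$- and $\delta$-scales.  Covering $\partial\Omega$ by $\lesssim\delta^{-M-o(1)}$ boxes of side $\delta$ and applying the superuniversal exponent-$2$ bounds on such arm events gives $E(\varepsilon,\delta) \lesssim \delta^{-M-o(1)} (\varepsilon/\delta)^2$, which vanishes in the iterated limit $\varepsilon\to 0$, $\delta\to 0$ precisely because $M<2$.  The remaining model-specific work, namely executing the percolation surgery that attributes each discrepancy to such an arm event and confirming the RSW, BK-type, five-arm, and half-plane three-arm bounds in the bond-triangular setting, will be discharged in Section \ref{model_specific}.
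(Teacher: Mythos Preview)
Your approach is a genuine departure from the paper's, and the sandwich idea has a real gap.

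The paper does \emph{not} compare $C_\varepsilon(\Omega,\dots)$ to crossing probabilities in polygonal approximants. Instead it works directly in $\Omega$: the Cardy--Carleson functions $u_\varepsilon, v_\varepsilon, w_\varepsilon$ are shown to converge on compact subsets of $\Omega$ to the harmonic limits $u,v,w$. The model-specific difficulty is that the discrete Cauchy--Riemann relations only hold for the ``starred'' variants $u_\varepsilon^*$, so one must show $|u_\varepsilon^* - u_\varepsilon|\to 0$. The discrepancy is carried by configurations with an essential lasso point, which forces a five-arm event; when the lasso point lies within $N^\lambda$ of $\partial\Omega$ one covers the boundary by $\lesssim N^{M+\delta-\lambda}$ boxes and invokes $\pi_5(N^{\mu_1},N^\lambda)\lesssim (N^{\lambda-\mu_1})^2$, giving a total contribution $N^{M+\delta+\lambda-2\mu_1}$, which is summable because $M<2$. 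The extraction of the actual crossing probability (a boundary value of $u$) from interior convergence is then delegated to \cite{pt2}. So the Minkowski hypothesis enters through a five-arm count of lasso points near $\partial\Omega$, not through a domain comparison.

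Your sandwich scheme, by contrast, has two problems. First, the inequality $C_\varepsilon(\Omega_\delta^-,\dots)\le C_\varepsilon(\Omega,\dots)\le C_\varepsilon(\Omega_\delta^+,\dots)$ is simply not available from a coupling: a blue crossing in $\Omega_\delta^-$ lands on $\partial\Omega_\delta^-$, not on $\partial\Omega$, and promoting it to a crossing of $\Omega$ requires traversing the $\delta$-collar, an event whose probability is bounded away from $1$ by RSW and does not improve as $\varepsilon\to 0$. The ``discrepancy'' event is thus a macroscopic collar-crossing, not an $\varepsilon$-to-$\delta$ arm event, so your bound $E(\varepsilon,\delta)\lesssim \delta^{-M}(\varepsilon/\delta)^2$ is not the right object. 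Second, and more decisively, the outer approximation $\Omega_\delta^+$ (cells meeting $\Omega$) does \emph{not} Carath\'eodory-converge to $\Omega$ when $\partial\Omega$ has zero-width features: for a slit domain (which has $M(\partial\Omega)=1<2$ and is exactly the case needed downstream for exploration-process complements) every $\Omega_\delta^+$ heals the slit, so $x_\delta^+\not\to x$ and the upper half of your sandwich never closes. Any domain-approximation proof would have to approximate from inside only and replace the monotonicity by a genuine continuity estimate for $C_\varepsilon$ in the Carath\'eodory topology, which is essentially what \cite{pt2} supplies via the interior-analyticity route anyway.
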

\begin{proof}
This, modulo the formula \eqref{eq:cardy}, is the content of \cite{pt2}, Theorem 4.7.  For the particular model at hand, this was established for a restricted class of domains in \cite{cardy}.  The necessary generalization of the work in \cite{cardy} to domains with $M(\partial \Omega) < 2$ will be proved in $\S$\ref{cardy} (see Lemma \ref{cardy_formula'}). 
\end{proof}

Using general estimates in $\S$\ref{explorer_properties}, we establish the following important properties of any weak$^*$--limiting point $\mu^\prime$.  The proofs can be found in $\S$\ref{aizenman} and $\S$\ref{preservation}.

\begin{lemma}[Tightness]\label{tightness}
Let $\mu^\prime$ be any limit point, in the weak$^*$ Hausdorff topology on compact sets, of $\mu_\varepsilon$. Then $\mu^\prime$ gives full measure to L\"{o}wner curves in $\Omega$ from $a$ to $c$.  
\end{lemma}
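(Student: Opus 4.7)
The plan is to attack this in two stages: first, show that any weak$^*$ limit $\mu'$ is supported on continuous (unparameterized) curves in $\Omega$ from $a$ to $c$, and second, upgrade this to support on L\"owner curves by verifying that (L1) and (L2) hold almost surely.

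For the first stage, I would invoke the Aizenman--Burchard tightness criterion: if a family of random curves has the property that the probability of making $k$ disjoint crossings of any annulus of aspect ratio $a$ decays as a power of $a$ with exponent growing in $k$, then the family is tight in the appropriate curve topology (the one associated with Definition \ref{frechet}), and any limit point is supported on H\"older-continuous curves modulo reparameterization. The hypothesis is verified using the ``superuniversal'' bounds from the final bullet of the introduction together with RSW and the BK-type inequality: each additional crossing of an annulus by the Exploration Process forces an additional polychromatic arm at the inner scale, and the bulk five-arm and half-plane three-arm estimates give the required $a^{-2}$ tail uniformly in $\varepsilon$. This suffices to produce curve-valued limits, continuous from $a$ to $c$.

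For (L1), the continuity of the limiting curve $\gamma$, together with the definition of $\Omega_t$ as the connected component of $\Omega \setminus \gamma_{[0,t]}$ containing $c$, gives $\gamma(t) \in \overline{\Omega_{t-\varepsilon}}$ immediately for every $\varepsilon > 0$. Condition (L2) is the substantive point: one must show that for almost every $t$ there is a sequence $\delta_n \to 0$ such that $\gamma(t-\delta_n)$ lies in the \emph{interior} $\Omega_{t-\delta_n-\varepsilon}$, i.e.~the curve genuinely escapes into the slit domain and does not ``trace back'' along its own past. Failure at an interior point would force, at all scales, a configuration of six alternating-color arms around that point; failure at a boundary point would force half-plane three-arm behavior. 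The superuniversal arm bounds forbid both: by a Borel--Cantelli argument over dyadic space and time scales one sees that such points form a null set for $\mu'$, and hence $\mu'$ is supported on L\"owner curves.

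The main obstacle is the almost-sure verification of (L2). Beyond needing the superuniversal bounds in their full generality (bulk six-arm \emph{and} boundary three-arm, applied uniformly in the mesh $\varepsilon$ so that one can pass the estimates through the weak$^*$ limit), the argument must treat points near $\partial \Omega$, which is precisely where the hypothesis $M(\partial \Omega) < 2$ enters: one must sum the half-plane estimates over a cover of $\partial \Omega$ by $\vartheta$-boxes, and the assumption $M(\partial \Omega) < 2$ is exactly what lets this sum vanish as $\vartheta \to 0$. Organizing the dyadic scale decomposition carefully, and interchanging the $\varepsilon \to 0$ limit with these superuniversal estimates, is the real content of $\S$\ref{aizenman}.
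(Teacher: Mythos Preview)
Your treatment of (L1) and (L2) has them essentially reversed, and this is a genuine gap.

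Condition (L1) is \emph{not} a consequence of continuity. It fails exactly when the curve enters a region it has already sealed off from $c$: if $\gamma$ closes a loop at time $s$ and then moves into the enclosed pocket, then for $t$ just beyond $s$ and $\varepsilon$ small enough, $\gamma(t)$ lies in the pocket and hence outside $\overline{\Omega_{t-\varepsilon}}$. Since the curve must eventually escape to reach $c$, and a non--self--crossing curve can exit such a pocket only through the pinch point, this forces a triple visit in the bulk (or, if the pocket is bounded in part by $\partial\Omega$, a double visit near the boundary). It is \emph{this} that the paper rules out via the six--arm interior estimate (Lemmas~\ref{arms} and \ref{notriplevisit}) and the three--arm boundary estimate (Lemma~\ref{nodoublevisit}). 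The arm estimates you assign to (L2) therefore belong to (L1).

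Condition (L2), on the other hand, is handled in the paper not by arm exponents at all but by the no--doubleback estimate (Lemma~\ref{nodoubleback}): failure of (L2) means the curve retraces its own past over a macroscopic stretch, which at the discrete level forces two nearby disjoint monochrome crossings of a long thin rectangle and is killed by RSW circuits with an exponential bound of the form $c_2\delta^{-2}e^{-c_3\delta/\eta}$. Your proposal omits this mechanism entirely.

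Two smaller points. First, the paper does not invoke Aizenman--Burchard on the full curve directly; it works on the truncations $\gamma_\varepsilon^k$ in $\Omega_k$ away from the prime ends (Lemma~\ref{curves}) and then controls the approach to $a$ and $c$ separately via the jumps argument (Lemma~\ref{jumps}), since near a prime end both boundary colours are present and hypothesis H1 is not immediate. Second, the hypothesis $M(\partial\Omega)<2$ is \emph{not} used in the paper's proof of this lemma: the boundary double--visit estimate in Lemma~\ref{nodoublevisit} proceeds by conformal invariance of the three--arm event together with distortion bounds on the uniformising map, not by summing a half--plane estimate over a boundary cover.
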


Furthermore, we have

\begin{lemma}[Admissibility]\label{admissibility}
The limit point $\mu^\prime$ gives full measure to curves with upper Minkowski dimension less than $2 - \psi^\prime$ for some $\psi^\prime > 0$.
\end{lemma}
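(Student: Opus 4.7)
The plan is to establish, uniformly in the discretization scale $\varepsilon$, a bound of the form $\mathbb E_{\mu_\varepsilon}[N_\vartheta(\gamma)] \leq C \vartheta^{-(2-\psi)}$, where $N_\vartheta(\gamma)$ is the number of $\vartheta$-boxes met by the exploration curve and $\psi > 0$ is independent of $\varepsilon$. The bound is then transferred to every weak$^*$ subsequential limit $\mu'$ by a semicontinuity argument, after which Markov and Borel--Cantelli produce the almost sure Minkowski bound.

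The key estimate is a pointwise arm reduction. For $x \in \Omega_\varepsilon$ at distance $r$ from $\partial\Omega$, the event $\{\gamma \cap B(x, \vartheta) \neq \emptyset\}$ forces two arms of opposite colors across the annulus $\{\vartheta < |y - x| < r/2\}$ around $x$ --- this is because $\gamma$ locally separates blue from yellow and each color must be connected to its corresponding boundary arc of $\partial\Omega$. For $x$ in a thin collar of width comparable to $\vartheta$ around $\partial\Omega$, a third arm of the appropriate color is forced by the boundary coloring, and the superuniversal half-plane $3$-arm bound gives probability at most $C(\vartheta/r)^2$. For $x$ at macroscopic distance from $\partial\Omega$, the polychromatic $2$-arm probability is controlled, via the stated RSW theory combined with a BK-type disjoint-occurrence inequality, by $C(\vartheta/r)^\beta$ for some $\beta > 0$.

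Summing the pointwise probabilities on a $\vartheta$-grid, partitioned into a boundary collar and an interior piece, the interior contributes $C\vartheta^{-(2-\beta)}$ while the collar contribution is at most the number of $\vartheta$-collar boxes --- bounded using $M(\partial\Omega) < 2$ --- times the $3$-arm factor; with a suitable choice of collar width these combine to $C\vartheta^{-(2-\psi)}$ for some $\psi > 0$ depending on $\beta$, $M(\partial\Omega)$, and the half-plane $3$-arm exponent. To pass to the limit I would replace $N_\vartheta(\gamma)$ by the comparable quantity $\vartheta^{-2}|\gamma^{(\vartheta)}|$, where $\gamma^{(\vartheta)}$ denotes the closed $\vartheta$-fattening of $\gamma$; the latter is upper semicontinuous in the Hausdorff metric on compact sets, so the uniform bound descends to $\mu'$. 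Markov along $\vartheta_n = 2^{-n}$ together with Borel--Cantelli then give, $\mu'$-almost surely, $N_{\vartheta_n}(\gamma) \leq \vartheta_n^{-(2-\psi/2)}$ for all large $n$, and monotonicity of $N_\vartheta$ in $\vartheta$ interpolates this to arbitrary scales, producing $M(\gamma) \leq 2 - \psi'$ with $\psi' = \psi/2$.

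The main obstacle is the polychromatic $2$-arm power law in the general axiomatic setting: in classical triangular site percolation this is Kesten's scaling relation, but here it must be extracted from RSW and BK-type disjoint-occurrence bounds alone, with no need for an explicit exponent beyond positivity. A subsidiary technical point is the matching between the collar width, the dimension input $M(\partial\Omega) < 2$, and the half-plane $3$-arm exponent, which must be arranged so that the aggregate exponent is strictly less than $2$.
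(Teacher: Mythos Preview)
Your overall architecture---uniform bound on $\mathbb E_{\mu_\varepsilon}[N_\vartheta(\gamma)]$, passage to the limit via semicontinuity, then Markov along dyadic scales and Borel--Cantelli---matches the paper's proof exactly. The differences are in the mechanisms.

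For the interior pointwise bound, the paper does not go through arm events at all. Instead it observes directly that if an annulus $A_{s,r}(z)\subset\Omega$ contains \emph{both} a blue circuit and a yellow circuit, then the explorer cannot reach $B_r(z)$ (its blue side cannot cross the yellow ring and its yellow side cannot cross the blue ring). RSW gives a uniform lower bound $\eta>0$ for this ring-pair event in any fixed-ratio annulus, and independence across $\sim\log(D/\delta)$ disjoint annular scales yields $\mathbb P(\text{explorer hits }g_\delta(z))\le C(\delta/D)^\psi$. This sidesteps your ``main obstacle'' entirely: no BK-type inequality is needed, and no polychromatic two-arm exponent has to be extracted. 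Your two-arm argument would also work, but the circuit argument is the more economical route.

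Near the boundary there is a genuine issue with your proposal. You invoke the superuniversal half-plane three-arm bound for $x$ in the $\vartheta$-collar of $\partial\Omega$, but that estimate is proved for flat (or at best locally smooth) boundaries; for a domain with $M(\partial\Omega)$ merely less than $2$, the boundary may be fractal and there is no half-plane geometry to appeal to at the discrete level. The paper avoids this completely: it uses only the trivial bound $\mathbb P\le 1$ for boxes near $\partial\Omega$, and controls their \emph{number} via the Minkowski hypothesis. Concretely, writing $n_k$ for the number of $\delta$-boxes at distance $k\delta$ from $\partial\Omega$ and $N_l=\sum_{k\le l}n_k$, the dimension assumption gives $N_l\le C_{\theta'} n^{2-\theta'}l^{\theta'}$ for any $\theta'<2-M$; a summation by parts on $\sum_k n_k k^{-\psi}$ then yields $\mathbb E(\mathcal N_\delta)\le C\,\delta^{-(2-\psi')}$ with $\psi'<\min(\psi,2-M)$. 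Your collar/interior dichotomy can be repaired in the same way---just drop the three-arm factor and count collar boxes---but as written, that step is not justified.
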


We note that in Lemma \ref{tightness} (and Lemma \ref{admissibility}), a stronger notion of convergence is available.  Indeed, for domains which are regular enough, the results of \cite{ab} provide weak$^*$ convergence to $\mu^{\prime}$ in the distance provided by the sup--norm:
\[ \mbox{dist}(\gamma_1, \gamma_2) = \inf_{\varphi_1, \varphi_2} \sup_t |\gamma_1(\varphi_1(t)) - \gamma_2(\varphi_2(t))|,\]
where the infimum is over all possible parametrizations.  For our purposes -- where prime ends are a concern -- we will consider a weighted sum of the distances within various regions between the curves.  We will denote the appropriate distance by $\mathbf{Dist}$; see Definition \ref{frechet}.  We can easily extend the result of \cite{ab} to the following:

\begin{lemma}[\textbf{Dist} Topology]\label{Dist_conv}
The measure $\mu^\prime$ is a limit point in the weak$^*$ \textbf{\emph{Dist}} topology on curves of $\mu_\varepsilon$. 
\end{lemma}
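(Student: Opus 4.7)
The plan is to upgrade the Hausdorff weak$^*$ convergence supplied by Lemma \ref{tightness} to convergence in the (stronger) weak$^*$ $\mathbf{Dist}$ topology. Since $\mathbf{Dist}$ strengthens the Hausdorff metric (so limits are unique), it suffices to show tightness in $\mathbf{Dist}$ along the subsequence $\{\mu_{\varepsilon_n}\}$ already converging to $\mu^\prime$ in the Hausdorff sense; any weak$^*$ $\mathbf{Dist}$ limit point of this subsequence must then coincide with $\mu^\prime$.

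First, I would isolate the action away from the prime ends. For $r>0$, let $\Omega^{(r)} \subset \Omega$ denote the subdomain obtained by excising small (conformal) neighborhoods of $a$ and $c$. Inside $\Omega^{(r)}$ the Exploration Process is an ordinary, two--sided interface for which the superuniversal multi--arm bounds listed as the fifth bullet in the introduction apply. These bounds are precisely what feed into the Aizenman--Burchard criterion \cite{ab}, yielding Hölder regularity (uniform in $\varepsilon$) for the reparametrized explorer curves restricted to $\Omega^{(r)}$, and hence tightness in the sup--norm Fréchet metric on such curves. Combined with the already--known Hausdorff convergence, this upgrades convergence on each $\Omega^{(r)}$ to sup--norm weak$^*$ convergence.

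Next, I would exploit the structure of $\mathbf{Dist}$ (Definition \ref{frechet}), which by design is a weighted sum of localized Fréchet distances, with weights decaying as the region approaches $a$ or $c$. Taking an exhaustion $r_k \downarrow 0$ and a diagonal subsequence, one obtains simultaneous sup--norm convergence on every $\Omega^{(r_k)}$. The tail of the weighted sum -- contributions from the thin shells around $a$ and $c$ -- is controlled uniformly in $\varepsilon$: the half--plane three--arm bound implies that the probability the interface makes several non--trivial excursions into a shell of radius $r$ around a prime end decays polynomially in $r$, so the sup--norm discrepancy contributed by any such shell is small in expectation. Pairing this with the summable weights in the definition of $\mathbf{Dist}$ gives $\mathbf{Dist}$--tightness, and hence weak$^*$ convergence in $\mathbf{Dist}$.

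The main obstacle is the behavior at the prime ends. Away from $a$ and $c$ the Aizenman--Burchard machinery is essentially off--the--shelf, but near the prime ends one must argue that no reparametrization of the discrete curves forces the explorer to spend ``too much'' parameter time in an arbitrarily small neighborhood, as this would prevent sup--norm control even after restriction. This is exactly what the half--plane three--arm estimate rules out, once it is paired with the weight structure of $\mathbf{Dist}$ that downweights the contributions of these neighborhoods. With that piece in place the rest of the argument is a routine diagonalization.
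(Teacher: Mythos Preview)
Your approach is essentially the paper's: restrict to subdomains $\Omega_k$ away from the prime ends, invoke Aizenman--Burchard \cite{ab} there (this is the content of Lemma~\ref{curves}) to get sup--norm weak$^*$ convergence of each $\mu_{k,\varepsilon}$ to $\mu_k'$, and then assemble using the weighted--sum structure of $\mathbf{Dist}$.

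The one place you work harder than necessary is the tail. You propose controlling the contribution of the shells near $a$ and $c$ probabilistically, via half--plane three--arm estimates bounding the chance of excursions. The paper does not need this: by Definition~\ref{frechet}, $\mathbf{Dist}(\gamma_r,\gamma_g)=\sum_\ell \lambda_\ell\, d_\ell(\gamma_r,\gamma_g)$ with $\sum_\ell \lambda_\ell<\infty$, and each $d_\ell$ is trivially bounded by $\operatorname{diam}(\Omega)$. Hence $\sum_{\ell>L}\lambda_\ell d_\ell$ is small \emph{deterministically}, uniformly over all pairs of curves. So once $d_\ell(\gamma_r,\gamma_g)$ is small for $\ell\le L$ (which Aizenman--Burchard gives on each $\Omega_\ell$), $\mathbf{Dist}$--closeness follows immediately; no arm estimates or expectations are required for this step. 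The paper's entire proof is two sentences for this reason.
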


Finally, we will use the following continuity result for crossing probabilities, whose proof can be found in \cite{pt2} (stated as Corollary 5.10):

\begin{lemma}\label{pt_eq_cont}
Consider the models described in \cite{cardy} (which includes the triangular site problem studied in \cite{stas_perc}) on a bounded domain $\Omega$ with boundary Minkowski dimension less than two (if necessary) and two marked boundary points $a$ and $c$.  Consider $\mathscr C_{a, c, \Delta}$, the set of L\"oewner curves which begin at the point $a$ and are aiming towards the point $c$ but have not yet entered the $\Delta$ neighborhood of $c$ for some $\Delta > 0$.  Suppose we have $\gamma^\e \rightarrow \gamma$ in the \textbf{\emph{Dist}} norm, then 
\[C_\e(\Omega_\e \setminus \gamma_\e([0, t]), \gamma_\e(t), b_\e, c_\e, d_\e) \rightarrow C_0(\Omega \setminus \gamma([0, t]), \gamma(t), b, c, d)\]
pointwise equicontinuously in the sense that
\begin{equation}\label{pt_equi_cont}\begin{array}{c} \forall \sigma > 0, ~~\forall \gamma \in \mathscr C_{a, c, \Omega}, ~~\exists \delta(\gamma) > 0, ~~\exists \mathcal \e_\gamma,\\ 
\\
\mbox{ such that }\\ 
\\
\forall \gamma^\prime \in \mathcal B_{\delta(\gamma)}(\gamma),~~ \forall \e \leq \mathcal \e_\gamma,\\ \\  
|C_\e((\Omega \setminus \gamma)_\e([0, t])), (\gamma(t))_\e, b_\e, c_\e, d_\e) - C_\e((\Omega \setminus \gamma^\prime)_\e([0, t])), (\gamma^\prime(t))_\e, b_\e, c_\e, d_\e)| < \sigma.\end{array}\end{equation}
Here $B_\delta(\gamma)$ denotes the \textbf{\emph{Dist}} neighborhood of $\gamma$.  
\end{lemma}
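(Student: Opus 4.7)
The plan is to derive this equicontinuity by applying Cardy's Formula (Lemma~\ref{cardy_formula}) to both slit domains and controlling the gap via conformal invariance. Fix $\gamma \in \mathscr{C}_{a,c,\Delta}$ and $\sigma > 0$, and write $x(\gamma)$ for the conformal cross-ratio of the admissible rectangle $(\Omega \setminus \gamma([0,t]); \gamma(t), b, c, d)$, with $x(\gamma')$ defined analogously for $\gamma' \in \mathcal{B}_{\delta}(\gamma)$. The triangle inequality yields
\[
|C_\e(\gamma) - C_\e(\gamma')| \;\leq\; |C_\e(\gamma) - F(x(\gamma))| + |F(x(\gamma)) - F(x(\gamma'))| + |F(x(\gamma')) - C_\e(\gamma')|,
\]
so it suffices to bound each of the three terms by $\sigma/3$.

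The middle term is pure conformal mapping theory. Since $\gamma'$ is \textbf{\emph{Dist}}-close to $\gamma$, the slit domains $\Omega \setminus \gamma'([0,t])$ converge to $\Omega \setminus \gamma([0,t])$ as Carath\'eodory kernels with respect to $c$, and the marked prime ends $\gamma'(t),b,c,d$ converge to $\gamma(t),b,c,d$. The Carath\'eodory kernel theorem then forces the associated uniformizing maps to converge locally uniformly, so $x(\cdot)$ is continuous at $\gamma$. The hypothesis $\gamma \in \mathscr{C}_{a,c,\Delta}$ keeps $x(\gamma)$ uniformly inside $(0,1)$, where $F$ is continuous, so shrinking $\delta(\gamma)$ produces $|F(x(\gamma)) - F(x(\gamma'))| < \sigma/3$ throughout $\mathcal{B}_{\delta(\gamma)}(\gamma)$.

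The main obstacle lies in the first and third terms: a single threshold $\e_\gamma$ must work for \emph{every} $\gamma' \in \mathcal{B}_{\delta(\gamma)}(\gamma)$, which amounts to promoting Cardy's Formula from pointwise to uniform convergence over this Dist-neighborhood. The natural approach is a compactness argument: the family of slit domains $\{\Omega \setminus \gamma'([0,t]) : \gamma' \in \overline{\mathcal{B}_{\delta(\gamma)}(\gamma)}\}$ is precompact in the Carath\'eodory topology, since the admitted curves form an equicontinuous bounded collection staying uniformly away from $c$. Were the uniformity to fail, one could extract $\gamma'_n \to \tilde{\gamma}$ and $\e_n \to 0$ with $|C_{\e_n}(\gamma'_n) - F(x(\gamma'_n))| \geq \sigma/3$; but Cardy's Formula applied pointwise to $\tilde{\gamma}$ gives $C_{\e_n}(\tilde{\gamma}) \to F(x(\tilde{\gamma}))$, and an RSW/FKG comparison of crossing probabilities on the two nearly-identical slit domains $\Omega \setminus \gamma'_n([0,t])$ and $\Omega \setminus \tilde{\gamma}([0,t])$---whose symmetric difference shrinks to zero---would yield the desired contradiction. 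Executing this comparison cleanly, using only the coarse Dist-closeness of the curves together with the percolation hypotheses listed in the introduction (RSW, FKG, and the bounded arm exponent estimates), is where the bulk of the technical effort must be concentrated.
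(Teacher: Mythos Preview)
The paper does not prove this lemma here; it is imported wholesale from the companion work \cite{pt2} (there stated as Corollary~5.10), so there is no in--paper argument to compare against line by line. What the present paper does indicate, in the remark following Equation~\eqref{eq1}, is that the mechanism in \cite{pt2} is a \emph{constructive} one: Cardy's Formula is established with enough uniformity (``robust convergence \dots via the sup--approximations'') that the equicontinuity falls out directly, rather than being extracted after the fact by soft compactness.

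Your triangle--inequality decomposition and the treatment of the middle term via Carath\'eodory kernel convergence are perfectly sound, and you have correctly located the entire difficulty in the two outer terms. The problem is that your compactness sketch for those terms, as written, comes close to assuming what is to be proved. In the contradiction argument you need
\[
|C_{\e_n}(\gamma'_n) - C_{\e_n}(\tilde\gamma)| \longrightarrow 0,
\]
i.e.\ a comparison, at a \emph{fixed} small lattice scale $\e_n$, of discrete crossing probabilities in two slit domains whose slits are $\mathbf{Dist}$--close. But this is precisely an instance of the equicontinuity statement \eqref{pt_equi_cont} you are trying to establish (with $\tilde\gamma$ playing the r\^ole of the center $\gamma$). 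Your appeal to ``an RSW/FKG comparison \dots whose symmetric difference shrinks to zero'' is therefore not a reduction but a restatement: small symmetric difference of slits does not, by any soft argument, yield closeness of crossing probabilities---the boundary perturbation can in principle create or destroy crossings, and quantifying this is exactly the substance of the lemma.

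There is also a secondary issue: to invoke Lemma~\ref{cardy_formula} pointwise at the limit curve $\tilde\gamma$ you need $M(\partial(\Omega\setminus\tilde\gamma))<2$. Lemma~\ref{admissibility} supplies this for $\mu'$--typical curves, but an arbitrary $\mathbf{Dist}$--limit of curves in $\overline{\mathcal B_{\delta}(\gamma)}$ need not be $\mu'$--typical, so admissibility of $\tilde\gamma$ is not automatic.

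In short, your outline identifies the right skeleton, but the step you flag as ``where the bulk of the technical effort must be concentrated'' is not a detail to be filled in later---it \emph{is} the lemma. The route taken in \cite{pt2} avoids this circularity by proving the convergence $C_\e\to C_0$ with built--in uniformity over suitable families of (slit) domains from the outset, via explicit control of the discretization and the boundary behavior of the Cardy--Carleson functions, rather than by a post--hoc compactness extraction.
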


\subsection{Proof of the Main Theorem}\label{smir_prog}

Let us show how to derive our Main Theorem from the preceding lemmas. We closely follow the strategic initiative outlined in the expositions of \cite{stas} (for a slightly different and more probabilistic perspective on the subject, also see the exposition in \cite{werner_perc}); moreover, the ``expansion at infinity'' technique we will use here first appeared in \cite{lsw} in the proof of the convergence of the loop--erased random walk to SLE$_2$.

Let us fix $\Omega$ with $M(\partial\Omega)<2$ and two boundary prime ends $a$ and $c$. We start with an informal list of the key steps.

\begin{itemize}
\item[I.] Extract some limiting measure $\mu^\prime$. 
\item[II.]Show that any limiting measure is supported on L\"oewner curves.
\item[III.] Establish the discrete domain (crossing) Markov property.
\item[IV.] L\"oewner parameterize all curves under consideration.
\item[V.] Obtain the limiting martingale.  
\item[VI.] Show that $\kappa = 6$.
\end{itemize}

$\diamond$ I.\hspace{0.5mm}]  Let us note that the collection of measures $(\mu_{\varepsilon})$ defined by the Exploration Processes on $\varepsilon$-lattice is weakly precompact as a set of regular measures defined on the space of compact subsets of $\overline \Omega$ with the Hausdorff metric. Thus to prove the Main Theorem it is enough to show that any weak limit point $\mu^\prime$, of $\mu_{\varepsilon}$, has the law of SLE${}_6$ from $a$ to $c$ in $\Omega$.

\vspace{0.5cm}

$\diamond$ II.\hspace{0.5mm}]  By Lemma \ref{tightness}, $\mu'$ gives full measure to L\"owner curves.  Let $w_t$ be the random driving function of the curve.  To finish the proof, we need to show that $w_t$ has the law of $B_{6t}$, where $B_t$ is the standard one dimensional Brownian Motion started at $0$. 

\vspace{0.5cm}

$\diamond$ III.\hspace{0.5mm}] Let us add two boundary prime ends  $b$ and $d$ so that $(a,b,c,d)$ are listed counter-clockwise.  Given a discrete Exploration Process, we may parametrize it in any convenient fashion and denote the resulting curve by $\mathbb X_t^\varepsilon$.  Let us assume, temporarily, that 
$\mathbb X_t^\varepsilon$ does not ``explore'' the 
boundary, $\partial \Omega_\varepsilon$.
Now, by convention/definition, the faces on the right side of the Exploration Process are blue, and the faces on the left side are yellow. In general, a blue crossing from $[a,b]$ to $[c,d]$ can either touch the blue portion of the exploration path $\mathbb X^{\varepsilon}_{[0, t]}$, or avoid it.  
%In addition, there is the possibility that the Exploration Process itself has produced (or precluded) a blue crossing e.g., by having already visited the $[c,d]$ boundary.  Neglecting temporarily this possibility, i
It is thus a fact that the blue crossing in 
$\Omega_\e$ of the described type implies a blue crossing between $[\mathbb X^{\varepsilon}_t,b]$ to $[c,d]$ in $\Omega_\e\setminus\mathbb X^{\varepsilon}_{[0,t]}$.
And vice versa:  It is clear (at least modulo cases where $\mathbb X_{[0, t]}^\e$ touches $\partial \Omega_\e$) that any blue crossing between $[\mathbb X^{\varepsilon}_t,b]$ to $[c,d]$ in $\Omega_\e\setminus\mathbb X^{\varepsilon}_{[0,t]}$ produces a blue crossing from $[a,b]$ to $[c,d]$ in $\Omega_\e$.

Under these conditions, we can write the following \emph{Markov identity} for the crossing probabilities
\begin{equation}
\label{eq:tot}
C_{\varepsilon}\left(\Omega_\e\setminus\mathbb X^{\varepsilon}_{[0,t]},\mathbb X^{\varepsilon}_t,b,c,d\right)=
C_{\varepsilon}\left(\Omega_\e,a,b,c,d\ |\ \mathbb X^{\varepsilon}_{[0,t]}\right).
\end{equation}
%where, if necessary, it will be understood that the probability $C_{\varepsilon}$ is ``already'' zero or one if the crossing event, or lack thereof, is determined by 
%$\mathbb X^{\varepsilon}_{[0,t]}$.  
and further,
\begin{equation}\label{eq:tot1}
\mathbb E_{\mu_\varepsilon}\left[C_{\varepsilon}\left(\Omega_\e\setminus\mathbb X^{\varepsilon}_{[0,t]},\mathbb X^{\varepsilon}_t,b,c,d\right)\right]=
C_{\varepsilon}\left(\Omega_\e,a,b,c,d \right).
\end{equation}
Now let $0 < s < t$, then given some $\mathbb X_{[0, s]}^\e$, the same reasoning as above applied to $\Omega_\e \setminus \mathbb X_{[0, s]}^\e$ and the conditional measure $\mu_\e\left(\cdot \mid \mathbb X_{[0, t]}^\e\right)$ gives the \emph{martingale equation}  
\begin{equation}\label{eq:tot2}
\mathbb E_{\mu_\varepsilon} \left[C_{\varepsilon}\left(\Omega_\e\setminus\mathbb X^{\varepsilon}_{[0,t]},\mathbb X^{\varepsilon}_t,b,c,d\right) \mid \mathbb X_{[0,s]}^\varepsilon \right]=
C_{\varepsilon}\left(\Omega_\e\setminus\mathbb X^{\varepsilon}_{[0,s]},\mathbb X^{\varepsilon}_s,b,c,d\right).
\end{equation}
We will later establish a continuum version of this equation (see Equation \eqref{martingale}).

\begin{remark}\label{fmartingale}
Here, let us focus briefly on circumstances where $\mathbb X_{[0, t]}^\e$ has touched $\partial \Omega_\varepsilon$ -- which turns out to be highly likely -- or has even ``already determined'' the crossing game in $\Omega_\e$ -- which must happen eventually.
In case of the former but not the latter, the above equations require no further discussion provided we interpret 
$\Omega_\e\setminus\mathbb X^{\varepsilon}_{[0,t]}$ as 
the connected component of $c$ in
$\Omega_\e\setminus\mathbb X^{\varepsilon}_{[0,t]}$ $=:$ \text{Comp}$_{\Omega\setminus \mathbb X_{[0, t]}^\e}(c)$.  As for the latter, it
is not difficult to see that this occurs precisely when either $b$ or $d$ fail to lie in the boundary of \text{Comp}$_{\Omega_\e\setminus \mathbb X_{[0, t]}^\e}(c)$.  As such, the notation $C_{\varepsilon}\left(\Omega_\e\setminus\mathbb X^{\varepsilon}_{[0,t]},\mathbb X^{\varepsilon}_t,b,c,d\right)$ can no longer be literally read as ``the crossing probability in said domain with these marked boundary points'' since as least one of the relevant points is not actually in the boundary of the relevant domain.  Notwithstanding, we can and will use the notation $C_{\varepsilon}\left(\Omega_\e\setminus\mathbb X^{\varepsilon}_{[0,t]},\mathbb X^{\varepsilon}_t,b,c,d\right)$
even when $b$ or $d$ is not in \text{Comp}$_{\Omega_\e\setminus \mathbb X_{[0, t]}^\e}(c)$ with the understanding that in this case the relevant crossing probability is given by
\[ \begin{cases}1;~~~\mbox{if $\mathbb X_t^\e$ has hit $[c, d]$ before $[b, c]$}\\
0 ;~~~\mbox{if $\mathbb X_t^\e$ has hit $[b, c]$ before $[c, d]$}.
\end{cases}
\]
We will continue with this convention in the $\e \rightarrow 0$ case.  
\end{remark}

%\begin{remark}
%Aside from certain technicalities, the preceding is the celebrated \emph{domain Markov property} which is characteristic of spin systems and percolation problems.  Simply put, the content of the above is that the further evolution of the explorer process is exactly the same as the initial evolution of the explorer process is the in the domain that has been ``slit'' by the initial portion of the process.  
%\end{remark}

It is noted that for $\varepsilon > 0$, we are dealing with a discrete system and the above holds regardless of the parameterization scheme (provided that no overcounting is engendered); however, some care will be needed as we take the continuum limit.  In particular, the above equation with all $\e$ removed does not really make sense unless all curves $\mathbb X_{[0, t]}$ are endowed with a ``common'' parameterization.  The natural choice is the L\"oewner parameterization, but this requires some argument since the relevant topology for convergence is in the sup--norm (or \textbf{Dist} norm).

\vspace{0.5cm}

$\diamond$ IV.\hspace{0.5mm}] Now we show that it is possible to re--parameterize by the L\"oewner parameterization.  What will suffice for us is a statement to the effect that every ``L\"owener parameterization neighborhood'' in the support of $\mu^\prime$ contains a \textbf{Dist}--neighborhood.  (By the former it is meant that if $\gamma$ and $\gamma^\prime$ are endowed with the L\"owener parameterization, then the distance between them is taken to be $d_\mathcal L(\gamma, \gamma^\prime) = \sup_t |\gamma(t) - \gamma^\prime(t)|$; thus the converse of the above claim is obvious.)  We remark that the statement is essentially deterministic; we put in the proviso that we are in the support of $\mu^\prime$ just to ensure that the curves can be L\"oewner parameterized in the first place. 

%Now let $t > 0$ and consider the representative curves (endowed with their L\"owner parametrizations) up to time $t$.  Since the capacity at $c$ of these curves is $2t$, for $\Delta$ sufficiently small, these curves -- as well as the corresponding portions of the $\mu^\prime$--typical curves that they represent, stay outside the $\Delta$ neighborhood of $c$.  (Clearly, the capacity of any object starting at $a$ entering the $\Delta$ neighborhood of $c$ will diverge as $\Delta$ tends to zero.)  

%(Since otherwise the curves would have to pass through a rectangle of large conformal modulus.)  
%Thus we may assume (modulo additional small losses) that the support of $\mu^\prime$ restricted to these initial portions of the curves is contained in $\Xi_{\theta, \Delta}$.  

Hereafter we shall restrict attention to the portion of the curves which have not yet entered the $\Delta$ neighborhood of $c$.  Our first claim is that (for $\eta \ll \Delta$), in fact, these portions of all curves in the same $\eta$--{\bf{Dist}} neighborhood are in fact close in the \emph{L\"owner} parameterization.  Indeed, 
\begin{lemma}\label{sup_cap}
Consider curves $\gamma$ emanating from $a$ which stay outside of the $\Delta$ neighborhood of $c$.  If $\textbf{\emph{Dist}}(\gamma_1, \gamma_2) < \eta$, then 
\[ |\mbox{\emph{Cap}}_{\mathbb H}(\gamma_1) - \mbox{\emph{Cap}}_{\mathbb H}(\gamma_2)| < C(\Omega, \Delta) \eta^\alpha\]
for some $\alpha > 0$ and some $\Omega$ and $\Delta$ dependent constant $C(\Omega, \Delta)$.  Here \emph{Cap}$_\mathbb H(\cdot)$ denotes the half plane capacity.
\end{lemma}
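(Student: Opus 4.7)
The plan is to pass to the half-plane $\mathbb H$ via the conformal uniformization $g_0\colon\Omega\to\mathbb H$ with $g_0(a)=0$ and $g_0(c)=\infty$, under which $\mathrm{Cap}_{\mathbb H}(\gamma_i)$ is by definition the half-plane capacity of $\tilde\gamma_i:=g_0(\gamma_i)\subset\overline{\mathbb H}$. The estimate then splits into two well-separated substeps: (a) transferring the $\mathbf{Dist}$-closeness of $\gamma_1,\gamma_2$ in $\Omega$ into a quantitative closeness of $\tilde\gamma_1,\tilde\gamma_2$ in $\overline{\mathbb H}$, losing at most a H\"older power; and (b) establishing quantitative stability of $\mathrm{hcap}$ under small Hausdorff/Fr\'echet perturbations of the hull in $\mathbb H$.

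For (a), observe that since $\gamma_i\subset\overline\Omega\setminus B_\Delta(c)$, the images $\tilde\gamma_i$ lie in a common compact subset $R=R(\Omega,\Delta)\subset\overline{\mathbb H}$ bounded away from $\infty$. On the preimage $g_0^{-1}(R)$, Koebe distortion controls $g_0$ at interior points, while Beurling's projection theorem provides the crucial power-law harmonic-measure estimate needed at boundary points; together these yield a H\"older modulus $|g_0(z_1)-g_0(z_2)|\le C_1(\Omega,\Delta)\,|z_1-z_2|^{\alpha_1}$ for some $\alpha_1=\alpha_1(\Omega,\Delta)>0$. This gives $\mathbf{Dist}(\tilde\gamma_1,\tilde\gamma_2)\le C_1\eta^{\alpha_1}$.

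For (b), I would use the Brownian representation
\[
\mathrm{hcap}(K)=\lim_{y\to\infty} y\cdot\mathbb E^{iy}\bigl[\mathrm{Im}(B_{\tau_K})\,\mathbf{1}_{\tau_K<\tau_{\mathbb R}}\bigr],
\]
valid for bounded hulls $K\subset\mathbb H$, and couple two Brownian motions started from $iy$. On the event that one motion hits $K_1$ at some point $z$, a short Beurling projection argument forces the other motion to hit $K_2$ within distance $O(\delta^{\alpha_2})$ of $z$ with overwhelming probability, so both the hitting probability and the imaginary part of the hit location differ by $O(\delta^{\alpha_2})$. Passing $y\to\infty$ preserves this estimate and yields $|\mathrm{hcap}(K_1)-\mathrm{hcap}(K_2)|\le C_2(R)\delta^{\alpha_2}$. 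Composing with the estimate from (a) then proves the lemma with $\alpha:=\alpha_1\alpha_2$.

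The main obstacle is step (a): under the fairly weak assumption $M(\partial\Omega)<2$, the boundary of $\Omega$ is permitted to be rather rough, and in general the Riemann map need not be H\"older continuous up to $\partial\Omega$. What saves us is that the H\"older estimate is only needed on compact subsets bounded away from the prime end $c$ (by a distance at least $\Delta$), where Beurling's projection theorem still delivers a positive---though possibly small and $(\Omega,\Delta)$-dependent---H\"older exponent. Step (b), by contrast, is a comparatively soft application of Brownian hitting stability in a fixed bounded subregion of $\mathbb H$ and presents no essential difficulty.
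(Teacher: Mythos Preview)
Your step (a) contains a genuine gap. Beurling's projection theorem gives H\"older-$\tfrac12$ continuity of the map $g_0^{-1}\colon\mathbb H\to\Omega$, not of $g_0\colon\Omega\to\mathbb H$. In general $g_0$ is \emph{not} H\"older up to $\partial\Omega$: for a slit domain (and slit domains are precisely the domains of interest here, since the lemma will be applied to $\Omega\setminus\gamma[0,s]$) two points on opposite sides of the slit can be arbitrarily close in Euclidean distance while their images under $g_0$ stay a fixed distance apart. The assumption $M(\partial\Omega)<2$ does nothing to prevent this. Hence the inequality $\mathbf{Dist}(\tilde\gamma_1,\tilde\gamma_2)\le C_1\eta^{\alpha_1}$ simply fails, and with it your composition $\alpha=\alpha_1\alpha_2$.

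The paper's argument gets around exactly this obstruction, and it is worth seeing how. One introduces an auxiliary scale $\delta$ and splits each curve as $\gamma_i=\hat\gamma_i\cup\bar\gamma_i$, where $\bar\gamma_i=\gamma_i\cap\mathcal N_\delta(\partial\Omega)$ and $\hat\gamma_i$ is the remainder. On the interior piece one has the Koebe bound $|\varphi'|\lesssim\delta^{-1/2}$, so $d_H(\varphi(\hat\gamma_1),\varphi(\hat\gamma_2))\lesssim\eta/\sqrt\delta$, and then the standard Beurling capacity estimate gives $|\mathrm{Cap}_{\mathbb H}(\hat\gamma_1)-\mathrm{Cap}_{\mathbb H}(\hat\gamma_2)|\lesssim\sqrt{\eta}/\delta^{1/4}$. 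For the boundary pieces one does \emph{not} try to compare $\varphi(\bar\gamma_1)$ with $\varphi(\bar\gamma_2)$; instead one uses Beurling in the correct direction---applied to $\varphi^{-1}$---to conclude $\varphi(\mathcal N_\delta(\partial\Omega))\subset\mathcal N_{C'\sqrt\delta}(\partial\mathbb H)$, so each $\varphi(\bar\gamma_i)$ lies in a strip of height $\sim\sqrt\delta$ and hence has capacity $\lesssim D\sqrt\delta$ individually. Subadditivity of capacity then gives
\[
|\mathrm{Cap}_{\mathbb H}(\gamma_1)-\mathrm{Cap}_{\mathbb H}(\gamma_2)|\;\lesssim\;\sqrt\delta+\frac{\sqrt\eta}{\delta^{1/4}},
\]
and optimizing over $\delta$ yields a power of $\eta$. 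Your step (b) is fine in spirit (and in fact the paper just quotes the inequality $|\mathrm{Cap}_{\mathbb H}(A_1)-\mathrm{Cap}_{\mathbb H}(A_2)|\le C\sqrt\sigma\cdot\mathrm{diam}(\mathcal N_\sigma(A_1))^{3/2}$ directly), but it cannot be reached because (a) breaks down.
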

\begin{proof}  On $\mathbb H$, if two (compact) sets $A_1$ and $A_2$ and $\sigma$ close (even) in the Hausdorff metric, then by 
for example the Beurling estimates (see e.g., Corollary 3.80 in \cite{lawler}) 
\begin{equation}\label{beurling} |\mbox{Cap}_\mathbb H (A_1) - \mbox{Cap}_\mathbb H (A_2)| \leq C \sqrt{\sigma}\cdot  \mbox{diam}(\mathcal N_\sigma(A_1))^{3/2},\end{equation}
where $\mathcal N_\sigma(A_1)$ denotes the Hausdorff--$\sigma$ neighborhood of $A_1$ and $C$ is some constant (the estimate is equally good if we replace $\mathcal N_\sigma(A_1)$ by $\mathcal N_\sigma(A_2)$).  In our case, we are only assuming $\sigma$--closeness in the original domain $\Omega$ and therefore one could \emph{a priori} be concerned about distortions near the boundary.  However, this can be rectified with the aid of some distortion theorems.  Let us decompose $\Omega = \mathcal N_\delta(\partial \Omega) \cup [\Omega \setminus \mathcal N_\delta(\partial \Omega)]$ 
and similarly given two curves $\gamma_1$ and $\gamma_2$, we will write e.g., 
$\gamma_1 = \hat \gamma_1 \cup \bar{\gamma}_1$, where $\hat{\gamma}_1 = \gamma_1 \cap [\Omega \setminus \mathcal N_\delta (\partial \Omega)]$ and $\bar{\gamma}_1 = \gamma_1 \cap \mathcal N_\delta(\partial \Omega)$.  

First by the Distortion Theorems (for a more detailed argument along these lines, see the proof of Lemma \ref{nodoublevisit}) we know that if $\varphi: \Omega \rightarrow \mathbb H$, then 
\[ \varphi(\mathcal N_\delta(\partial \Omega)) \subset \mathcal N_{C^\prime\sqrt{\delta}}(\partial \mathbb H)\]
for some ($\Omega$ dependent) constant $C^\prime$ and hence bounding the capacity via the area of the corresponding strip, we have 
\[ \mbox{Cap}_{\mathbb H}(\bar \gamma_1),\mbox{Cap}_{\mathbb H}(\bar \gamma_2) \lesssim D\sqrt \delta\]
where $D$ is the diameter of the image of the complement of the $\Delta$--neighborhood of $c$ under $\varphi$ and we use $\lesssim$ to denote implied universal/$\Omega$--dependent constants.  Next we note that by the subadditive property of capacities, it is clear that 
\[ |\mbox{Cap}_{\mathbb H}(\gamma_1) - \mbox{Cap}_{\mathbb H}(\gamma_2)| \leq \mbox{Cap}_{\mathbb H} (\bar \gamma_1) + \mbox{Cap}_{\mathbb H} (\bar \gamma_2) + |\mbox{Cap}_{\mathbb H}(\hat \gamma_1) - \mbox{Cap}_{\mathbb H}(\hat \gamma_2)|\]
so we now estimate $|\mbox{Cap}_\mathbb H(\hat \gamma_1) - \mbox{Cap}_\mathbb H(\hat \gamma_2)|$.  But first, by another distortion estimate (see e.g., Corollary 3.19 in \cite{lawler}) we have 
\[ |\varphi^\prime(z)| \lesssim 1/\sqrt \delta\]  
and hence $d(\varphi(z_1), \varphi(z_2)) \lesssim \frac{\eta}{\sqrt \delta}$ if $z_1, z_2 \in \Omega \setminus \mathcal N_{\delta}(\partial \Omega)$ with $d(z_1, z_2) < \eta$ and we conclude that 
\[ d_H(\hat \gamma_1, \hat \gamma_2) \lesssim \frac{\eta}{\sqrt \delta}\]
where $d_H$ denotes the Hausdorff distance, from which it follows by \eqref{beurling} that 
\[ |\mbox{Cap}_\mathbb H(\hat \gamma_1) - \mbox{Cap}_\mathbb H(\hat \gamma_2)| \lesssim \frac{\sqrt \eta}{\delta^{1/4}}.\]
Combining the above estimates, we see that with proper choice of $\delta$ (which vanishes with $\eta$), the difference in capacities indeed differs by a fractional power of $\eta$.
%
%\vspace{0.5cm}
%
%This follows for example from the Beurling estimates (see e.g., Corollary 3.80 in \cite{lawler}), at the cost of $\eta$ becoming $\sqrt \eta$ times a large $\Delta$--dependent constant which accounts for the diameter of the image of the appropriate region away from $c$.  In our case, we are only assuming $\eta$--closeness in the original domain $\Omega$ and therefore one could \emph{a priori} be concerned about distortions near the boundary.  To dispense with these concerns, we consider $\hat \Omega \supset \Omega$ with $c \in \partial\hat \Omega$, but otherwise a smooth outer approximation.  We may map $\hat \Omega$ to $\mathbb H$ with bounded distortion (in the vicinity of the origin)
%whereupon the ``reference'' capacities 
%of two $\eta$--close curves in $\Omega$ 
%are close.  But, as is not hard to see, the actual capacities differ from these reference capacities by a (uniform) constant.
\end{proof}
We may thus safely replace all parameterizations by the L\"oewner parameterization:   
\begin{cor}
Let $\gamma$ be a L\"oewner curve emanating from $a$ and staying outside of the $\Delta$ neighborhood of $c$, and let $\mathcal L_\sigma (\gamma)$ denotes the $\sigma$ L\"oewner parameterization neighborhood of $\gamma$.  Then there exists $\eta = \eta(\sigma, \Delta, \gamma) > 0$ such that the the $\mathbf {Dist}$ neighborhood of size $\eta$ is contained in $\mathcal L_\sigma(\gamma)$.
\end{cor}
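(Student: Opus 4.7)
The plan is to use Lemma \ref{sup_cap} to control L\"owner-time shifts under $\mathbf{Dist}$ perturbations, and then close the loop using the (fixed) modulus of continuity of the reference curve $\gamma$. The key point is that $\gamma$ is given, so its modulus of continuity $\omega_\gamma$ in its own L\"owner parametrization is a fixed continuous function with $\omega_\gamma(0^+) = 0$, which will break any potential circularity.

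Set $\gamma$ and $\gamma'$ in their respective L\"owner parametrizations. By the definition of $\mathbf{Dist}$, a distance bound below $\eta$ supplies continuous increasing surjections $\varphi_1, \varphi_2$ (onto the relevant L\"owner-time intervals) with $\sup_u |\gamma(\varphi_1(u)) - \gamma'(\varphi_2(u))| < \eta$. For a fixed L\"owner time $s$ of $\gamma'$, pick $u$ with $\varphi_2(u) = s$ and set $s^{\#} := \varphi_1(u)$. Then $|\gamma(s^{\#}) - \gamma'(s)| < \eta$ directly from the sup bound.

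To compare $s$ and $s^{\#}$, observe that the prefixes $\gamma([0, s^{\#}])$ and $\gamma'([0, s])$ lie within Hausdorff distance $\eta$ (matched by $\varphi_1|_{[0, u]}$ and $\varphi_2|_{[0, u]}$), both emanate from $a$, and (for $\eta \ll \Delta$) both remain outside the $\Delta/2$-neighborhood of $c$. The proof of Lemma \ref{sup_cap} is really a Hausdorff-neighborhood estimate, so it applies verbatim to the truncations; since L\"owner time equals half the half-plane capacity, this yields
\[ |s - s^{\#}| \le \tfrac{1}{2} C(\Omega, \Delta)\, \eta^{\alpha}. \]
A triangle inequality then produces
\[ |\gamma(s) - \gamma'(s)| \le |\gamma(s) - \gamma(s^{\#})| + |\gamma(s^{\#}) - \gamma'(s)| \le \omega_\gamma\!\left(\tfrac{1}{2} C(\Omega, \Delta)\, \eta^{\alpha}\right) + \eta, \]
which is less than $\sigma$ as soon as $\eta = \eta(\sigma, \Delta, \gamma)$ is chosen sufficiently small.

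The main obstacle is the prefix capacity comparison: Lemma \ref{sup_cap} is formulated for whole curves, and one must recognize that its proof genuinely controls a Hausdorff neighborhood and so transfers to truncations. With that in hand the remainder is a straightforward triangle-inequality estimate, whose crucial subtlety is to invoke the modulus of continuity of the fixed curve $\gamma$ (rather than of the varying $\gamma'$) in order to avoid circularity. Endpoint effects near the time at which the curves leave the safe region are handled by the same capacity comparison applied at the exit time.
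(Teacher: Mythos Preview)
Your proof is correct and, like the paper's, hinges on applying Lemma \ref{sup_cap} to prefixes of the curves, so the central analytic input is the same. The argument styles, however, differ. The paper argues by contradiction and compactness: assuming a sequence $\gamma_n \to \gamma$ in $\mathbf{Dist}$ with $d_{\mathcal L}(\gamma_n,\gamma) > \sigma$, it parametrizes all $\gamma_n$ by $\gamma$'s capacity, extracts times $s_n \to s$ and capacities $\mathfrak c_n \to \mathfrak c$ witnessing the separation, shows $\gamma_n([0,s_n]) \to \gamma([0,s])$ in $\mathbf{Dist}$, invokes Lemma \ref{sup_cap} to force $\mathfrak c = \mathrm{Cap}_{\mathbb H}(\gamma[0,s]) = s$, and reaches a contradiction from $\gamma_n(s_n) \to \gamma(s)$.

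Your route is instead direct and quantitative: you match a L\"owner time $s$ of $\gamma'$ with a time $s^{\#}$ of $\gamma$ via the $\mathbf{Dist}$-reparametrizations, use Lemma \ref{sup_cap} on the prefixes to get $|s - s^{\#}| \le \tfrac12 C(\Omega,\Delta)\eta^{\alpha}$, and then close with the triangle inequality and the modulus of continuity $\omega_\gamma$ of the \emph{fixed} curve $\gamma$. This yields the explicit bound $d_{\mathcal L}(\gamma,\gamma') \le \omega_\gamma\bigl(\tfrac12 C\eta^{\alpha}\bigr) + \eta$, which is more informative than the paper's soft argument and makes the dependence of $\eta$ on $\gamma$ (through $\omega_\gamma$) transparent. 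The paper's compactness argument, on the other hand, sidesteps any need to name $\omega_\gamma$ and is slightly less sensitive to endpoint bookkeeping (mismatched total L\"owner times), which you correctly flag and handle via the same capacity comparison at the exit time.
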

\begin{proof}
Suppose towards a contradiction that this is not the case.  Then there exists $\gamma_n \rightarrow \gamma$ in the $\mathbf{Dist}$ norm such that $d_\mathcal L(\gamma_n,    \gamma) > \sigma$.  It is clear that we may endow each $\gamma_n$ -- as well as $\gamma$ -- with some (uniform) parameterization so that $\sup_t |\gamma_n(t)  - \gamma(t)| = \eta_n$, which tends to zero; further, we can and will without loss of generality assume that $\gamma$ is in fact parameterized by capacity (this does \emph{not} imply that $\gamma_n$'s are parameterized by capacity; indeed, they are parameterized by $\gamma$'s capacity).  But this implies that there is a sequence of capacities $\mathfrak c_n$, which occur for $\gamma_n$ at time $s_n$ (in this parameterization) such that 
\[ |\gamma_n(s_n) - \gamma(\mathfrak c_n)| > \sigma. \]

Taking a subsequence if necessary, we may assume that $s_n \rightarrow s$.  Our first claim is that $\gamma_n([0, s_n])$ converges in the \textbf{Dist} norm to $\gamma([0, s])$.  Indeed, 
\[ \textbf{Dist}(\gamma_n([0, s_n]), \gamma([0, s])) \leq \textbf{Dist}(\gamma([0, s_n]), \gamma([0, s])) + \textbf{Dist}(\gamma([0, s_n]), \gamma_n([0, s_n])).\]
The second term is clearly bounded by $\eta_n$; as for the first term, it is clearly bounded by $\text{diam}(\gamma([s_n \wedge s, s_n \vee s]))$ which tends to zero since $\gamma$ is continuous.  We may assume without loss of generality (taking a subsequence if necessary) that $\mathfrak c_n \rightarrow \mathfrak c$.  
By Lemma \ref{sup_cap} we then have
$$
\mathfrak c = \lim_{n \to \infty}\text{Cap}_{\mathbb H}
(\gamma_n[0,s_n])  =  \text{Cap}_{\mathbb H}
(\gamma[0,s]).
$$
So using the fact that capacity is strictly increasing (which follows from the definition of L\"oewner curves) the above display implies
that $s$ = $\mathfrak c$ which is a contradiction since \textbf{Dist}--convergence necessitates that $\gamma_n(s_n) \to \gamma(s)$.
\end{proof}

\vspace{0.5cm}

$\diamond$ V.\hspace{0.5mm}]
As a first step towards obtaining a martingale observable in the continuum, our next goal is to remove all $\e$'s from \eqref{eq:tot1}.  On the basis of the previous step, it is clear that we may now interpret \eqref{eq:tot1} in terms of L\"oewner parameterization.  Further, we set $t > 0$ to be such that the relevant curves have not yet entered the $\Delta$ neighborhood of $c$.  First, the right hand side of \eqref{eq:tot} converges to the continuum counterpart $C_0(\Omega, a, b, c, d)$ by Lemma \ref{cardy_formula}, so we focus on the left hand side.

First, recalling that $\mu^\prime$ is a weak$^*$ limit with respect to the \textbf {Dist} norm, and that the space of all possible continuous curves is, in fact, separable, it follows that there are countably many curves $\gamma_n$ such that the space, $\mathscr C_{a, c, \Delta}$, of L\"oewner curves which begin at $a$ aiming towards $c$ but having not yet entered the $\Delta$ neighborhood of $c$, can be written as 
\[ \mathscr C_{a, c, \Delta} = \bigcup_{n=1}^\infty B_{\delta_n}(\gamma_n) \cap \mathcal L_{\sigma}(\gamma_n): = \bigcup_{n=1}^\infty \mathcal N_n^*.\]
In the above, $\delta_n$ has been chosen in accord with Lemma \ref{pt_eq_cont} (and also, for the model in \cite{cardy}, described in $\S$\ref{model}, Lemma \ref{admissibility} ensures that Cardy's Formula is viable for domains slit by the Explorer Process) so that $C_{\varepsilon}\left(\Omega_\e\setminus\mathbb X^{\varepsilon}_{[0,t]},\mathbb X^{\varepsilon}_t,b,c,d\right)$ for any $\mathbb X^{\varepsilon}_{[0,t]}$ in $B_{\delta_n}(\gamma_n)$ is $\vartheta$ close to the corresponding object with argument $\gamma_n([0, t])$ (for $\e < \e(\gamma_n)$ sufficiently small), where $\vartheta \ll 1$ is small, and $\sigma$ is also envisioned to be small.  Further, modifying the neighborhoods to be mutually disjoint, we can now reduce to a finite number, $N$, of these neighborhoods which carries all but $\alpha$ (with $\alpha \ll 1$) of the measure of $\mu^\prime$.  For what follows, we will sometimes abbreviate, e.g., 
\[K_\e (Y_t^\e) := C_{\varepsilon}\left(\Omega_\e\setminus\mathbb Y_{[0,t]}^\e,\mathbb Y_t^\e,b,c,d\right).\]
In the above display, it is understood that the right hand side is interpreted in accord with Remark \ref{fmartingale} above.

We first observe that (for $\e$ sufficiently small) 
\[ \left|\mathbb E_{\mu_\e}(K_\e(X_t^\e)) - \sum_{n=1}^N \mu_\e(\mathcal N_n^*) K_\e(\gamma_n)\right| \leq \alpha + \sum_{n=1}^N \sum_{X_t^\e \in \mathcal N_n^*}|K_\e(X_t^\e) - K_\e(\gamma_n)|~\mu_\e(X_t^\e) \leq \alpha + \vartheta\]
and similarly
\[ \left|\mathbb E_{\mu^\prime}(K_0(X_t)) - \sum_{n=1}^N \mu^\prime(\mathcal N_n^*) K_0(\gamma_n)\right| \leq \alpha + \sum_{n=1}^N \int_{X_t \in \mathcal N_n^*}|K_0(X_t) - K_0(\gamma_n)|~d\mu^\prime(X_t) \leq \alpha + \vartheta\]
Therefore, it is enough to control the difference of the relevant sums over neighborhoods:
\[ \begin{split}&~~~\left|\sum_{n=1}^N \mu_\e(\mathcal N_n^*) K_\e(\gamma_n) - \sum_{n=1}^N \mu^\prime(\mathcal N_n^*) K_0(\gamma_n)\right|\\
&\leq \sum_{n=1}^N \left|\mu_\e(\mathcal N_n^*) K_\e(\gamma_n) - \mu^\prime(\mathcal N_n^*) K_0(\gamma_n)\right|\\
&\leq \sum_{n=1}^N \left| \mu_\e(\mathcal N_n^*) ( K_\e(\gamma_n) - K_0(\gamma_n))| + |(\mu^\prime(\mathcal N_n^*) - \mu_\e(\mathcal N_n^*))K_0(\gamma_n)\right|\\
&\leq \vartheta + \alpha
\end{split}\]
Thus, taking $N \rightarrow \infty$ and $\e \rightarrow 0$, etc.,  we may now upgrade Eq.~\eqref{eq:tot} with
\begin{equation}\label{eq1}
\mathbb E_{\mu^{\prime}}\left[C_{0}\left(\Omega\setminus\mathbb X_{[0,t]},\mathbb X_t,b,c,d\right)\right]=
C_{0}\left(\Omega,a,b,c,d \right).
\end{equation}

\begin{remark}
The demonstration of Equation \eqref{eq1} (or some version thereof) \emph{in the continuum} represents the key issue in this approach to proving convergence.  In the present work, this has been achieved via a robust convergence to Cardy's Formula in general (i.e., slit) domains via the sup--approximations; see e.g., \cite{pt2}, Corollary 4.10.  In any case, the authors strongly believe that \emph{some} analytical statement along these lines cannot be avoided.  
\end{remark} 

Next we recast Equation \eqref{eq1} in terms of conditional expectation: 
\begin{equation}\label{cond_exp} \mathbb E_{\mu^\prime}(\mathbf 1_{\mathscr C_\Omega}\mid \sigma([0, t])) \equiv \mathbb E_{\mu^\prime}(\mathbf 1_{\mathscr C_\Omega}\mid \mathbb X_{[0, t]})=K_0(\mathbb X_{[0, t]}),\end{equation}
where $\sigma([0, t])$ denotes the $\sigma$--algebra generated by $\mu^\prime$ supported curves up to time $t$ and $\mathbf 1_{\mathscr C_\Omega}(\cdot)$ is the indicator function of the crossing event.  (The latter can be realized as 
\[ \mathbf 1_{\mathscr C_\Omega}(\gamma) = \begin{cases} 1 ~~~&\mbox{if $\gamma$ hits $[c, d]$ before $[b, c]$}\\
0 &\mbox{if $\gamma$ hits $[b, c]$ before $[c, d]$}
\end{cases}\]
and hence is a $\mu^\prime$ measurable function.)  
Note that e.g., $\mathbf 1_{\mathscr C_\Omega} \equiv 1$ if $\mathbb X_{[0, t]}$ has already hit the $[c, d]$ boundary of $\Omega$ and, in this vein, Equation \eqref{cond_exp} is of course interpreted in accord with Remark \ref{fmartingale} above.  We see that Equation \eqref{cond_exp} follows immediately: For $\mathcal B \in \sigma([0, t]])$, 
\[~~~\int_{\mathcal B} [\mathbb E_{\mu^\prime}(\mathbf 1_{\mathscr C_\Omega}\mid \sigma([0, t]))](\gamma)~d\mu^\prime(\gamma)
= \int_{\mathcal B} \mathbf 1_{\mathscr C_\Omega}(\gamma)~d\mu^\prime(\gamma)
= \mu^\prime(\mathscr C_\Omega \cap \mathcal B)
= \int_\mathcal B K_0(\mathbb X_{[0, t]})~d\mu^\prime.
\]
Here the first two equalities are definitions and the third equality can be established by a straightforward modification of the argument used to establish Equation \eqref{eq1} -- which corresponds to the case where $\mathcal B$ is the full sample space.  

%Thus, we see explicitly that the conditional expectation $\mathbb E_{\mu^\prime}(\mathbf 1_{\mathscr C_\Omega} (\cdot) \mid \mathbb X_{[0, t]}):= \mathbb E_{\mu^\prime}(\mathbf 1_{\mathscr C_\Omega}(\cdot) \mid \sigma([0, t]))$ is (almost) a continuous time martingale which is equal to Cardy's Formula in $\Omega_t$, where $\sigma([0, t])$ denotes the $\sigma$--algebra generated by $\mu^\prime$ supported curves up to time $t$ and $\mathbf 1_{\mathscr C_\Omega}(\cdot)$ is the indicator function of the crossing event, which can be realized as 
%\[ \mathbf 1_{\mathscr C_\Omega}(\gamma) = \begin{cases} 1 ~~~&\mbox{if $\gamma$ hits $[c, d]$ before $[b, c]$}\\
%0 &\mbox{if $\gamma$ hits $[b, c]$ before $[c, d]$}
%\end{cases}\]
%(and hence is a $\mu^\prime$ measurable function).  

From Equation \eqref{cond_exp} and the defining properties of conditional expectation, we can deduce that 1) the random variable $K_0(\mathbb X_{[0, t]})$ is $\sigma([0, t])$ measurable and 2) $K_0(\mathbb X_{[0, t]})$ is a continuous time martingale, i.e., if $0 < s < t$, then  
\begin{equation}\label{martingale}
\mathbb E_{\mu'}\left[C_{0}\left(\Omega \setminus \mathbb X_{[0, t]},\mathbb X_t,b,c,d\right)\ |\ \mathbb X_{[0,s]}\right] = C_0\left(\Omega \setminus \mathbb X_{[0, s]},\mathbb X_s,b,c,d\right).\end{equation}
In particular, Equation \eqref{martingale} is simply Equations \eqref{eq1} and \eqref{cond_exp} with $\Omega$ replaced by $\Omega \setminus \mathbb X_{[0, s]}$ -- along with the interpretation of the latter in terms of conditional expectations -- and $\mu^\prime$ averaging over $\mathbb X_{[s, t]}$.  More specifically, since $\sigma([0, s]) \subset \sigma([0, t])$, if $\mathcal B \in \sigma([0, s])$, then 
\[\begin{split} \int_\mathcal B \mathbb E_{\mu^\prime} (\mathbf 1_{\mathscr C_\Omega} \mid \sigma([0, s])~d\mu^\prime = \int_\mathcal B \mathbf 1_{\mathscr C_\Omega}~d\mu^\prime &= \int_\mathcal B \mathbb E_{\mu^\prime} (\mathbf 1_{\mathscr C_\Omega} \mid \sigma([0, t]))~d\mu^\prime\\
&= \int_\mathcal B \mathbb E_{\mu^\prime} \left[\mathbb E_{\mu^\prime} \left(\mathbf 1_{\mathscr C_\Omega} \mid \sigma([0, t]) \mid \sigma([0, s])\right)\right]~d\mu^\prime,\end{split}\]
which is the content of \eqref{martingale}.

\vspace{0.5cm}

$\diamond$ VI.\hspace{0.5mm}] We will now finish the proof and show that $\kappa = 6$.   Notice that the map 
$$h_t(z)=\frac{g_t(z)-g_t(d)}{g_t(b)-g_t(d)},$$
where $g_t(z)$ is the L\"owner map, maps the rectangle $(\Omega \setminus \mathbb X_{[0, t]}, \mathbb X_t,b,c,d)$ conformally onto 
\[\left(\mathbb H, \frac{\lambda_t-g_t(d)}{g_t(b)-g_t(d)},1,\infty,0\right).\]
By Cardy's identity (Lemma \ref{cardy_formula}), 
\begin{equation}\label{eq:cardy_cont}
C_0(\Omega \setminus \mathbb X_{[0, t]}, \mathbb X_t,b,c,d)=
F\left(\frac{g_t(b)-\lambda_t}{g_t(b)-g_t(d)}\right),
\end{equation}
where we recall that the relevant domain is really the connected component of $c$ in $\Omega \setminus \mathbb X_{[0, t]}$ and it is tacitly assumed that $b$ and $d$ are both (still) in the boundary of this component.

Using Eq.~\eqref{eq:cardy_cont}, we can rewrite Eq.~\eqref{martingale}, accounting for such errors, via
\begin{multline}\label{eq:total1}
\left|F\left(\frac{g_s(b)-\lambda_s}{g_s(b)-g_s(d)}\right)
\mathbf 1_{\{b, d \in \partial (\Omega \setminus \mathbb X_{[0, s]})  \}}-
\mathbb E_{\mu^\prime}\left(F\left(\frac{g_t(b)-\lambda_t}{g_t(b)-g_t(d)}\mid \mathbb X_{[0,s]}\right)\cap\{ b, d\in\partial(\Omega \setminus \mathbb X_{[0, t]})\}\right)\right|\\
\\
\leq\mathbb P\left(b\not\in\partial(\Omega \setminus \mathbb X_{[0, t]}) \text{ or }d\not\in\partial(\Omega \setminus \mathbb X_{[0, t]})\right).
\end{multline}

Let us now consider $|g_t(b)|$ and $|g_t(d)|$ both large compared with $\lambda_t$ and $t$, which may be enabled by considering $t$ fixed and $b, d \rightarrow c$.  In particular, let us define $b_0 = g_0(b)$ and $d_0 = g_0(d)$; the object $b_0$ will be our large parameter and since $b_0 > 0$ while $d_0 < 0$, we may as well defined $d_0$ via $d_0 = -r b_0$ with $r > 0$ of order unity.  It turns out that $r =1$ is slightly peculiar (which is anyway easily understood) and we will assume that this is \emph{not} the case.  

Let us observe right now that (for fixed $t$) the right hand side of Equation \eqref{eq:total1} tends to zero as we take $b_0$ to infinity: Since the capacity of the curve at time $t$ is, by definition, $2t$, it is clear that the image of the curve must stay within a distance $\approx \sqrt t$ of the real axis.  The asymptotic expansion for $g_t(g_0^{-1})$ directly implies that for for $t \ll b_0$, e.g., $b_t \approx b_0$ and hence, the image of the Exploration Process up to time $t$ under the map $g_0$ will be forced to cross a box of large aspect ratio, which by (conformal invariance of) Cardy's Formula, tends to zero exponentially like $e^{-\mathbf O(b_0/\sqrt{t})}$.  Hence it is sufficient to complete the proof under the assumption that $b, d \in \partial \Omega_t$.  

We now carry out the promised asymptotic expansion in $1/b_0$.  Recall that by the L\"owner parameterization in the half plane, $g_t(g^{-1}_0(z))=z+2t/z+\mathbf O(1/z^2)$ for $z\to\infty$.  Thus, 
\[ g_t(b) = b_0 + \frac{2t}{b_0} + \dots.\]
Therefore
(assuming $b$ and $d$ are in $\partial \Omega \setminus \mathbb X_{[0, s]}$) we may
write, for the first term on the left hand side of Eq.~\eqref{eq:total1} 
\begin{equation}\label{eq:taylor} F\left(\frac{g_t(b)-\lambda_t}{g_t(b)-g_t(d)}\right) = A(r) + B(r) \left[\frac{\lambda_t}{b_0}\right] + C(r) \left[\frac{\lambda_t^2 - 6t}{b_0^2}\right] + \mathbf O(b_0^{-3}).\end{equation}
We will need to take expectation of all terms; provided that each term in the expansion is well--defined, we may examine coefficients of various powers of $b_0$ and draw conclusions.  The necessary moment estimates appear in Lemma \ref{lem:apriori} below.  

First let us take expectations and note that Equation \eqref{eq1} implies that the average over $\mathbb X_{[0, t]}$ and hence $\lambda_{[0, t]}$ must provide the same result as in the original setup (corresponding to $t = 0$).  This implies, from the first two terms, that 
\begin{equation}\label{no_drift} \mathbb E(\lambda_t) = \lambda_0 = 0\end{equation}
and
\begin{equation}\label{2nd_moment} \mathbb E(\lambda_t^2 - 6t) = 0.\end{equation}
Finally, we reiterate that the entirety of $\mathbb X_{[0, t]}$ is determined by $\lambda_{[0, t]}$ (the history of the driving function up to time $t$).  Now, conditioning on $\mathbb X_{[0, s]}$ -- which is equivalent to conditioning on $\lambda_{[0, s]}$ -- Equation \eqref{martingale} gives us that the conditional expectation of Equation \eqref{eq:taylor} must (term by term) give us what we would have gotten with $s$ replacing $t$, namely, 
\[\mathbb E(\lambda_t\mid\lambda_s)=\lambda_s, \quad \mathbb E(\lambda_t^2-6t\mid \lambda_s)=\lambda_s^2-6s.
\]
Therefore both $\lambda_t$ and $\lambda_t^2-6t$ are continuous martingales, which, by L\'evy's characterization of Brownian Motion, implies that $\lambda_t$ has the law of $B_{6t}$.  Modulo the moment estimates for $\lambda_t$, this completes the proof of the Main Theorem. \qed

Finally, as promised, we will now prove an \emph{a priori} estimate on $\lambda_t$.

\begin{lemma}[\emph{A priori} Estimate]\label{lem:apriori}
$$\mathbb P[\lambda_t>n]\leq C_1\exp\left(-C_2\frac{n}{\sqrt{t}}\right),$$
for some absolute constants $C_1$ and $C_2$.
\end{lemma}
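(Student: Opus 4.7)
The plan is to show that the event $\{\lambda_t > n\}$, for $n$ much larger than $\sqrt{t}$, forces a monochromatic percolation crossing of a rectangle whose aspect ratio is of order $n/\sqrt{t}$, after which the standard RSW theory (explicitly postulated as one of the standing hypotheses in the introduction) delivers the claimed exponential bound.  If $n \leq K\sqrt{t}$ for any fixed constant $K$, the claim is vacuous upon taking $C_1$ large, so I assume $n \geq K\sqrt{t}$ with $K$ a large absolute constant throughout.

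First I pass to the half-plane via $g_0$, so that $\widetilde{\mathbb{X}}[0, t] := g_0(\mathbb{X}_{[0, t]})$ is a curve in $\overline{\mathbb{H}}$ starting at the origin, with half-plane capacity exactly $2t$ and driving function $\lambda_t$.  Applying the asymptotic expansion $g_t \circ g_0^{-1}(z) = z + 2t/z + O(z^{-2})$ at infinity (valid at scales much larger than $\sqrt{t}$) to the identity $g_t(\gamma(t)) = \lambda_t$, I deduce that the tip $\gamma(t)$ lies essentially on the real axis at real part $\approx \lambda_t > n$; in particular its real part exceeds $n/2$.

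The exploration process then appears in the half-plane image as a connected blue/yellow interface starting at $0$ and ending near $(n, 0)$ on the real axis.  A Beurling-projection bound combined with the capacity constraint $\mathrm{hcap}(\widetilde{\mathbb{X}}[0, t]) = 2t$ confines the imaginary part of the curve to $O(\sqrt{t})$, since the circular projection of $\widetilde{\mathbb{X}}[0, t]$ onto the positive imaginary axis contains $[0, i\mathrm{Im}_{\max}]$ and hence has capacity at least $\mathrm{Im}_{\max}^2/2$.  Consequently the event $\{\lambda_t > n\}$ entails the existence of a monochromatic (blue or yellow) percolation crossing of a rectangle of aspect ratio of order $n/\sqrt{t}$ in the hard (horizontal) direction.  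Pulling back to $\Omega_\varepsilon$ via $g_0^{-1}$ and applying the RSW hypothesis -- which, upon iterating the uniform bounded-aspect-ratio crossing estimate and patching via FKG, gives exponential decay in the aspect ratio -- one obtains $\mathbb{P}(\lambda_t > n) \leq C_1 \exp(-C_2 n/\sqrt{t})$.

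The main obstacle is the geometric step: namely, promoting the Loewner statement ``$\lambda_t > n$'' into a concrete rectangular envelope for the curve via the capacity bound.  The Beurling-projection estimate on vertical extent is standard, but some care is required to combine it cleanly with the tip-position bound coming from the Loewner expansion, and to verify that any horizontal excursion to real part comparable to $n$ really does cost RSW-many crossings (rather than being generated by an uncharacteristic ``loop'' that spoils the aspect-ratio count).  Once this geometric reduction is secured, the translation to a monochromatic crossing and the concluding RSW bound are routine applications of the paper's standing hypotheses.
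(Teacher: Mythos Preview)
Your overall strategy matches the paper's: confine the half--plane image of the curve to height $O(\sqrt{t})$, show it reaches horizontal extent comparable to $n$, and then bound the probability of the resulting long--thin crossing.  Two steps, however, are not quite right as written.

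First, the tip--location argument is circular.  The expansion $g_t\circ g_0^{-1}(z)=z+2t/z+O(z^{-2})$ is only valid for $|z|$ large, and you are invoking it at $z=g_0(\gamma(t))$ precisely in order to show that this point is large.  The paper avoids this by using the uniform deterministic bound $|g_t\circ g_0^{-1}(z)-z|\le 3R_t$ for all $z\in\mathbb H\setminus g_0(\gamma[0,t])$ (Lawler, Corollary~3.44), where $R_t=\sup_{s\le t}|g_0(\gamma(s))|$; sending $z$ to the tip gives $|\lambda_t|\le 4R_t$, i.e.\ $R_t\ge |\lambda_t|/4$.  Your height bound via circular projection is fine and is equivalent to the paper's ODE argument for $\mathrm{Im}(\gamma(t))\le 2\sqrt t$.

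Second, and more seriously, the closing step does not go through as stated.  The event $\{\lambda_t>n\}$ is an event for the \emph{limiting} measure $\mu'$, and what you have produced is a crossing of a Euclidean rectangle in $\mathbb H$.  Pulling this back by $g_0^{-1}$ lands you in $\Omega$ (not $\Omega_\varepsilon$), and the pullback is a \emph{conformal} rectangle of modulus $\asymp n/\sqrt t$, not a Euclidean one.  RSW gives crossing bounds for Euclidean rectangles on the lattice; it says nothing directly about a conformal rectangle in $\Omega$, which for rough $\partial\Omega$ may be arbitrarily distorted in the Euclidean metric.  The paper instead invokes Cardy's Formula (Lemma~\ref{cardy_formula}), which has already been established for $\mu'$ and is conformally invariant by construction: the $\mu'$--probability that the interface crosses a conformal rectangle of modulus $m$ is therefore bounded by $C_1 e^{-C_2 m}$ immediately.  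This is the correct tool here, and it is available precisely because the lemma is being proved after Cardy's Formula has been secured.
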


To prove Lemma \ref{lem:apriori} let us first observe that:

\begin{lemma}
Let $\gamma(t)$ be the chordal SLE generated by $\lambda_t$.  Then 
\begin{itemize}
\item $\mbox{\emph{Im}}(\gamma(t)) \leq 2\sqrt{t}$.
\item $\sup_{s \leq t} |\gamma(s)| \geq \frac{|\lambda_t|}{4}$.
\end{itemize}
\end{lemma}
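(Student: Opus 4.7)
My plan is to read both bullets off the Loewner flow along the backward trajectory from the tip of the curve. Fix $t > 0$ and, working in the half-plane with $g_0 = \mathrm{id}$, define $\eta(s) = g_s(\gamma(t))$ for $s \in [0,t]$; this path satisfies $\dot\eta(s) = 2/(\eta(s) - \lambda_s)$ with $\eta(0) = \gamma(t)$ and $\eta(s) \to \lambda_t \in \mathbb R$ as $s \uparrow t$. For the first bullet, let $u(s) = \mathrm{Im}(\eta(s))$; the Loewner ODE gives $u'(s) = -2u(s)/|\eta(s) - \lambda_s|^2$, and the trivial bound $|\eta(s) - \lambda_s| \ge u(s)$ then yields $(u^2)'(s) \ge -4$. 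Integrating from $0$ to $t$ using the terminal condition $u(t) = 0$ produces $u(0)^2 \le 4t$, which is exactly $\mathrm{Im}(\gamma(t)) \le 2\sqrt t$.

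For the second bullet, set $R = \sup_{s \le t}|\gamma(s)|$, so the hull $K_t$ is contained in the closed half-disk of radius $R$ about the origin. I aim to show $|\lambda_t - \gamma(t)| \le 3R$; combined with $|\gamma(t)| \le R$, this gives $|\lambda_t| \le 4R$ as required. Consider $\phi(z) = g_t(z) - z$: by hydrodynamic normalization it is holomorphic on $\mathbb H \setminus K_t$ and vanishes at infinity; it is real on $\mathbb R \setminus K_t$; and on $\partial K_t \cap \mathbb H$ the fact that $g_t(z) \in \mathbb R$ yields $\mathrm{Im}\,\phi = -\mathrm{Im}(z) \in [-R, 0]$. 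The maximum principle therefore bounds $|\mathrm{Im}\,\phi| \le R$ throughout. After rescaling $z \mapsto z/R$ the problem becomes a statement about the precompact normal family of hydrodynamically normalized maps whose hulls lie in the unit half-disk, and a Poisson-representation argument (or Schwarz-reflection and comparison with the vertical slit $[0, iR]$, for which a direct computation is available) produces the absolute bound $|\phi| \le 3R$. Evaluating at $z = \gamma(t)$ gives $|\lambda_t - \gamma(t)| = |\phi(\gamma(t))| \le 3R$.

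The ODE manipulation for the first bullet is routine, and the main obstacle is pinning down the explicit constant in the second bullet. A compactness / normal-families argument gives $|\phi| \le CR$ for \emph{some} absolute $C$ almost immediately, but sharpening $C$ down to $3$ calls for a more careful estimate --- for instance, Schwarz-reflecting $\phi$ across $\mathbb R \setminus K_t$ and applying a Hilbert-transform / Poisson-integral bound on $\mathrm{Re}\,\phi$ given that $\mathrm{Im}\,\phi \in [-R, 0]$ on the reflected boundary, or explicit comparison with an extremal hull. Fortunately, for the subsequent application to Lemma~\ref{lem:apriori}, any positive constant would do in place of $1/4$, so pinning down the exact numerical value is cosmetic rather than essential.
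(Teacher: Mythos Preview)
Your argument for the first bullet is essentially identical to the paper's: both compute the derivative of $(\mathrm{Im}\,g_s(\gamma(t)))^2$ along the forward Loewner flow, bound it below by $-4$, and integrate using the terminal condition $g_t(\gamma(t))\in\mathbb R$.

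For the second bullet the paper does not reprove the estimate $|g_t(z)-z|\le 3R_t$ at all; it simply cites Corollary~3.44 of Lawler's book and then evaluates at $z=\gamma(t)$. You instead sketch a proof of this inequality via $\phi(z)=g_t(z)-z$, the maximum principle for $\mathrm{Im}\,\phi$, and a normal-families / Schwarz-reflection argument. Your route to \emph{some} absolute constant $C$ is sound, and you are right that this already suffices for the downstream application (Lemma~\ref{lem:apriori}). Nailing $C=3$ from your sketch would require more work---the standard argument in Lawler is somewhat different and more direct---but since the paper itself outsources this to a citation, your treatment is if anything more self-contained, at the cost of a weaker (unspecified) numerical constant.
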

\begin{proof}
We remark that the first statement (perhaps with a different constant) can be attained by capacity estimates, but in any case, let us observe that 
\[\partial_t (\mbox{Im}(g_t)) = -2\mbox{Im}(g_t)/|g_t - \lambda_t|^2 \geq -2/\mbox{Im}(g_t),\] 
so $\partial_t (\mbox{Im}(g_t))^2 /4 \geq -1$.  Integrating, we get
$ (\mbox{Im}(g_t))^2 \geq (\mbox{Im}(z))^2 - 4t$.  The conclusion is now clear if we plug in $z = \gamma(t)$ in the previous expression and note that $g_t(\gamma(t)) \in \mathbb R$.  

For the second part, let us denote $R_t = \sup_{s \leq t} |\gamma(s)|$.  From e.g., Corollary 3.44 of \cite{lawler}, we have that $|g_t(z) - z| \leq 3R_t$, for all $z \in \mathbb H \setminus \gamma([0, t])$.  The result follows by considering $z = \gamma(t)$ (or an approximating sequence).  
%For the second part, assume that $R = \sup_{s\leq t} |\gamma_s|$.  Let $[n_t, p_t]$ be the image $g_t(\gamma[0, t])$.  Note that $\lambda_t \in [n_t, p_t]$ and that $n_t \leq 0 \leq p_t$, since $n_t$ is decreasing and $p_t$ is increasing and $n_0=p_0=\lambda_0 = 0$.  Now let $A(z)=z+R^2/z$ be the map of the complement of the half-disk $\{|z|\leq R\}$ to $\mathbb H$. Note that $A$ maps the half-circle $\{|z|= R\}$ onto $[-2R, 2R]$. The map $g_t(A^{-1}(z))$ is well-defined, since $\gamma([0,t])\subset\{|z|\leq R\}$, and maps $[-2R, 2R]$ onto a curve separating  $[n_t, p_t]$ from infinity. By the monotonicity of harmonic measure, $|\lambda_t|\leq p_t-n_t\leq 4R$, which proves the second statement of the lemma.  
\end{proof}

Now we are in a position to prove Lemma \ref{lem:apriori}.  
\begin{proof}[Proof of Lemma \ref{lem:apriori}]
On the basis of the above lemma,  $|\lambda_t| > n$ implies that in the half plane a rectangle of aspect ratio of the order $n/\sqrt t$ has been crossed by $g_0(\gamma_{[0, t]})$.  But this means that $\gamma_{[0, t]}$ itself crossed a \emph{conformal} rectangle with conformal modulus $n/\sqrt t$.  Invoking Lemma \ref{cardy_formula}, the probability of such an event is bounded by $C_1 e^{-C_2 \frac{n}{\sqrt t}}$ for some $C_1, C_2 > 0$.   
\end{proof}

\section{Properties of Typical Explorer Paths}\label{proofs}
We will now provide proofs for the properties of a typical explorer path.  Recall that $\mu_\varepsilon$ is a measure generated by the percolation Exploration Process on the $\varepsilon$--lattice scale in a domain $\Omega$ with two distinguished boundary prime ends $a$ and $c$ and $\mu^\prime$ is any limit point of $\mu_\varepsilon$ in the weak$^*$--Hausdorff topology. 

\subsection{Estimates for Explorer Paths}\label{explorer_properties}
Here in this subsection, we collect some estimates for the explorer paths deduced from the underlying percolation systems.  
These estimates represent -- at the $\varepsilon$ level -- exactly the behavior that ensures that the limiting objects in the support of $\mu^\prime$ are precisely L\"owner curves.
We start with 
\begin{defn}\label{defdoubleback}
Let $\Omega$ be a domain.  Let $\delta \gg \eta > 0$ and let $\gamma: [0, 1] \rightarrow \Omega$ be a parametrized curve.  We say that $\gamma$ has a $\delta$--$\eta$ doubleback if there exists disjoint subsegments $I_1$ and $I_2$ of $[0, 1]$, with $\mbox{diam}(\gamma(I_1)) \geq \delta$, $\mbox{diam}(\gamma(I_2)) \geq \delta$, and such that the segments $\gamma(I_1)$ and $\gamma(I_2)$ are $\eta$--close in the sup--norm.
\end{defn}
\begin{lemma}[No Doubleback]\label{nodoubleback}
Let $\Omega$ be a domain and let $\gamma \in \mbox{supp}(\mu^\prime)$.  Let $\delta, \eta > 0$ satisfy $\eta < c_1 \delta$, with a particular $c_1$ of order unity.  Then for all $\delta$ sufficiently small, there are additional constants $c_2$ and $c_3$ of order unity such that for all $\varepsilon$ sufficiently small, the $\mu_\varepsilon$--probability of a $\delta$--$\eta$ doubleback is bounded above by 
\[ \frac{c_2}{\delta^2} \cdot e^{-c_3 \delta/\eta},\]
with the same result inherited by $\mu^\prime$.
\end{lemma}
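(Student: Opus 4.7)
The plan is to reduce the event of a doubleback to a long chain of disjoint RSW-type color blockings, whose simultaneous occurrence is controlled by a BK-type inequality.

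First I would cover $\Omega$ by balls of radius $C\delta$, of which only $c_2/\delta^2$ are needed (with $c_2$ depending mildly on $\Omega$). Because $\eta \ll \delta$, any doubleback $(\gamma(I_1), \gamma(I_2))$ must have both strands inside a single such ball, so a union bound over the balls produces the $c_2/\delta^2$ prefactor in the statement. Within one fixed ball $B$, I then further decompose the event by the rough topological type of the doubleback: the portion of the strand $\gamma(I_1)$ has diameter $\geq \delta$, so I can locate it (to within a combinatorial choice of $O(1)$ possibilities after a mesoscopic coarse-graining) as a ``tube'' running through $B$; the second strand runs $\eta$-close to it. The number of such combinatorial types is a constant, so this step is absorbed into $c_2$.

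Next, for a fixed tube configuration I would convert the doubleback into arm events for the underlying coloring. Because the Exploration Process separates its blue side from its yellow side, the region between the two parallel strands is forced to have a prescribed monochromatic ``fencing'' on each flank of each strand: along one strand blue lies on one side, yellow on the other, and the second strand, running at distance $\eta$, has the opposite orientation of colors facing inward. Slicing the backbone of the tube into $N = \lfloor c\,\delta/\eta\rfloor$ disjoint boxes $R_1,\ldots,R_N$ of side $\asymp \eta$ straddling the two strands, the doubleback event implies that in each $R_k$ a specific transverse monochromatic crossing (the correct color, consistent with the orientations of the strands) occurs. RSW gives a uniform upper bound $p<1$ on the probability of such a crossing in a fixed $R_k$ at any scale. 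Since the $R_k$'s are \emph{disjoint}, the BK-type inequality listed as a required hypothesis yields
\[
\mu_\varepsilon(\text{all }N\text{ blockings occur}) \leq p^N \leq e^{-c_3\,\delta/\eta}.
\]
Multiplying by the $c_2/\delta^2$ union-bound factor gives the claimed estimate for $\mu_\varepsilon$, for all $\varepsilon$ sufficiently small that $\eta$ is well above the lattice scale.

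Passing to $\mu^\prime$ is routine: the $\delta$--$\eta$ doubleback event is essentially open in the \textbf{Dist} topology (a small perturbation of $\gamma$ preserves diameter-$\delta$ subsegments that are $\eta$-close, up to an arbitrarily small weakening of $\eta$), so upper semicontinuity together with weak$^*$ convergence gives the same bound on $\mu^\prime$ after an arbitrarily small adjustment of constants. The main obstacle I anticipate is the second step: making the ``parallel strands + transverse blocking'' picture precise despite the wiggling of the curve. The cleanest route is to choose the boxes $R_k$ along a lattice-aligned backbone produced by coarse-graining $\gamma(I_1)$ at scale $\eta$, and to verify that the induced color constraints really do sit in disjoint sets of edges/sites so that BK applies without overcounting. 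If this is done carelessly the boxes may share boundary arcs, which would force one to use FKG in one direction and a more delicate disjoint-occurrence argument in the other; everything else in the argument is standard RSW bookkeeping.
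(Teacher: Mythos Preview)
Your overall architecture---cover $\Omega$ at scale $\delta$, localize the doubleback to one cell, then extract an exponential $e^{-c_3\delta/\eta}$ from $\sim\delta/\eta$ independent scale-$\eta$ events---matches the paper. The difference is entirely in how the exponential is obtained inside one cell, and the paper's route sidesteps exactly the obstacle you flag at the end.

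The paper does not place boxes along the random curve, nor does it invoke ``transverse monochromatic crossings.'' Instead it works in a \emph{fixed} rectangle $R_\delta$ (of dimensions $\delta\times k\delta$) and observes that a $\delta$--$\eta$ doubleback forces two disjoint monochrome easy-way crossings of $R_\delta$ that stay within $\eta$ of each other. It then conditions on the \emph{lowest} such crossing $g_0$; the region above $g_0$ is fresh. Along $g_0$ one lays down $\sim\delta/\eta$ disjoint annuli of scale $3\eta:\eta$, and in each the (conditional) probability of a yellow circuit in the half-annulus above $g_0$ is uniformly positive by RSW. Any one such circuit blocks a second blue crossing from staying $\eta$-close, so the complementary events multiply to give $e^{-c_3\delta/\eta}$. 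BK is used only peripherally, to control the sum over the number $\ell$ of disjoint crossings so that one may iterate the lowest-crossing argument.

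Your version has two soft spots. First, the claim that a doubleback forces a \emph{transverse} monochromatic crossing in each $R_k$ is not obviously correct: the exploration interface carries a blue path on one side and a yellow path on the other, so two parallel strands produce four \emph{longitudinal} arms through each $R_k$, not a transverse crossing; you would need to say precisely which event in $R_k$ has probability bounded away from one and why. Second, your boxes $R_k$ are placed along the random curve $\gamma(I_1)$, so the events ``crossing in $R_k$'' are not events on fixed disjoint sets of sites, and BK does not apply directly---this is exactly the difficulty you anticipate. The paper's lowest-crossing conditioning resolves both issues at once: the rectangle is deterministic, the conditioning makes the upper region fresh, and the blocking events (yellow circuits in disjoint annuli) live on fixed disjoint regions, so one gets genuine independence (up to the short-range flower correlations) rather than needing BK for this step.
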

\begin{proof}
It is sufficient to verify the statement in the measures $\mu_\varepsilon$ for $\varepsilon$ sufficiently small.  Thus let $\delta \ll 1$ and $\eta$ small as desired and then $\varepsilon$ much smaller than the scale set by $\eta$. 
Thus we are back to percolation estimates which reduce to crossing estimates for large boxes.  Proofs of similar results have appeared in the literature (many times) before so we shall be succinct.  In summary, the probability of a percolation path crossing a fixed box with aspect ratio of order 
$\delta:\eta$ is of order 
$e^{-[\text{const.}]\delta/\eta}$.  The event in question implies such a crossing (somewhere) and the factor of $\delta^{-2}$ accounts for all possible locations.  We now proceed.

For $k$ large but of order unity, let us grid the domain $\Omega$ into pixels of scale $k^{-1} \delta$.  It's not difficult to see that the event in question necessitates an easy--way $\eta$--close double--crossing of some rectangle of this scale with aspect ratio of order unity.  
Let us now consider a particular such $\delta: k\delta$ rectangle, denoted by $R_\delta$ and let us consider the event of at least two disjoint blue crossings of $R_\delta$ that are within distance $\eta$ of each other.  If $g_0$ is such a (single) crossing, let 
\[ N(g_0) = \{ \mbox{$\exists$ a blue crossing of $R_\delta$ in the region \emph{above} $g_0$ that is within distance $\eta$ of $g_0$}\}.\]
Our first claim is that, uniformly in $\varepsilon$, for all $\varepsilon$ sufficiently small, $\mathbb P(N(g_0)) \leq e^{-c_3\frac{\delta}{\eta}}$, for all $\eta, \delta$.  To see this, let us cover $g_0$ with disjoint annuli of scale $3\eta: \eta$, with the center of each annulus centered on a point of $g_0$.  Clearly, there are at least of the order $\delta/\eta$ such annuli.  If in the region above $g_0$, in any one of these annuli there is a yellow circuit, then $N(g_0)$ cannot possibly occur.  For future reference, we note that 
in fact these preventative steps take place in the intersection of the relevant annuli with $R_{\delta}$.
Since the probability of such a yellow circuit is uniformly positive, we have so far indeed shown that  
$$
\mathbb P(N(g_0)) \leq e^{-c_3 \frac{\delta}{\eta}}.
$$
Letting $\mathbf {G_0}$ denoting the event that $g_0$ \emph{is} the lowest crossing, one obtains the same estimate as the above for $\mathbb P(N(g_0) \mid \mathbf{G_0})$.  The estimates will hold if we now let $\mathbf {G_k}$ denote the event that the curve $g_k$ is the $k^{\mbox{th}}$ to lowest crossing, e.g., out of a total of $\ell \geq k$ disjoint crossings.  Thus, by subadditivity, conditioned on the existence of say $\ell$ disjoint crossings, the ultimate double--crossing event of interest has probability bounded above by $\ell e^{-d_3 \frac{\delta}{\eta}}$.  However, if $r_\ell$ denotes the probability of $\ell$ disjoint crossings in $R_\delta$, then by a BK--type inequality (which for the model at hand is provided in Lemma \ref{bklemma}) it is clear that $\sum_\ell \ell r_\ell < \infty$.  Hence the probability of two disjoint blue crossings (or two disjoint yellow crossings) in $R_\delta$ is bounded above by 
\begin{equation}\label{bfb}  c_2 e^{-c_3 \frac{\delta}{\eta}}.\end{equation}
To finish we note that there are only of order $\delta^{-2}$ such rectangles in $\Omega$ and hence summing over them, we have finished proving the lemma.  
\end{proof}
%
%\begin{remark}
%In the above, we envision the circumstances $\eta/\delta \rightarrow 0$.  Later on (Lemma \ref{multiscale_nd}) we shall state and prove an elaboration of Lemma \ref{nodoubleback} that takes place on multiple scales for which it is instead \emph{required} that this ratio stays bounded below.
%\end{remark}

In the above and in what is to follow, results are shown to hold ``uniformly in $\varepsilon$ for $\varepsilon$ sufficiently small''
-- which, ultimately, always follows from scale invariance of the RSW estimates.  Hereafter we shall be somewhat less 
explicit concerning this matter. 

\begin{lemma}[Multi--Arm Estimates]\label{arms}
Let $D(\eta, l)$ denote the circular annulus with inner radius $\eta$ and outer radius $l$.  Consider the events of a (i) 5--arm crossing of $D(\eta, l)$ and (ii) 6--arm crossing of $D(\eta, l)$.  Then the 5--arm event has probability bounded above by $(\eta/l)^2$ while the 6--arm event has probability bounded above by $(\eta/l)^{2 + \sigma}$ for some $\sigma > 0$.  
\end{lemma}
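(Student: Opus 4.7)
The plan is to handle (i) by the classical Kesten counting argument for the polychromatic five-arm exponent, and to obtain (ii) by peeling off one additional arm from a five-arm configuration via the BK-type inequality (Lemma~\ref{bklemma}) combined with the one-arm RSW bound.

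For (i), cover a box $B$ of side $l$ by $\sim (l/\eta)^2$ disjoint cells of side $\eta$ and let $p_5(\eta,l)$ denote the probability of a polychromatic five-arm crossing of $D(\eta,l)$. By translation invariance (with RSW absorbing boundary distortions), the expected number of cells in $B$ whose center supports such a five-arm configuration is $\gtrsim (l/\eta)^2\, p_5(\eta, l)$. On the other hand, a deterministic topological argument -- exploiting the planar ``blue/yellow'' duality of the model -- shows that five-arm centers inside $B$ may be identified with innermost junctions between a macroscopic blue crossing and a macroscopic yellow crossing of $B$, and that at most a constant number (uniform in $\eta,l,\varepsilon$) of such junctions can coexist. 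Comparing the two bounds forces
\[
p_5(\eta, l) \;\le\; C\,(\eta/l)^{2}.
\]

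For (ii), observe that a six-arm crossing is the disjoint occurrence of a five-arm crossing and of one extra arm of appropriate color from radius $\eta$ to radius $l$, realized on disjoint sites. Invoking Lemma~\ref{bklemma} we obtain
\[
p_6(\eta, l) \;\le\; p_5(\eta, l) \cdot p_1(\eta, l),
\]
where $p_1(\eta,l)$ is the one-arm probability. A standard chaining argument -- interpolating between $\eta$ and $l$ through logarithmically many scales and at each scale invoking RSW to produce a blocking circuit of the opposite color with uniformly positive probability -- gives $p_1(\eta,l) \le C'(\eta/l)^\sigma$ for some $\sigma>0$. Combined with part (i) this yields the desired estimate $p_6(\eta, l) \le C''(\eta/l)^{2+\sigma}$.

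The principal obstacle is the topological step in part (i): proving that the number of five-arm centers in $B$ is uniformly $O(1)$. For the classical triangular site model this is immediate from the exact self-duality of the colorings, but for the correlated bond-triangular model of \cite{cardy} one must verify that a sufficient approximate duality is preserved at the mesoscopic scales where arms are counted, and that it interacts correctly with the BK-type and RSW inputs already established. Once this is in place, the rest of the argument proceeds along classical lines.
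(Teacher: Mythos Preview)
Your treatment of part (ii) is essentially identical to the paper's: split six arms as $5\circ 1$ via the BK--type inequality (Lemma~\ref{bklemma}) and bound the single arm by $(\eta/l)^\sigma$ via RSW. Nothing to add there.

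For part (i) the approaches diverge. The paper does not give a self--contained argument at all: it simply invokes the Kesten \cite{kesten87} ``fencing/corridor'' machinery (and, for the specific correlated model, Lemma~7.3 of \cite{cardy}) to import the $(\eta/l)^2$ bound. Your route is a first--moment argument: tile a box by $\sim(l/\eta)^2$ cells, lower--bound the expected number of five--arm cells by $(l/\eta)^2 p_5$, and upper--bound it by a deterministic $O(1)$ count of ``junctions''. This is an attractive heuristic, but the step you yourself flag is genuinely problematic: the characterization of five--arm points as ``innermost junctions between a macroscopic blue crossing and a macroscopic yellow crossing'' does not pin down a bounded set, and in fact the number of mesoscopic five--arm cells in a box is \emph{not} deterministically $O(1)$ in general---what one can control is an expectation, not a pointwise count. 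The actual mechanism in \cite{kesten87}/\cite{ksz} is different: one writes a difference of order--one crossing probabilities as a telescoping sum of single--site distortions and identifies the increments with (five--arm type) events whose probabilities are rendered comparable by Kesten's arm--relocation arguments; summing $\sim N^2$ comparable terms to an $O(1)$ total yields the $N^{-2}$ bound. So your diagnosis of the obstacle is accurate, but the fix is not ``verify approximate duality for the junction count''---it is to run the Kesten rearrangement argument, which the paper outsources to \cite{kesten87} and \cite{cardy}.
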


\begin{proof}
Let us rescale back so that the lattice spacing is of order unity and the diameter of $\Omega_\varepsilon$ is of order $N$.  Then the five arm event in $D(\eta, l)$ is the event of five crossings between circles of radius $\eta N$ and $l N$.  Approximating by appropriate ``square'' annular regions, the arguments of \cite{kesten87} may be used in generic circumstances (of course some degree of reflection symmetry for the underlying lattice has to be employed and in addition it has been checked that the fencing/corridor arguments in \cite{kesten87} apply) and so the probability of the five arm event in $D(\eta, l)$ is bounded above by a constant times $(\eta/l)^2$.  
For the particular percolation model at hand, such issues were  
dispensed with in the proof of Lemma 7.3 in \cite{cardy}.  To bound the 6--arm event (also the subject of Lemma 7.3 in \cite{cardy} but not handled with ease) we note that if we let $A$ denote the event of one crossing in the annular region, then the probability of $A$ is bounded by $\left(\frac{\eta}{l}\right)^\sigma$, for some $\sigma > 0$, by standard Russo--Seymour--Welsh arguments.  
Then letting $B$ be the event of 5 crossings in the annular region and applying a BK--type inequality to $A \circ B$ (which for the model at hand is given as Lemma \ref{bklemma}) we obtain the desired result.
\end{proof}  

\begin{defn}\label{deftriplevisit}
Let $\Delta_2 > \Delta_1$ (with $\Delta_2 \gg \Delta_1$ envisioned) and let $\gamma: [0, 1] \rightarrow \Omega$ be a curve.  We say that $\gamma$ has a \emph{$\Delta_2$--$\Delta_1$ triple visit} if there are times $t_a < t_1 < t_b < t_2 < t_c < t_3 <  t_d$ such that $\gamma(t_1), \gamma(t_2)$ and $\gamma(t_3)$ all lie within a single $\Delta_1$--neighborhood while $\gamma(t_a), \dots, \gamma(t_d)$ each lie a distance at least $\Delta_2$ from some point in this neighborhood.  For an illustration see Figure \ref{triple_visit}.
\end{defn}

\begin{figure}[ht]
\centering
\vspace{-0.9cm}
\subfigure[Triple visit in the interior]{
\scalebox{0.35}{\input{triple_visit}}
\label{triple_visit}
}
\hspace{1cm}
\vspace{3mm}
\subfigure[Double visit near the boundary]{
\includegraphics[width=54mm]{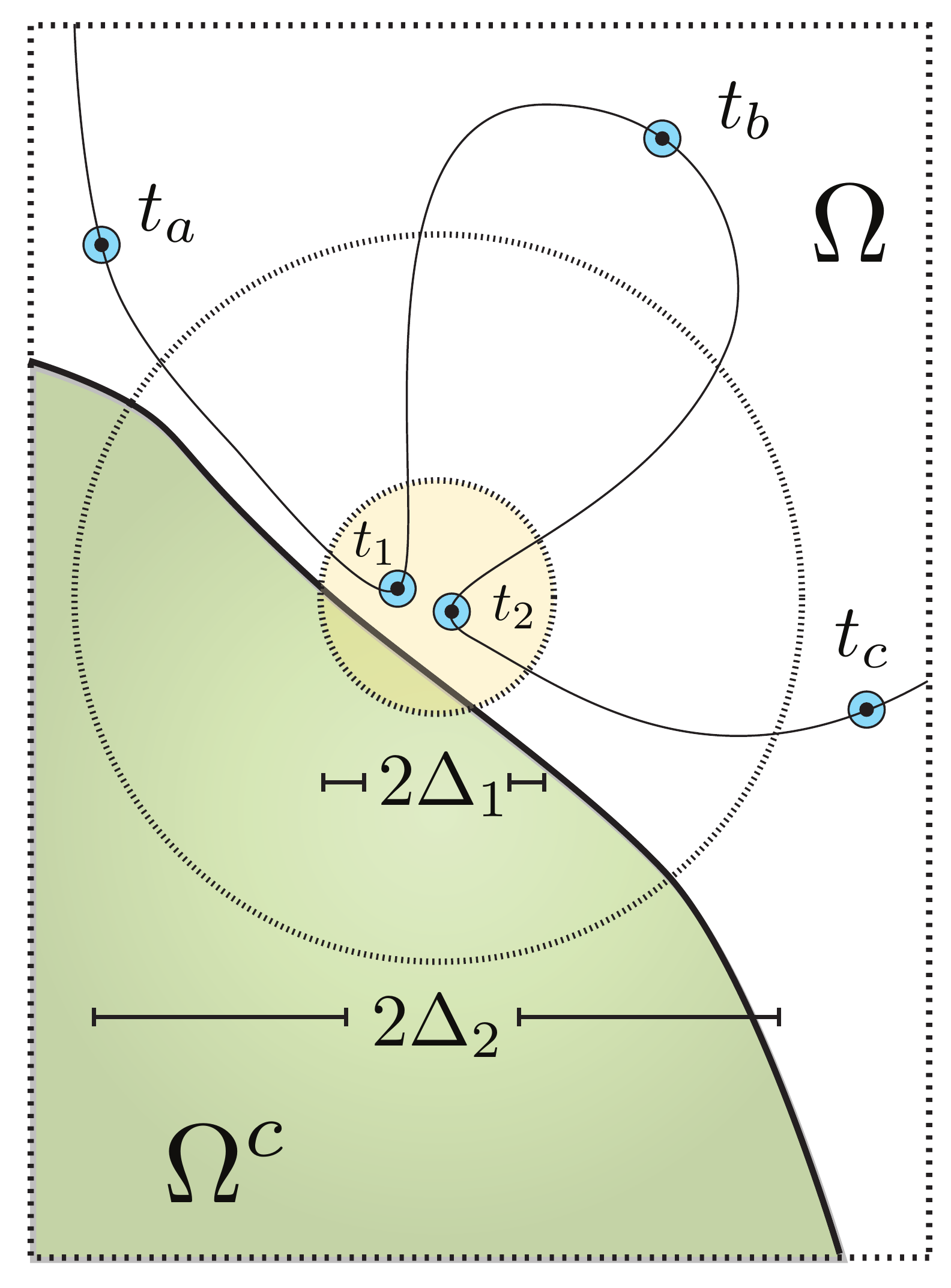}
\label{double_visit}
}
\caption[]{Atypical behavior of $\mu_\e$ curves}\label{triple_double} 
\end{figure}

A direct consequence of Lemma \ref{arms} is the absence of triple visits of the type described in the above definition as the ratio $\Delta_1/\Delta_2$ tends to zero:

\begin{lemma}\label{notriplevisit}
Let $\Omega$ be a domain and let $\Delta_2 \gg \Delta_1 > 0$.  The $\mu^\prime$--probability of a $\Delta_2$--$\Delta_1$ triple visit tends to zero as $\Delta_1/\Delta_2 \rightarrow 0$. 
\end{lemma}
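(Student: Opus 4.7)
The plan is to reduce a triple visit to a multi-arm event in a concentric annulus and then invoke the super-universal estimate from Lemma \ref{arms}. Let $p$ be the point referenced in Definition \ref{deftriplevisit}, so that $\gamma(t_1), \gamma(t_2), \gamma(t_3)$ all lie in $B(p, \Delta_1)$ while $\gamma(t_a), \gamma(t_b), \gamma(t_c), \gamma(t_d)$ each lie outside $B(p, \Delta_2)$. Tracking the excursion structure of $\gamma|_{[t_a, t_d]}$, we see that the curve enters and exits the inner ball $B(p, \Delta_1)$ at least three times, with each excursion between consecutive visits reaching beyond $\partial B(p, \Delta_2)$. Consequently at least six disjoint sub-arcs of $\gamma$ cross the annulus $A(p; \Delta_1, \Delta_2)$ from its inner to its outer boundary: one before $t_1$, an outgoing-incoming pair between $t_1$ and $t_2$, another such pair between $t_2$ and $t_3$, and one final arc after $t_3$.

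Since the Exploration Process traces an interface with blue on one side and yellow on the other, consecutive interface arms (in the cyclic order around the annulus) are separated by alternately blue and yellow monochromatic sectors. The existence of six disjoint interface arms in $A(p;\Delta_1,\Delta_2)$ therefore forces six disjoint monochromatic percolation arms of alternating colors across this annulus, i.e., the six-arm event of Lemma \ref{arms}, whose probability is at most $C(\Delta_1/\Delta_2)^{2+\sigma}$. Covering $\Omega$ by $O(\mbox{diam}(\Omega)^2 \Delta_1^{-2})$ balls of radius $\Delta_1$ and union-bounding over possible centers $p$,
$$
\mu_\varepsilon(\text{$\Delta_2$-$\Delta_1$ triple visit}) \;\lesssim\; \Delta_1^{-2}\left(\frac{\Delta_1}{\Delta_2}\right)^{2+\sigma} = \frac{\Delta_1^{\sigma}}{\Delta_2^{2+\sigma}}.
$$
For $\Delta_2$ fixed (or at least bounded below, which is the regime of interest in the applications of this lemma), this vanishes as $\Delta_1/\Delta_2 \to 0$, and the bound is inherited by $\mu^\prime$ via the weak-$*$ limit exactly as in the proof of Lemma \ref{nodoubleback}.

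The one point requiring extra care is the case when the localization point $p$ lies within $\Delta_2$ of $\partial\Omega$ (or within $\Delta_2$ of the marked boundary points $a$ or $c$), where the interface may terminate on the boundary and the six-arm count collapses. In that regime, a triple visit still forces three disjoint interface arms crossing a half-annulus, and these translate into a polychromatic three-arm event in a half-space. This is controlled by the super-universal half-plane multi-arm estimate mentioned in the introduction, with an exponent strictly larger than $2$, which is again sharp enough to absorb the $\Delta_1^{-2}$ covering factor. The main technical point is really just the combinatorial verification of the six-arm picture (in the bulk) and the three-arm picture (near the boundary) from the definition of a triple visit; granting those, the conclusion is a routine covering union bound applied to Lemma \ref{arms}.
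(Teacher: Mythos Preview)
Your core argument matches the paper's: six interface crossings of the annulus $D(\Delta_1,\Delta_2)$, hence six disjoint polychromatic arms, then the estimate of Lemma~\ref{arms} and a union bound over $O(\Delta_1^{-2})$ locations. One small imprecision: the sectors between consecutive interface arcs are not literally ``monochromatic''; what is monochromatic is the chain of hexagons bordering each interface arc on the appropriate side, and this chain furnishes the required arm. The paper phrases the same deduction as ``twelve arms collapse to at least six'' and notes two topological cases, but your sector picture handles both at once and is actually a bit cleaner.

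The boundary paragraph, however, is both unnecessary and wrong. The paper makes no bulk/boundary distinction here: a $\Delta_2$--$\Delta_1$ triple visit forces six full interface crossings of the annulus regardless of whether the annulus spills outside $\Omega$, since by definition $\gamma(t_a),\dots,\gamma(t_d)$ lie outside $B(p,\Delta_2)$ while $\gamma(t_1),\gamma(t_2),\gamma(t_3)$ lie inside $B(p,\Delta_1)$; the colored arms live in $\Omega\cap D\subset D$, so the full--plane six--arm bound of Lemma~\ref{arms} applies directly. Your fallback claim is also internally broken: a triple visit near $\partial\Omega$ still yields six interface arms (not three), and the half--plane polychromatic three--arm exponent is exactly $2$, not strictly larger---that is precisely the ``superuniversal'' bound quoted in the introduction. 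With exponent $2$ your proposed boundary estimate would give $\Delta_1^{-2}(\Delta_1/\Delta_2)^2=\Delta_2^{-2}$, which does not vanish as $\Delta_1/\Delta_2\to 0$. Simply delete that paragraph; the bulk argument already covers every location.
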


\begin{proof}
A quick sketch of a triple visit scenario in $D(\eta, l)$ yields immediately 6 long disjoint passages of $\gamma(t)$ across the annulus.  Note this can occur in two topologically distinct fashions.  For $\gamma(t)$ a two--sided Exploration Process, na\"{\i}ve counting would yield as many as twelve long arms, but adjacent sides of ``disjoint'' long arms can lead to sharing of (boundary) elements of the process; in the worst possible case, entire adjacent arms can ``collapse''.  However, in either topology, even taking into account all these sharings and collapses, we are still left with six genuinely \emph{disjoint} long arms.  

We have established, in the continuum or lattice approximation, that the six arm event in an annulus $D(\eta, l)$ has probability bounded above by $\left(\frac{\eta}{l}\right)^{2+\sigma}$.  We may divide $\Omega$ (or $\Omega_\varepsilon$) into an overlapping grid of scale $\eta$.  The probability that such an event happens anywhere is therefore bounded above by $(\eta/l)^{2 + \sigma}\left(\frac{1}{\eta^2}\right)=\frac{1}{l^2}\left(\frac{\eta}{l}\right)^\sigma$, so ultimately, the probability of an actual triple visit is zero and the probability of a $\Delta_2$--$\Delta_1$ triple visit indeed tends to zero as $\frac{\Delta_1}{\Delta_2} \rightarrow 0$.
\end{proof}

\begin{remark}\label{nofuturevisit}
We make the following observation for intrinsic interest and for possible future reference: Observe that in one of the topological alternatives, after the second visit to the inner circle, the Exploration Process can immediately delve into the sack created between this visit and the first.  As an \emph{Exploration Process}, $\gamma(t)$ is now forced to perform its third visit \emph{and} escape $D(\eta, l)$ altogether.  The observation of interest is that these forced future visitation events provide, at least on the level of arm estimates, no additional decay after the (deep) visit into the cul--de--sac.  Indeed, six arms are already present at this juncture (all potential additional arms may undergo collapse).
\end{remark}

\begin{defn}\label{defnodoublevisit}
Let $\Omega$ be a domain.  Let $\Delta_2 > \Delta_1$ (with $\Delta_2 \gg \Delta_1$ envisioned) and let $\gamma: [0, 1] \rightarrow \Omega$ be a curve.  We say that $\gamma$ has a \emph{$\Delta_2$--$\Delta_1$ double visit to the boundary} by the obvious modification of Definition \ref{deftriplevisit} (using only $t_a, t_1, t_b, t_2, t_c$ along with the stipulation that at least one of the points $\gamma(t_1)$ or $\gamma(t_2)$ is within distance $\Delta_1$ of $\partial \Omega$).  For an illustration see Figure \ref{double_visit}.
\end{defn}

\begin{lemma}[No Double Visits Near the Boundary]\label{nodoublevisit}  
For any $\Delta_2 > 0$, the probability of a $\Delta_2$--$\Delta_1$ double visit to (anywhere on) the boundary tends to zero as $\Delta_1 \rightarrow 0$.

%There exists $\Delta_2 = \Delta_2(\Delta_1, \Omega)$ with $\Delta_2 \rightarrow 0$ as $\Delta_1 \rightarrow 0$ such that (as $\Delta_1 \rightarrow 0$) the $\mu^\prime$--probability of a $\Delta_2$--$\Delta_1$ double visit to the boundary tends to zero.
\end{lemma}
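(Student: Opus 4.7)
The plan is to adapt the argument of Lemma~\ref{notriplevisit}, making two substitutions: the bulk six--arm estimate is replaced by a half--space multi--arm bound (the second half of the ``superuniversal'' bullet listed in the introduction), and the $2$--dimensional pixel count used there is replaced by a Minkowski--type covering of $\partial\Omega$.

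First I would cover $\partial\Omega$ by balls $B(z_k,\Delta_1)$ with $z_k\in\partial\Omega$; since $M(\partial\Omega)<2$ there is some $M<2$ such that at most $C\,\Delta_1^{-M}$ such balls are needed once $\Delta_1$ is small enough. Any $\Delta_2$--$\Delta_1$ double visit to $\partial\Omega$ is then anchored in (a fixed multiple of) one of these balls, and inside the associated annulus $D(z_k;\Delta_1,\Delta_2)$ the Exploration Process is forced, by Definition~\ref{defnodoublevisit}, to execute at least four disjoint passages from the inner to the outer circle of the annulus.

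Next I would read off from this picture a half--space multi--arm event at $z_k$. As in the interior count of Lemma~\ref{notriplevisit}, each Explorer crossing is two--sided and so carries both colors, and while adjacent arms may share boundary cells or collapse, a careful bookkeeping -- together with the fact that, provided $z_k$ is bounded away from $\{a,c\}$, the arc of $\partial\Omega$ meeting $B(z_k,\Delta_1)$ is monochromatically colored -- will produce at least three genuinely disjoint alternating arms from $C_{\Delta_1}$ to $C_{\Delta_2}$ in the half--plane geometry at $z_k$. The half--space three--arm superuniversal estimate (the boundary analog of Lemma~\ref{arms}, furnished by RSW together with the BK--type bound of Lemma~\ref{bklemma}) then bounds the probability of such an event by $C(\Delta_1/\Delta_2)^{2}$. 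A union bound over the $C\,\Delta_1^{-M}$ candidate balls yields
\[
\mu'\bigl(\text{$\Delta_2$--$\Delta_1$ double visit to }\partial\Omega\bigr)\;\leq\;C\,\Delta_1^{-M}\Bigl(\frac{\Delta_1}{\Delta_2}\Bigr)^{\!2}\;=\;\frac{C\,\Delta_1^{\,2-M}}{\Delta_2^{\,2}},
\]
which tends to zero as $\Delta_1\to 0$ for each fixed $\Delta_2>0$. Small neighborhoods of $a$ and $c$ themselves, where the boundary coloring switches, can be handled separately via the tightness and no--doubleback estimates of Lemmas~\ref{tightness} and~\ref{nodoubleback}.

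The main technical obstacle I anticipate is the honest extraction of three disjoint alternating half--plane arms when $\partial\Omega$ is merely of Minkowski dimension less than $2$: at scale $\Delta_1$ the local geometry near $z_k$ need not resemble a straight line, so one must justify that the half--space RSW/arm bound genuinely applies at balls centered on a rough boundary and that the portion of $\partial\Omega$ used to close up the arm count lies entirely in one boundary arc. Both points are of exactly the same character as difficulties already negotiated in Lemma~\ref{nodoubleback} and Lemma~\ref{arms}, and should be dispatched by the same RSW/distortion devices together with the control of admissible discretizations from \cite{pt2}.
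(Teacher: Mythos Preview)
Your approach and the paper's are genuinely different, and the obstacle you flag at the end is not a technicality but the heart of the matter.

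The half--space three--arm bound $(\Delta_1/\Delta_2)^2$ is a \emph{lattice--level} statement about arms near a straight (or at least locally smooth) boundary; the Aizenman difference--of--crossings argument that produces it takes place in a genuine half--plane.  At a point $z_k$ on a boundary of Minkowski dimension close to $2$, the region $\Omega\cap D(z_k;\Delta_1,\Delta_2)$ need not resemble a half--annulus at any scale between $\Delta_1$ and $\Delta_2$: the boundary may wander through the annulus many times, or enter it as a thin filament so that the region is essentially a full annulus with slits.  In such geometries there is no a priori reason the three--arm exponent should be $2$; indeed the full--plane polychromatic three--arm exponent is $2/3$, and with that value your union bound $\Delta_1^{-M}(\Delta_1/\Delta_2)^{2/3}$ diverges for $M>2/3$.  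The sentence ``should be dispatched by the same RSW/distortion devices'' does not actually point to a mechanism: RSW gives only \emph{some} positive exponent, not the exponent $2$ your counting needs, and distortion theorems act on conformal maps, which you have not introduced.

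The paper circumvents this entirely by working at the level of the continuum limit rather than the lattice.  It observes that the boundary three--arm event can be written as a difference of crossing probabilities of conformal rectangles lying inside $\Omega$; since Cardy's Formula has already been established, the limiting probabilities are conformally invariant.  One then uniformizes $\varphi:\mathbb D\to\Omega$ and transports the question to the disc, where the boundary \emph{is} smooth and the half--space $(p_1/p_2)^2$ bound is legitimate.  The only remaining issue is to control how the $\Delta_1$--neighborhood of $\partial\Omega$ pulls back, and here Koebe/Bieberbach give $\varphi^{-1}(\mathcal N_{\Omega,\Delta_1})\subset\mathcal N_{\mathbb D,C\sqrt{\Delta_1}}$, so the probability is at most of order $\Delta_1/p_2(\Delta_2)^2\to 0$.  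Note that no Minkowski covering or union bound over boundary balls enters at all; the argument uses only Cardy's Formula and distortion, and in particular would apply to any domain for which Cardy's Formula is available.
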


\begin{proof}
First we observe that if the Exploration Process has a $\Delta_2$--$\Delta_1$ double visit to the boundary, then this implies at least a 3--arm event on the scale of $\Delta_2: \Delta_1$ near the boundary.  This three--arm event can be viewed as the difference of crossing probabilities of certain conformal rectangles, all of which are contained in $\Omega$; the limiting probabilities of these events are therefore \emph{conformally invariant} and, furthermore, can be viewed under a single conformal map.  

The problem on the unit disc follows from well--known estimates: If $\mathcal N_{\mathbb D, p}$ denotes the $p$ neighborhood of the boundary in $\mathbb D$ then, as $\varepsilon \rightarrow 0$, the probability of a three--arm event between $\mathcal N_{\mathbb D, p_1}$ and $\mathcal N_{\mathbb D, p_2}^c$ is of the order $(p_1/p_2)^2$.  For percolation domains with smooth boundaries, this follows from the \emph{a priori} $1/N^2$ power law estimates described in \cite{aiz_chi} and \cite{lsw2}.  (The idea of proof is straightforward.  In brief: Consider the easy way crossing of an $N$ by $2kN$ box.  This probability is markedly larger than the similar probability in an $N$ by $kN$ box with both probabilities of order unity.  The difference between these two probabilities can be written as a telescoping sum, with each increment corresponding to a single site distortion, the vast majority of which leading to a three arm event in the half space -- the contributions from sites near the boundary are negligible.  This implies on the order of $N^2$ three arm events, each of which can be shown to happen with comparable probability by the rearrangement arguments of Kesten \cite{kesten87}.  Since the sum of all these probabilities is of order unity, the result follows). 

Let us then consider the uniformization map $\varphi: \mathbb D \rightarrow \Omega$.  We denote by $p_2 = p_2(\Delta_2, \Omega)$ the distance between $[\varphi^{-1}(\mathcal N_{\Omega, \Delta_2})]^c$ and $\partial \mathbb D$.  Obviously $p_2$ is independent of $\Delta_1$, therefore it is sufficient that the image of $\mathcal N_{\Omega, \Delta_1}$ is contained in a neighborhood of $\partial \mathbb D$ whose girth vanishes as $\Delta_1 \rightarrow 0$.  In particular and more than adequate it can be shown that $f(\mathcal N_{\Omega, \Delta_1}) \subset \mathcal N_{\mathbb D, C(\Omega) \sqrt{\Delta_1}}$: Indeed, by the Bieberbach Distortion theorem, $|\varphi'(z)|\geq |\varphi'(0)|(1-|z|)/4$.
By the Koebe 1/4-theorem,
$\mbox{dist}(\varphi(z),\partial\Omega)\geq1/4(1-|z|)|\varphi'(z)|\geq 1/16
|\varphi'(0)|(1-|z|)^2$. This implies the required estimate with
$C(\Omega)=4/\sqrt{|\varphi'(0)|}$.

\end{proof}

\begin{remark}\label{no_full_doublevisit}
The above estimates apply equally to the situation when the tip of the Exploration Process has ``just'' performed a double visit; i.e., the time $t_c$ in Definition \ref{nodoublevisit} is in fact superfluous.  This situation is analogous to the forced future triple visitations discussed in Remark \ref{nofuturevisit}.  As in these cases, the ostensible extra arms that the continuation of the journey might generate are susceptible to collapse and cannot be counted, while the estimates are already sufficient without these arms.  
%We will, on occasion, refer to these as ``none fully developed'' double visits.
\end{remark}

\subsection{Limit is Supported on L\"oewner Curves}\label{aizenman}

Here we provide a proof of Lemma \ref{tightness}, i.e., any limit point of the $\mu_\varepsilon$'s is supported on L\"oewner curves.  Our proof will utilize three additional lemmas, but first we must discuss crosscuts.  

As alluded to several times before, we envision $\Omega$ as the conformal image of the upper half plane via some map $\phi: \mathbb H \rightarrow \Omega$.  The prime end $a$ is defined in the usual fashion as the set of all limit points of sequences $\phi(z_n)$, $z_n \rightarrow z_a$, where $z_a \in \mathbb R$ is fixed.  Alternatively, consider 
\[ A_k = \overline{\phi(\{ |z - z_a| \leq 1/k, \mbox{Im}z > 0\})},\]
then the prime end $a$ can be defined as $\cap_k A_k$.  
We define similar quantities for $c$ and call them $C_k$.  Finally let us also define $\gamma^{k}_\varepsilon$ to be the curve formed by $\gamma_\varepsilon$ from the last exit from $A_k$ to the first entrance into $C_k$ after this last exit from $A_k$ (here $\gamma_\e$ denotes a generic $\mu_\e$ curve).  We remark that for finite $k$, with non--zero probability, $\gamma_\varepsilon$ \textit{will} form multiple crossings of the region $\Omega_k \equiv \Omega \setminus (A_k \cup C_k)$, but this probability tends to zero as $k \rightarrow \infty$, as can be seen by applying Cardy's Formula (or by using Russo--Seymour--Welsh type arguments, c.f.~the proof of Lemma \ref{admissibility}).

\begin{lemma}\label{curves}
Consider the domain $\Omega_k$ and let $\mu^\prime_k$ be a limit point of the measures on the curves $\gamma^{k}_\varepsilon$.  Then the $\mu^\prime_k$'s are supported on H\"older continuous curves.  Moreover, the weak convergence to $\mu^\prime_k$ can be taken with respect to the topology defined by the sup--norm distance between curves.
\end{lemma}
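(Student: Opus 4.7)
The plan is to apply the Aizenman--Burchard framework of \cite{ab} directly to the family $\{\gamma^{k}_\varepsilon\}$ viewed as random curves in the truncated domain $\Omega_k$. The reason we can do so here (in contrast to the full domain $\Omega$) is that excising the neighborhoods $A_k$ and $C_k$ places both endpoints of $\gamma^{k}_\varepsilon$ at macroscopic distance from the prime ends $a$ and $c$; the remaining boundary of $\Omega_k$ consists of the two circular arcs $\partial A_k \cap \Omega$ and $\partial C_k \cap \Omega$ together with macroscopic pieces of $\partial \Omega$, and this is ``regular enough'' for the sup-norm conclusions of \cite{ab}.

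The input required by Aizenman--Burchard is a power-law bound, with divergent exponent, on the probability that a family curve makes many disjoint crossings of a dyadic annulus. The first step is to verify this: for any $x\in\Omega_k$ and any annulus $D(x;\eta,l)$ contained in $\Omega_k$, a disjoint crossing of this annulus by the two-sided interface $\gamma^{k}_\varepsilon$ forces the presence of alternating blue and yellow arms across the annulus. The counting carried out in the proof of Lemma \ref{notriplevisit} (and commented upon in Remark \ref{nofuturevisit}) shows that even after all possible collapses, $N$ disjoint crossings by the interface produce at least $2N$ genuinely disjoint alternating arms. For $N=3$ the six-arm estimate in Lemma \ref{arms} immediately yields a decay of $(\eta/l)^{2+\sigma}$; for larger $N$, one combines additional BK-type decouplings (Lemma \ref{bklemma}) with the single-arm RSW bound $(\eta/l)^{c}$ to produce an exponent $\lambda(N)$ that tends to infinity as $N\to\infty$. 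The no-doubleback estimate of Lemma \ref{nodoubleback} provides the complementary control that a macroscopic portion of $\gamma^{k}_\varepsilon$ cannot approach a distinct (parametrically distant) portion of itself without the paths actually meeting, which is precisely what is needed to ensure that the parametric description is not destroyed upon passage to the limit.

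With these estimates in place, \cite{ab} yields both conclusions of the lemma at once. Under their natural parametrization (normalized by diameter), the laws of $\gamma^{k}_\varepsilon$ form a tight family in the topology of uniform convergence, upgrading the automatic Hausdorff-topology precompactness. Every weak$^*$-limit point $\mu^\prime_k$ is then supported on curves that are H\"older continuous in this parametrization with a deterministic H\"older exponent $\alpha(k)>0$, and the subsequential convergence takes place in the sup-norm distance.

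The step I expect to require the most care is the arm-counting in the verification of the Aizenman--Burchard hypothesis: because adjacent passages of a two-sided interface may share interface elements, a na\"ive doubling of the crossing count is not quite correct, and one must argue as in Lemma \ref{notriplevisit} that after accounting for worst-case collapses, $2N$ disjoint alternating arms nevertheless survive. Everything else is a direct transcription of the machinery of \cite{ab}, with the single-arm RSW bound and the multi-arm estimates of Lemma \ref{arms} supplying the inputs specific to our percolation model.
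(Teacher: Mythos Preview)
Your approach is the same as the paper's: verify hypothesis H1 of \cite{ab} by showing that the probability of many disjoint traversals of an annulus by $\gamma^k_\varepsilon$ decays with an exponent tending to infinity, the decay coming from BK--type decoupling (Lemma \ref{bklemma}) of the resulting arms. Two remarks are in order.

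First, you restrict attention to annuli $D(x;\eta,l)$ \emph{contained in} $\Omega_k$, but H1 must also be checked for circular shells that intersect $\partial\Omega_k$. This is exactly where the truncation to $\Omega_k$ earns its keep, and the paper makes the point explicitly: for outer radius below some $R_k$, any such shell intersected with $\Omega_k$ can meet boundary of only one color (the two--colored boundary occurs only at the excised prime ends $a$ and $c$). With monochrome boundary the arm--counting survives---$N$ traversals still force order $N$ disjoint monochrome arms of the opposite color---and repeated applications of Lemma \ref{bklemma} give the required decay. You gesture at boundary regularity in your first paragraph but do not make this observation; without it the boundary case is not covered.

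Second, several of your inputs are superfluous here. The six--arm bound of Lemma \ref{arms} and the collapse analysis from Lemma \ref{notriplevisit} are not needed: one only needs that $N$ traversals produce a number of disjoint arms growing without bound in $N$, after which BK plus the single--arm RSW estimate already yield an exponent $\lambda(N)\to\infty$. Likewise, Lemma \ref{nodoubleback} plays no role in verifying H1; \cite{ab} delivers H\"older continuity and sup--norm tightness directly from H1.
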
  

\begin{proof}
These claims follow from the result of \cite{ab}.  
We claim that on $\Omega_k$, the curves $\{\gamma^{k}_ \varepsilon\}$ satisfy hypothesis H1 of \cite{ab}, namely:  
The probability of multiple crossings of circular shells (intersected with 
$\Omega_k$) goes to zero as the multiplicity gets large.  This is clear if  we consider circular shells with the outer radius sufficiently small, dependent on $k$.  Indeed, for $R$ less than some $R_k$, there is no possibility of both blue and yellow boundary inside $\Omega_k$ intersected with the corresponding circular shell.  Thus we must only rule out many crossings of $\gamma^{k}_\varepsilon$ of the circular shell either in the presence of no boundary or in the presence of a monochrome boundary -- with the rate of decay which increases to infinity with the number of traversals.  These estimates follow from straightforward repeated applications of the BK type inequality, which, for the model at hand, is proved in Lemma \ref{bklemma}.
\end{proof}

For the next lemma, we need another definition.  We say that we have a \emph{jump} of magnitude (at least) $\ell$ if 
\[ \gamma^{k+\ell}_\varepsilon \cap (\Omega_\varepsilon \setminus (A_k\cup C_k)) \neq \gamma_\varepsilon \cap (\Omega_\varepsilon \setminus (A_k \cup C_k)).\]
For an illustration see Figure \ref{D}.

\begin{figure}[htp]
\centering
\vspace{-5mm}
\includegraphics[width=0.80 \textwidth]{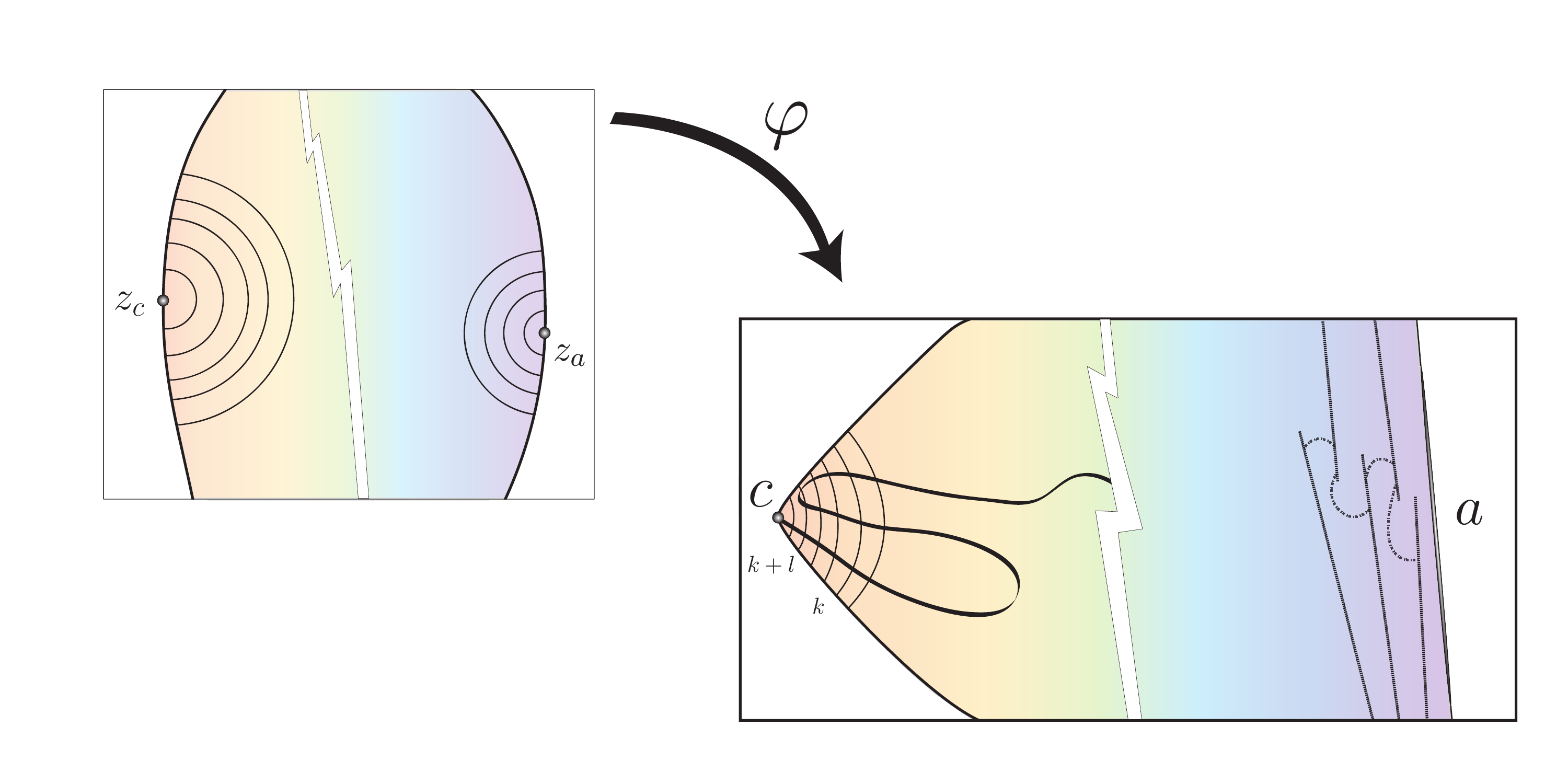}
\vspace{-5mm}
\caption{\footnotesize{
A jump of magnitude $l$ occurring in the vicinity of the prime end $c$.}}
\label{D}
\end{figure}

\begin{lemma}\label{jumps}
For every $k$, as $\varepsilon \rightarrow 0$, the magnitude of the jumps stay bounded with probability one.  
\end{lemma}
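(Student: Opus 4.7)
The plan is to identify a ``jump of magnitude at least $\ell$'' near the prime end $a$ with a multi--colored three--arm event across the annular region $A_k \setminus A_{k+\ell}$, and then to bound it by an expression vanishing as $\ell \to \infty$, uniformly in $\varepsilon$. (Jumps near $c$ are handled symmetrically.) Writing $\tau_j$ for the last exit time of $\gamma_\varepsilon$ from $A_j$, the inclusion $A_{k+\ell} \subseteq A_k$ forces $\tau_{k+\ell} \leq \tau_k$, and an inspection of the definition shows that $\gamma_\varepsilon^{k+\ell} \cap (\Omega_\varepsilon \setminus (A_k \cup C_k))$ differs from $\gamma_\varepsilon \cap (\Omega_\varepsilon \setminus (A_k \cup C_k))$ in a neighborhood of $a$ exactly when $\gamma_\varepsilon$ makes an excursion outside $A_k$ during the interval $(\tau_{k+\ell}, \tau_k)$. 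Such an excursion forces at least three disjoint crossings of the annular region $A_k \setminus A_{k+\ell}$: the outward crossing at $\tau_{k+\ell}$, a return crossing back into $A_k$, and the final outward crossing at $\tau_k$. Together with the colored sides of the Exploration Process this constitutes a three--arm event, modulo the collapses discussed in Remark \ref{nofuturevisit}.

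To estimate this three--arm probability I would transfer the picture to the upper half plane via the uniformizing map $\phi : \mathbb H \to \Omega$. By the very definition of $A_k$ (given at the beginning of this subsection), the preimages of $A_k$ and $A_{k+\ell}$ are semi--discs at $z_a \in \mathbb R$ of radii $1/k$ and $1/(k+\ell)$ respectively, and the resulting event is the analogous three--arm event in a half--annulus of aspect ratio $(k+\ell)/k$ at the real axis. The standard $1/N^2$ boundary three--arm estimate invoked in the proof of Lemma \ref{nodoublevisit} (cf.\ \cite{aiz_chi, lsw2}) then gives, uniformly in $\varepsilon$,
\[
\mu_\varepsilon(J_\ell) \;\leq\; C\!\left(\frac{k}{k+\ell}\right)^{\!2},
\]
where $J_\ell$ denotes the event of a jump of magnitude at least $\ell$ (combining both ends $a$ and $c$ up to the factor of two). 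Since enlarging $\ell$ can only extend $\gamma_\varepsilon^{k+\ell}$ backwards in time, and thereby only incorporate more early excursions, the events are nested: $J_{\ell+1} \subseteq J_\ell$. Hence $\mu_\varepsilon(\bigcap_\ell J_\ell) = \lim_\ell \mu_\varepsilon(J_\ell) = 0$ uniformly in $\varepsilon$; the same bound carries over to any limit point $\mu^\prime$, and we conclude that the magnitude of the jumps stays bounded with probability one.

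The main obstacle in executing the plan is the faithful conversion of the topological ``jump'' configuration into a genuinely disjoint colored three--arm event, with careful accounting for the collapse of adjacent arm segments into the sides of the Exploration Process itself. This is essentially the same analysis carried out for Lemma \ref{nodoublevisit}, and relies on the same two ingredients: the BK--type inequality (Lemma \ref{bklemma}), together with the $1/N^2$ boundary arm estimate, both of which are already at our disposal.
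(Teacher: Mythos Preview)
Your argument is defensible but takes a considerably heavier route than the paper. The paper simply observes that a jump forces a single monochrome crossing of the conformal rectangle $(A_k\setminus A_{k+\ell})^\circ$, whose modulus tends to infinity with $\ell$; Cardy's Formula (Lemma~\ref{cardy_formula}) then gives a bound $\delta_{k,\ell}\to 0$ directly, with no arm--counting and no appeal to the $1/N^2$ half--space estimate. Your detour through a three--arm event is not wrong, but it buys nothing here and introduces two avoidable subtleties. First, the prime end $a$ is the meeting point of the blue and yellow boundary arcs, so within $A_k\setminus A_{k+\ell}$ both boundary colors are present; the collapse bookkeeping is therefore more delicate than the generic monochrome--boundary situation of Lemma~\ref{nodoublevisit}, and your sketch does not address this. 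Second, the assertion that the bound $C\bigl(k/(k+\ell)\bigr)^2$ holds ``uniformly in $\varepsilon$'' is too strong as stated: the $1/N^2$ estimate is a lattice statement in a flat half--space, and transferring it to $\Omega$ goes through conformal invariance of the \emph{limiting} crossing probabilities (exactly as in the proof of Lemma~\ref{nodoublevisit}), so what you actually get is a bound on $\limsup_{\varepsilon\to 0}\mu_\varepsilon(J_\ell)$. That suffices for the conclusion, but you should say so. In short, your proof can be completed, but the paper's one--crossing argument via Cardy's Formula is both shorter and cleaner.
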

\noindent 
\begin{proof}
The modulus of the conformal rectangle $(A_k \setminus A_{k+\ell})^\circ$ tends to infinity as $\ell \rightarrow \infty$ (with $\ell \gg k$ envisioned).  We observe that in the event of a jump there must be a crossing of this conformal rectangle.  As $\varepsilon \rightarrow 0$, we may utilize Cardy's formula to show that the probability of such a crossing is bounded by some constant $\delta_{k, \ell}$ which tends to zero as $\ell \rightarrow \infty$, i.e., as $\varepsilon \rightarrow 0$, the probability of jumps of unbounded magnitude is zero.  Analogous arguments hold for the $C_k$'s.
\end{proof}    

We are now ready to prove that $\mu^\prime$ is supported on L\"oewner curves.

\begin{proof}[Proof of Lemma \ref{tightness}] 
We first establish that any limiting measure $\mu^\prime$ is supported on curves from $a$ to $c$.  By Lemma \ref{jumps}, a $\mu^\prime$ generic set intersected with $\Omega \setminus (A_k \cup C_k)$ is the same as $\mu_{k+\ell}^\prime$ generic curves (these objects are curves by Lemma \ref{curves}) intersected with $\Omega \setminus (A_k \cup C_k)$ for some $\ell$.  The family of domains $\Omega \setminus (A_k \cup C_k)$ is monotone and exhaustive, and hence $\mu^\prime$ is concentrated on curves.  By Lemma \ref{jumps} again, these curves are crosscuts from $a$ to $c$.  

To show that these are L\"owner crosscuts it is enough to show that they almost surely satisfy conditions (L1) and (L2).  Consider a parametrization of $\gamma$ with non--vanishing speed.  It is not difficult to see that a violation of (L1) implies that there exists some point $z_0$ which is visited at least three times if $z_0$ is in the bulk or twice if $z_0$ is on the boundary.  We remind the reader that this is in the continuum; at the lattice level, our collisions could represent approaches which are microscopically large but macroscopically small e.g., a sublinear power of $N$.  

Such an encounter in the interior leads to a triple visit and thus has vanishing probability, by Corollary \ref{notriplevisit}.   If $z_0$ is $\eta(\varepsilon)$--close to the boundary, $\eta \rightarrow 0$,  violation of (L1) implies a double visit below/at $z_0$.  As $\varepsilon \rightarrow 0$, this has vanishingly small probability, by Lemma \ref{nodoublevisit}.  Finally, a violation of (L2) is equivalent to the existence of some severe doubling back (e.g.~at scales $\delta(\varepsilon)$, $\eta(\varepsilon)$, with $\eta/\delta \rightarrow 0$), as defined in Definition \ref{defdoubleback} and therefore is forbidden by Lemma \ref{nodoubleback}.  
\end{proof}

We are now prepared to define the $\mathbf{Dist}$ function alluded to in the previous section.  
\begin{defn}\label{frechet}
Let $\lambda_\ell > 0$ be fixed numbers that satisfy $\sum_\ell \lambda_\ell = 1$, e.g., $\lambda_\ell = 2^{-\ell}$.  If $\gamma_r$ and $\gamma_g$ are two curves in $\Omega$ from $a$ to $c$, we denote, as before, $\gamma_r^\ell$ (or $\gamma_r^{\ell, \varepsilon}$) the appropriate portion of the curve in $\Omega_\ell$, etc.  Let $d_\ell(\gamma_r, \gamma_g)$ denote the usual sup norm distances between $\gamma_r^\ell$ and $\gamma_g^\ell$.  Then we define
\[ \mathbf{Dist}(\gamma_r, \gamma_g) = \sum_\ell \lambda_\ell d_\ell (\gamma_r, \gamma_g).\]
\end{defn}

As a corollary, we have weak$^*$ convergence of $\mu^\e$ to $\mu^\prime$ with respect to the topology provided by the \textbf{Dist} norm:

\begin{proof}[Proof of Lemma \ref{Dist_conv}]
For any finite $k$, we have by the result of \cite{ab} that $\mu_k^\prime$ is the weak$^*$ sup--norm limit of the objects $\mu_{k, \e}$, which are measures on the curves $\{\gamma^{k}_\e\}$.  It only remains to be seen that once two curves in $\Omega_k$ are close for $k$ large, then they remain close uniformly in $k$, but this is a property which follows directly from the definition of \textbf{Dist}.
\end{proof}

\subsection{Preservation of $M(\partial \Omega) < 2$}\label{preservation}

Here we show that if we start with some domain $\Omega$ with boundary Minkowski dimension less than two, then the Exploration Process also yields a curve with Minkowski dimension less than two. 

\medskip
\begin{proof} [Proof of Lemma \ref{admissibility}] Let $z \in \text{Int} (\Omega)$ and $g_\delta(z)$ the box of radius $\delta$ surrounding $z$ and $D(z)$ denote the distance between $z$ and $\partial \Omega$.  We claim that there is some $\psi > 0$ such that for all $\varepsilon$ sufficiently small, 
\[ \mathbb P_\varepsilon (\mathbb X_t^\varepsilon \in g_\delta(z)) < C_2 \left(\frac{\delta}{D}\right)^\psi\]
where $C_2$ is a constant.  

This follows from Russo--Seymour--Welsh theory, which we do here in some detail.  Indeed, if $r < s$, let $A_{s, r}(z) \equiv B_s(z) \setminus B_r(z)$ denote the annulus centered at $z$, where, if necessary, the sides are approximated, within $\varepsilon$, by the lattice structure.  Assume temporarily that $A_{s, r}(z) \subset \text{Int}(\Omega)$.  Clearly, if there is both a yellow and a blue ring in $A_{s, r}$, then $\mathbb X_t^\varepsilon$ cannot possibly visit $B_r(z)$ (since the yellow portion of $\mathbb X_t^\varepsilon$ cannot penetrate the blue ring and similarly with yellow $\leftrightarrow$ blue).  Now by the Russo--Seymour--Welsh estimates alluded to (Theorem 3.10, item (iii) in \cite{cardy} for the model at hand) the probability of a blue ring in $A_{M, \lambda M}$ is bounded below uniformly in $\varepsilon$ by a strictly positive constant that depends only on $\lambda$.  Let $\eta > 0$ denote a lower bound on the probability that in $A_{4L, 3L}$ there is a blue ring and in $A_{3L, 2L}$ a yellow.  Now let $k$ satisfy $2^k  > \varepsilon^{-1} D > 2^{k-1}$ and similarly $2^\ell > \varepsilon^{-1} \delta > 2^{\ell-1}$.  Then, give or take, there are $k-\ell$ independent annuli in which the pair of rings described can occur.  The probability that all such ring pair events fail is less than $C_1 (1 - \eta)^{k-\ell} \leq C_2 \left(\frac{\delta}{D}\right)^\psi$, where $C_1$ and $C_2$ are constants and $\psi > 0$ is defined via $\eta$.

Let us fix a square grid of scale $\delta$ with $\varepsilon << \delta << 1$.  Let $\mathcal N_\delta$ denote the number of boxes of scale $\delta$ that are visited by the process.  We claim that for all $\varepsilon$ sufficiently small
\begin{equation}\label{fore} \mathbb E_\varepsilon(\mathcal N_\delta) \leq C_{\psi^\prime} \left(\frac{1}{\delta}\right)^{2-\psi^\prime} \hspace{-3mm} = C_{\psi^\prime} n^{2-\psi^\prime},\end{equation}
where $\psi^\prime > 0$ is a constant and $n = n_{\delta} = \delta^{-1}$ represents the characteristic scale of $\Omega$ on the grid of size $\delta^{-1}$.  In particular we may take $\psi^\prime < \min\{\psi, \theta\}$, where $\theta \in [0,1]$ describes the roughness of the boundary: $M(\partial \Omega) = 2-\theta$.  

Let $n_k$ denote the number of boxes a distance $k\delta$ (i.e., $k$ boxes distant) from $\partial \Omega$ and 
\[N_l = \sum_{k \leq l} n_k.\]
Our first claim is that for all $\delta$,
\begin{equation}\label{boxes}N_l < C_{\theta^\prime} n^{2-\theta^\prime} l^{\theta^\prime}, \end{equation}
for any $\theta^\prime < \theta$, where $C_{\theta^\prime}$ is a constant.  To see this, let us estimate the total area of boxes on a grid of size $\sigma$ intersected by or within one unit of $\partial \Omega$.  It is not hard to see that this is bounded by $C_{\theta^\prime} \times \left(\frac{1}{\sigma}\right)^{2-\theta^\prime} \hspace{-2mm} \times \sigma^2 = C_{\theta^\prime}{\sigma^{\theta^\prime}}$, where $C_{\theta^\prime}$ is a constant which is uniform for a fixed $\theta^{\prime} < \theta$.  Taking $\sigma = l\delta$ and noting that \emph{these} boxes contain all of the $n_1 + \dots + n_l$ boxes of scale $\delta$ (i.e., boxes within $l$ units of $\partial \Omega$), the claim follows.

Now, clearly, 
\[ \mathbb E_\varepsilon (\mathcal N_\delta) \leq C_2 \sum_{k = 1}^{l_{\text{max}}} n_k \cdot \left(\frac{1}{k}\right)^\psi.\]
Let us now dispense with the sum in the display.  Summing by parts, we get
\[ \sum_{k = 1}^{l_{\text{max}}} n_k \left(\frac{1}{k}\right)^\psi = N_{l_{\text{max}}} l_{\text{max}}^{-\psi}  + \sum_{k = 1}^{l_{\text{max}}-1} N_k \left(\frac{1}{k^{\psi}}  - \frac{1}{(k+1)^\psi}\right).\]
Now if $\psi > \theta$, then $\psi > \theta^\prime$.  Using Eq.~(\ref{boxes}) and pulling out an $n^{2-\theta^\prime}$ , the sum is convergent.  Meanwhile, the first term (again using the estimate in Eq.~(\ref{boxes})) is smaller.  Conversely, if $\psi \leq \theta$, then both terms are of order $n^{2-\theta^\prime} l_{\text{max}}^{\theta^\prime - \psi}$ and the result follows if we take $l_{\text{max}} = n$.  It is re--emphasized that the estimate in Eq.~(\ref{fore}) is uniform in $\varepsilon$; by further sacrifice of the constant, we may claim that Eq.~(\ref{fore}) holds for all box--scales in the range $[\delta, 2\delta]$.
 
The remaining argument is now immediate.  Letting $\delta_k = 2^{-k}$ we have that for any $\delta \in [\delta_{k+1}, \delta_{k}]$ and $s >0$
\begin{equation}
\mathbb P_{\varepsilon}(\mathcal N_{\delta} 
> C_{\psi^{\prime}}n_{\delta}^{2-\psi^{\prime}+s})
\leq \frac{1}{2^{ks}}.
\end{equation}
The result follows, for any $s > 0$, by taking $\varepsilon \to 0$ and summing over $k$. 
\end{proof}

\section{The Model}\label{model_specific}

\subsection{Review of Model}
\label{model}
Here we give a quick description of the model under study.  For more details see Section 2.2 of \cite{cardy}.  The model takes place on the hexagon tiling of the 2D triangular site lattice: hexagons are yellow, blue and sometimes split; half and half.  Connectivity for us is defined by adjacent shapes (of the same color) sharing an edge segment in common.  Our description of the model starts with a particular local arrangement of hexagons: 

\begin{defn}
A \emph{flower} is the union of a particular hexagon with its six neighbors.  The central hexagon we call an \emph{iris} and the outer hexagons we call \emph{petals}.  We number the petals from 1 to 6, starting from the one directly to the right of the iris.  All hexagons which are not flowers will be referred to as \emph{filler}.
\end{defn}

\noindent Let $\Omega \subset \mathbb C$ be a domain, which for simplicity we may regard as being a finite connected subset of the hexagon lattice.  A \emph{floral arrangement}, symbolically denoted $\Omega_{\mathfrak F}$, is a designation of certain hexagons as irises (this determines the flowers).  There are three restrictions on placement of irises: (i) no iris is a boundary hexagon, (ii) there are at least two non--iris hexagons between each pair of irises, and (iii) ultimately in infinite volume the irises have a periodic structure with $60^\circ$ symmetries.

We are now ready to define the statistical properties of our model.

\begin{figure}[htp]
\centering
\vspace{-2mm}
\includegraphics[width=0.52 \textwidth]{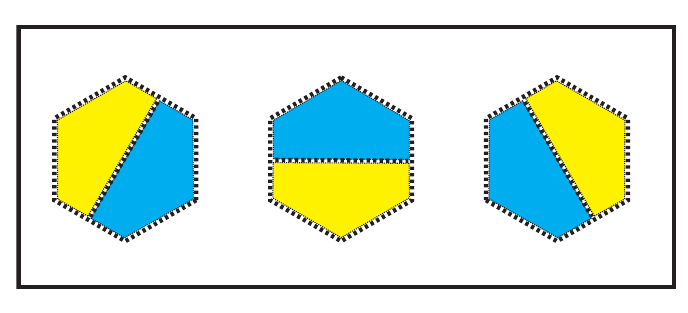}
\caption{\footnotesize{
The three allowed ``split'' states of the hexagon.  Note that these correspond to single bond occupancy events in the corresponding up--pointing triangle in the bond--triangular lattice percolation problem.}}
\label{A}
\end{figure}

\begin{defn}
Let $\Omega$ be a domain with floral arrangement $\Omega_\mathfrak F$.
\begin{itemize}
\item Petals and hexagons in the complement of flowers are only allowed to be blue or yellow, each with probability 1/2.
\item For ``most'' configurations of petals, irises can be blue, yellow, or mixed (one of three ways c.f., Figure \ref{A}) with probabilities $a$, $a$, or $s$, so that $2a + 3s = 1$ and in addition, 
\[ a^2 \geq 2s^2.\]  
\item The exceptional configurations of petals, which we call \emph{triggers}, are configurations where there are three yellow and three blue petals, with one pair of blue (and hence also yellow) petals contiguous.  In these configurations, the irises can now only be blue or yellow, each with probability 1/2.  
\end{itemize}  

Note that triggering is the only source of (very short range) correlation in this model; everything else is configured independently.  It is worth noting that for each floral arrangement, we have a one--parameter family of critical models with $s=0$ reducing to the usual site percolation on the triangular lattice.
\end{defn} 

Finally, it is remarked that the total of five possible configurations on a hexagon correspond to the eight possible configurations on (up--pointing) triangles -- of which there are five distinct connectivity classes.  It is not hard to see, by checking local connectivity properties, that the model described is a representation of a correlated percolation model on the triangular bond lattice.  

It was shown in \cite{cardy} Theorem 3.10 that our model exhibits all the typical properties of a 2D percolation model at criticality.  Cardy's formula for this model was the main result of \cite{cardy} (Theorem 2.4).  More specifically, let $\Omega \subset \mathbb C$ be a domain with piecewise smooth boundary which is conformally equivalent to a triangle.  Let us denote the three boundaries and ``prime ends'' of interest by $\mathcal A, c, \mathcal B, a, \mathcal C, b$, in counterclockwise order.  We endow $\Omega$ with an approximate discretization (with hexagons) on a  lattice of scale $\varepsilon = 1/N$ and a floral arrangement $\Omega_{\mathfrak F_\varepsilon}$.  Let $z$ be the vertex of a hexagon in $\Omega_{\mathfrak F_\varepsilon}$.  We define the discrete crossing probability function $u_\varepsilon^Y(z)$ to be the indicator function of the event that there is a blue path connecting $\mathcal A$ and $\mathcal B$, separating $z$ from $\mathcal C$, with similar definitions for $v_\varepsilon^Y(z)$ and $w_\varepsilon^Y(z)$ and the blue versions of these functions.  Then taking the scaling limit in an appropriate fashion (for more details see Section 2.3 of \cite{cardy}), we have, e.g.
\[ \lim_{\varepsilon \rightarrow 0} u_\varepsilon^Y = u,\]
where $u$ is one of the so--called Carleson--Cardy function: It is harmonic, and on the up--pointing equilateral triangle with base $\mathcal C$ being the unit interval, it is equal to $\frac{2}{\sqrt{3}}\cdot y$ -- this is equivalent to Cardy's formula.  The functions $v$ and $w$ are defined similarly.   

\subsection{The Exploration Process}\label{explorer}
We now give a (microscopic) definition of the percolation Exploration Process tailored to our system at hand.  We must start with a precise prescription of how to construct our domains.  Let $\Omega$ be a domain as described.  Let $a$ and $c$ be two prime ends and consider hexagons of the $\varepsilon$--tiling of $\mathbb C$.  It is assumed that within this tiling (with fixed origin of coordinates) the locations of all irises/flowers/fillers are predetermined.  We define $\Omega_\varepsilon$ to be the union of all fillers and flowers whose closure lies in the interior of $\Omega$.  It is assumed that $\varepsilon$ is small enough that both $a$ and $c$ are in the same lattice connected component of the tiling.  Other components, if any, will not be discarded but will only play a peripheral r\^ole.  With the exception of flowers, the boundary of the domain will be taken as the usual internal lattice boundary, which consists of the points of the set which have neighbors not belonging to the set.  If the lattice boundary cuts through a flower, then the whole flower is included as part of the boundary.  The notation for this lattice boundary will be $\partial_\varepsilon \Omega_\varepsilon$.  

Consider points $a_\varepsilon$, $c_\varepsilon$ which are on $\partial_\varepsilon \Omega_\varepsilon$ and are vertices of hexagons.  We call $(\Omega_\varepsilon, \partial\Omega_\varepsilon, a_\varepsilon, c_\varepsilon)$ \emph{admissible} if 
\begin{itemize}
\item $\Omega_\varepsilon$ contains no partial flowers.
\item $\partial_\varepsilon \Omega_\varepsilon$ can be decomposed into two lattice connected sets consisting of hexagons and/or halves of boundary irises, one of which is colored blue and one of which is colored yellow, such that $a_\varepsilon$ and $c_\varepsilon$ lie at the points where the two sets join and such that the blue and yellow paths are valid paths following the connectivity and statistical rules of our model; in particular, the coloring of these paths do not lead to flower configurations that have probability zero. 
\item $a_\varepsilon$ and $c_\varepsilon$ lie at the vertices of hexagons, such that of the three hexagons sharing the vertex, one of them is blue, one of them is yellow, and the third is in the interior of the domain. (See Figure \ref{B}.)
\end{itemize}

\begin{figure}[htp]
\centering
\vspace{-2mm}
\includegraphics[width=0.52 \textwidth]{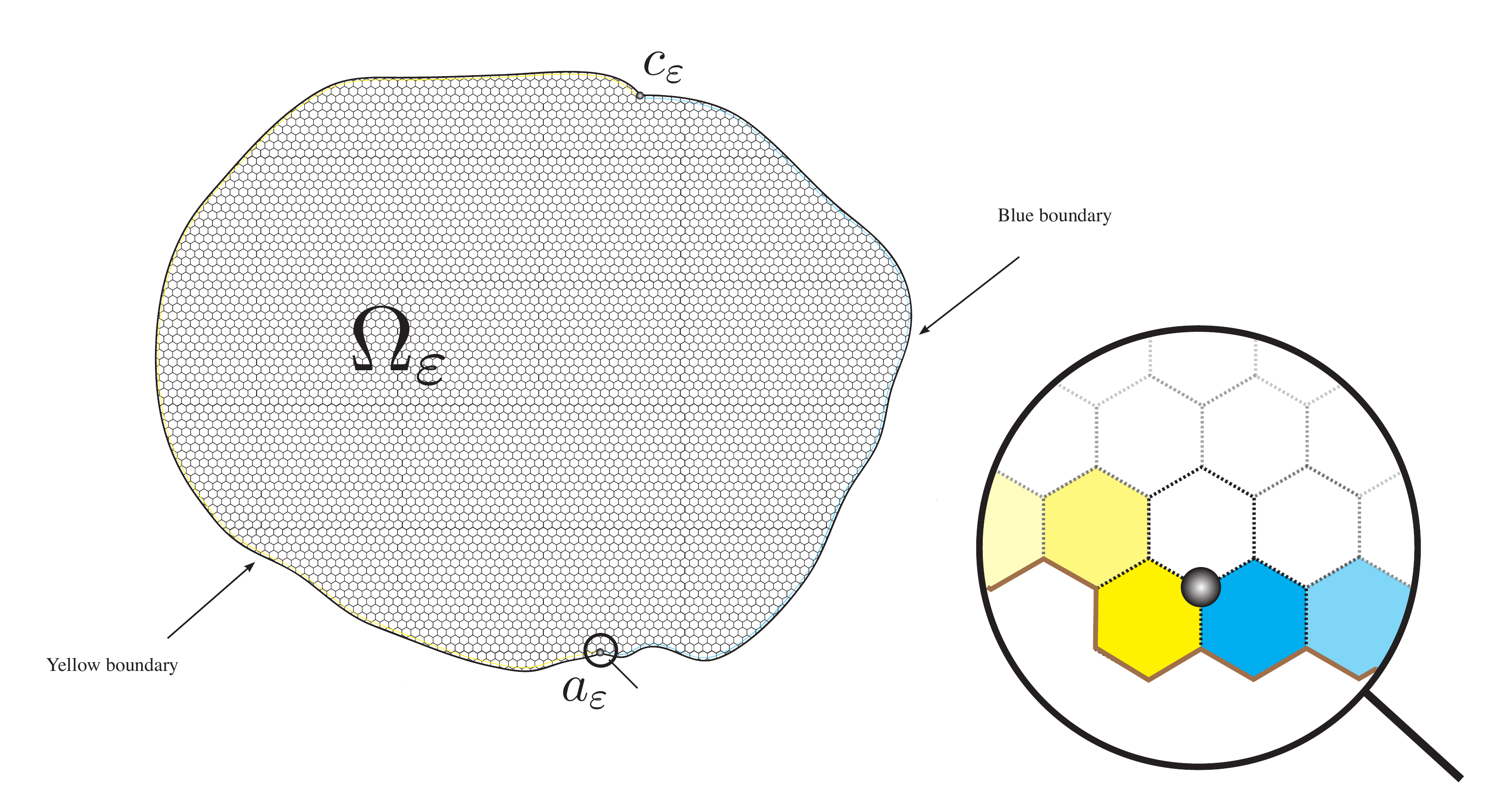}
\caption{\footnotesize{
The setup for the definition of the Exploration Process.}}
\label{B}
\end{figure}

We remark that in the case of boundary flowers (and other sorts of clusters on the boundary) it is not necessary to color \textit{all} the hexagons/irises.  Indeed the coloring scheme need not be unique -- it is only required that a boundary coloring of the requisite type can be selected.

It is not hard to see that the domains 
$(\Omega_\varepsilon, \partial_\varepsilon \Omega_{\varepsilon}, a_\varepsilon, c_\varepsilon)$ converges to $(\Omega, \partial \Omega, a, c)$ in the sense that $\partial_\varepsilon \Omega_\varepsilon$ and $\Omega_\varepsilon$ converge respectively to $\partial \Omega$ and $\Omega$ in the Hausdorff metric and in the Caratheodory metric with respect to any point inside $\Omega$.  Also, there exists $a_\varepsilon$ and $c_\varepsilon$ which converge respectively to $a$ and $c$ as $\varepsilon \rightarrow 0$.  Notice that the latter convergence is really in terms of the preimages under the uniformization map of the relevant domain.  In some sense we have chosen the ``simplest'' discretization scheme, which, in the companion work \cite{pt2} will be called the \emph{canonical} approximation; of course other discretizations are possible, but in the interest of brevity we shall not discuss these in the present work.

Geometrically, the \emph{Exploration Process} produces, in any percolation configuration on $\Omega_{\varepsilon}$, the unique interface connecting $a_\varepsilon$ to $c_\varepsilon$, i.e.,the curve separating the blue lattice connected cluster of the boundary from that of the yellow.  We denote this interface by $\gamma_\varepsilon$.  Dynamically, the exploration \emph{process} is defined as follows: Let $\mathbb X_0^\varepsilon = a_\varepsilon$.  Given $\mathbb X_{t-1}^\varepsilon$, it may be necessary to color new hexagons in order to determine the next step of the process.  (In particular, $\mathbb X_{t-1}^\varepsilon$ is ``usually'' at the vertex of a hexagon which has not yet been colored.)  We color any necessary undetermined hexagons according to the following rules:
\begin{itemize}
\item If the undetermined hexagon is a filler hexagon, we color it blue or yellow with probability 1/2.
\item If the undetermined hexagon is a petal or an iris, we color it blue or yellow or mixed with the conditional distribution given by the hexagons of the flower which are already determined. 
\item  If a further (petal) hexagon is needed, it is colored according to the conditional distribution given by the iris and the other hexagons of the flower which have already been determined.
\end{itemize}

\noindent We are now ready to describe how to determine $\mathbb X_{t}^\varepsilon$:
\begin{itemize}
\item If $\mathbb X_{t-1}^\varepsilon$ is not adjacent to an iris, $\mathbb X_t^\varepsilon$ will be equal to the next hexagon vertex we can get to in such a way that blue is always on the right of the segment 
$[\mathbb X_{t-1}^\varepsilon, \mathbb X_{t}^\varepsilon]$.
\item If $\mathbb X_{t-1}^\varepsilon$ is adjacent to an iris, then the state of the iris is determined as described above, after which the exploration path can be continued (keeping blue on the right) until a petal is hit.  The color of the petal will now be determined (according to the proper conditional distribution) and $\mathbb X_{t}^\varepsilon$ will equal one of the two possible vertices common to the iris and the new petal which keeps the blue region to the right of the final portion of the segments joining $\mathbb X_{t-1}^\varepsilon$ to $\mathbb X_{t}^\varepsilon$.
\end{itemize}
In particular, it is noted that at the end of each step, we always wind up on the vertex of a hexagon (see Figure \ref{thru}).  We denote by $(\gamma_\e)_t$ the actual value taken by the random variable $\mathbb X^\varepsilon_t$.   

\begin{figure}
\centering
\vspace{-2mm}
\includegraphics[width=0.65 \textwidth]{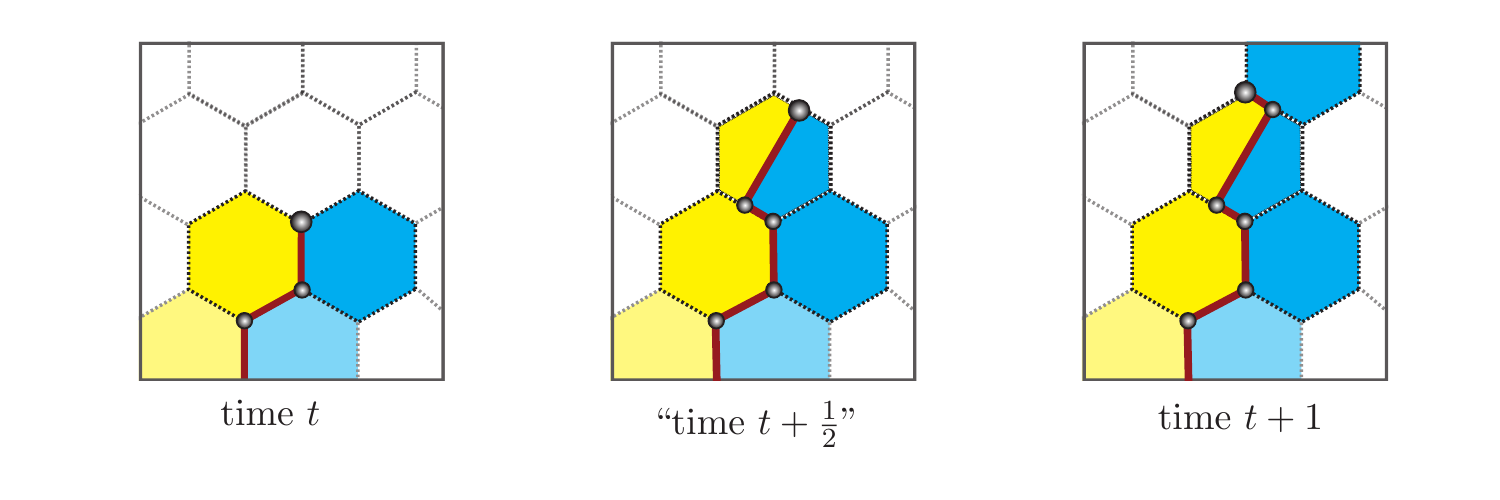}
\vspace{-5mm}
\caption{\footnotesize{
``Multistep'' procedure by which the Exploration Process gets through a mixed hexagon.}}
\label{thru}
\end{figure}

We state without proof some properties of our Exploration Process.

\begin{prop}
Let $\gamma_\varepsilon([0, t])$ be the line segments formed by the process up till time $t$, and $\Gamma_\varepsilon([0, t])$ the hexagons revealed by the Exploration Process.    
Let $\partial_\varepsilon \Omega_\varepsilon^t = \partial_\varepsilon \Omega_\varepsilon \cup \Gamma_\varepsilon([0, t])$ and let $\Omega_\varepsilon^t = \Omega_\varepsilon \setminus \Gamma_\varepsilon([0, t])$.  
Then, the quadruple  $(\Omega_{\varepsilon}^{t}, \partial_\varepsilon \Omega_\varepsilon^t, \mathbb X_{t}^\varepsilon, c_\varepsilon)$ is admissible.
Furthermore, the Exploration Process in $\Omega_\varepsilon^t$ from $\mathbb X_{t}^\varepsilon$ to 
$c_\varepsilon$ has the same law as the original Exploration Process from
 $a_\varepsilon$ to $c_\varepsilon$ in $\Omega_{\varepsilon}$ conditioned on $\Gamma_\varepsilon([0, t])$.  
 \end{prop}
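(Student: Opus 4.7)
The plan is to verify the three admissibility bullets in $\S\ref{explorer}$ one at a time, and then derive the Markov statement from the fact that the underlying measure is a product over distinct flowers and filler hexagons.

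For admissibility, the first bullet---no partial flowers in $\Omega_\e^t$---is handled by agreeing that whenever the Exploration Process touches any hexagon of a flower, the entire flower is regarded as lying in $\Gamma_\e([0,t])$. The extra petals beyond those actually needed to continue the path can be sampled from the correct conditional distribution without affecting any coloring decision governing $\gamma_\e$; this is an independent ``lookahead'' which leaves the law of the curve intact. The second bullet is established by induction on $t$: at each Exploration step the newly colored hexagons split into a blue group placed immediately to the right of the new segment and a yellow group placed immediately to the left, so after $t$ steps the blue hexagons in $\Gamma_\e([0,t])$ form a single lattice connected string running from $a_\e$ to $\mathbb X_t^\e$, and similarly for the yellow hexagons. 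Attaching these strings to the blue and yellow arcs of $\partial_\e \Omega_\e$ at $a_\e$ produces the required two-coloring of $\partial_\e \Omega_\e^t$, joining at $\mathbb X_t^\e$ and $c_\e$; the positivity of every conditional flower probability used along the way (built into the definition of the model in $\S\ref{model}$) ensures compatibility with the statistical rules. The third bullet follows directly from the definition of an Exploration step: after each move, the tip sits at a vertex whose three incident hexagons include one freshly placed blue hexagon on the right, one freshly placed yellow on the left, and a third hexagon whose revelation has not yet been forced (even by the flower-completion step above), hence lying in the interior of $\Omega_\e^t$. Admissibility at $c_\e$ is inherited from the hypothesis on the original domain.

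For the Markov statement, the central observation is that under the measure of $\S\ref{model}$, the joint law of the hexagons in $\Omega_\e$ is a product over distinct flowers (each carrying its own internal trigger correlations) and distinct filler hexagons. The Exploration Process reveals this randomness sequentially, at each stage sampling the next filler or the next flower from its conditional law given everything already uncovered. By the product structure, this conditional distribution factors: on previously untouched flowers and fillers it reduces to their unconditional distributions, while on already-touched flowers it is the appropriate conditional on the iris and petals exposed so far. Consequently, conditional on $\Gamma_\e([0,t])$, the law of the hexagons in $\Omega_\e^t$ is precisely the model's product measure restricted to the slit domain. Running the Exploration Process afresh from $\mathbb X_t^\e$ to $c_\e$ in $\Omega_\e^t$ is, by definition, the procedure that samples these remaining hexagons in exactly this sequential conditional manner, so the two laws coincide.

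The main obstacle, and the only place where a bit of care is required, is the lookahead used in the first admissibility bullet: one must confirm that treating each touched flower as fully revealed does not alter the distribution of the curve or the validity of the subsequent sampling. This follows from the independence of distinct flowers together with the fact that the untouched petals of a touched flower enter no future coloring decision made by the path dynamics. Once the lookahead is in place, both admissibility and the Markov identity reduce to routine sequential-sampling bookkeeping on the product measure.
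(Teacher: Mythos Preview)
The paper explicitly declines to prove this proposition (``We state without proof some properties of our Exploration Process''), so there is no argument in the text to compare against; your sketch is more than the authors supply. The Markov portion of your argument is sound: the measure of \S\ref{model} factors over distinct flowers and filler hexagons, and sequential conditional sampling on that product structure gives exactly the identity claimed.

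There is, however, a genuine tension in your treatment of admissibility that you paper over rather than resolve. Your flower--completion (``lookahead'') convention, introduced to secure the first bullet, directly undermines your verification of the third. Concretely: after a flower step, the paper places $\mathbb X_t^\e$ at one of the two vertices common to the iris and the freshly colored petal. The third hexagon incident to that vertex is the \emph{adjacent petal of the same flower}. Under your convention this petal has already been absorbed into $\Gamma_\e([0,t])$, so it lies in $\partial_\e\Omega_\e^t$, not in the interior of $\Omega_\e^t$. Your parenthetical ``(even by the flower--completion step above)'' is therefore false precisely in the flower case it was meant to address. Conversely, if you drop the lookahead, that adjacent petal sits in $\Omega_\e^t$ and you have a partial flower there, violating the first bullet.

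Two honest repairs are available. One is to weaken the third bullet to ``the third hexagon is not yet colored'' rather than ``is in the interior,'' which is consistent with the paper's own remark that boundary flowers need not have all hexagons colored; you would then need to check that the Exploration rules still make sense when the forward hexagon is an uncolored boundary petal (they do, since its conditional law given the revealed iris and petals is well defined). The other is to argue that one may always run the process forward until the tip exits the current flower before declaring a stopping time $t$, and restrict the proposition to such $t$; this costs nothing for the downstream applications in \S\ref{smir_prog}, which only use the Markov identity at macroscopic times. Either fix is short, but one of them must be made explicit for the admissibility claim to stand.
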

 
 \subsection{A Restricted BK--Inequality}
Here we will prove an inequality that will be needed for proofs in several other places.

Suppose $A$ and $B$ are two events.  Then the BK inequality \cite{bk} states that (for suitable probability spaces) the probability of the \emph{disjoint} occurrence of $A$ and $B$ is bounded above by the product of their probabilities.  The most general version of this is Reimer's inequality \cite{reimer} (see also \cite{bcr} for more background and a self--contained proof), which holds for arbitrary product probability spaces.  For the model at hand, we do not have a product probability space; Reimer's inequality would, in the present context, yield the desired result only for \emph{flower} disjoint events.  Unfortunately, we have need of a stronger statement; specifically, for disjoint path--type events where the individual paths may use the same flower.  In fact, as the following example demonstrates, a general BK inequality does not hold in our system.  However, as we later show, an abridged version holds for path--type events.

\begin{ex}
Let $A$ be the event of a blue connection between petals 1, 4, and 5 (without any requirement on the color of the petals 1, 4, and 5), and let $B = \{\mbox{petals 1, 4, 5 are blue}\}$.  Observe that $B$ and $B^c$ are defined entirely on the petals 1, 4, 5, whereas $A$ is defined on the complementary set.  Therefore we have $A \cap B^c = A \circ B^c$.  By Example 6.1 of \cite{cardy}, we know that $\mathbb P(A \cap B) < \mathbb P(A) \mathbb P(B)$.  But this immediately implies that $\mathbb P(A \circ B^c) > \mathbb P(A) \mathbb P(B^c)$. 
\end{ex}

Before tending to the detailed analysis of flowers, let us first introduce the notion of disjoint occurrence for non--negative random variables.

\begin{defn}
Let $a_i, b_j \geq 0$ and let
\[ X = \sum_1^n a_i \mathbf 1_{A_i}, \hspace{3mm} Y = \sum_1^m b_j
\mathbf 1_{B_j},\] 
where $A_i \cap A_k = \emptyset$ for $i \neq k$ and $B_j \cap B_l = \emptyset$ for $j \neq l$.
We define
\[ X \circ Y = \sum_{i, j} a_i b_j \mathbf 1_{A_i \circ B_j}.\]
If the usual BK inequality holds then linearity immediately gives
\[ \mathbb E(X \circ Y) \leq \mathbb E(X) \mathbb E(Y).\]
\end{defn}
We will be working with this slight generalization; what we have in mind is the hexagon disjoint occurrence of paths, and in the
case of paths of different colors, sharing of the iris may occur.  To be precise, we have the following definition:

\begin{defn}\label{disjoint}
Let $\Omega_\mathfrak{F}$ denote a flower arrangement and let $S$ and $T$ denote sets in $\Omega_\mathfrak{F}$ which contain no irises.  Let $X^b_{S, T}$ denote the indicator of the event that all hexagons in $S$ and $T$ are blue and that there is a blue path -- possibly including irises -- connecting $S$ and $T$.  Similarly we define $X^y_{S, T}$ to be the yellow version of this event.  Now if $S^\prime$ and $T^\prime$ are two other sets of $\Omega_\mathfrak{F}$ which are disjoint from $S$ and $T$ and also do not contain irises, then we may define $X^b_{S, T} \circ X^b_{S^\prime, T^\prime}$ in accord with the usual fashion.  However, for present purposes, in the event corresponding to $X^b_{S, T} \circ X^y_{S^\prime, T^\prime}$, the two paths may share a mixed iris.   
\end{defn}

\begin{lemma}\label{bklemma}
Let $X_{S_1, T_1}^{\ell_1}, X_{S_2, T_2}^{\ell_2}, \dots, X_{S_n, T_n}^{\ell_n}$ be the indicator functions of path--type events as described in Definition \ref{disjoint}, where $\ell_i \in \{b, y\}$, then 
\[ \mathbb E(X_{S_1, T_1}^{\ell_1} \circ X_{S_2, T_2}^{\ell_2} \circ \dots \circ X_{S_n, T_n}^{\ell_n}) \leq \mathbb E(X_{S_1, T_1}^{\ell_1}) \mathbb E(X_{S_2, T_2}^{\ell_2}) \dots \mathbb E(X_{S_n, T_n}^{\ell_n}).\]
\end{lemma}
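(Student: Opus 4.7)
The plan is to reduce the statement to the classical van den Berg--Kesten--Reimer inequality for product probability spaces, exploiting the fact that the only correlations in the model are local to each flower and, for the critical sub-step, invoking the model constraint $a^2 \geq 2s^2$. By induction on $n$, it suffices to treat the case $n = 2$. The case $\ell_1 = \ell_2$ is easier, since hexagon-disjointness forbids iris sharing entirely and a straightforward Reimer-type argument on the conditionally independent structure (given petals) applies. The substantive case is $X^b_{S_1, T_1} \circ X^y_{S_2, T_2}$, where the two witnesses may share a mixed iris.

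First I would expand each event as a union over minimal witness paths---specific sequences of hexagons through which the connection runs---and then condition on the configuration $\mathcal P$ of all petals in $\Omega_{\mathfrak F}$. Given $\mathcal P$, the iris states are mutually independent, and the probability of a given witness pair occurring factors over (i) the fillers each path traverses (independent Bernoulli $1/2$), (ii) the petals (deterministic given $\mathcal P$, and used on disjoint hexagons by the two witnesses), and (iii) the irises (mutually independent given $\mathcal P$).

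The key step is a \emph{per-iris inequality}: for a flower with four distinct petals $i_1, j_1, i_2, j_2$ and any petal configuration $\mathcal P_f$,
\[
P\!\left(\text{iris blue-usable via } (i_1, j_1) \text{ AND yellow-usable via } (i_2, j_2) \mid \mathcal P_f\right)
\leq P\!\left(\text{blue-usable via } (i_1, j_1) \mid \mathcal P_f\right)\cdot P\!\left(\text{yellow-usable via } (i_2, j_2) \mid \mathcal P_f\right).
\]
In a trigger configuration the left-hand side vanishes, since the iris is then forced to be monochrome. In a non-trigger configuration, with iris probabilities $(a, a, s, s, s)$, the LHS is at most $s$ (at most one mixed orientation has $\{i_1, j_1\}$ in the blue half and $\{i_2, j_2\}$ in the yellow half), while each factor on the right is at least $a + s$ (the monochrome contribution plus the one shared mixed state, if any). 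The resulting scalar inequality $s \leq (a + s)^2 = \bigl((1 - s)/2\bigr)^2$ simplifies, using $a = (1 - 3s)/2$, to $s^2 - 6s + 1 \geq 0$, i.e.\ $s \leq 3 - 2\sqrt{2}$---which is precisely the model constraint $a^2 \geq 2s^2$. This perfect matching is striking and retroactively justifies the parameter restriction imposed in the definition of the model.

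Finally, with the per-iris inequality in hand, I would combine it flower-by-flower to obtain, for any admissible witness pair $(\pi^b, \pi^y)$,
\[
P(\pi^b \text{ and } \pi^y \text{ both occur} \mid \mathcal P) \leq P(\pi^b \mid \mathcal P)\cdot P(\pi^y \mid \mathcal P);
\]
summing over witness pairs and invoking Reimer's inequality for the remaining genuine product structure (filler colors and petal colors), then integrating over $\mathcal P$, delivers the $n = 2$ bound. The hard part will be the per-iris inequality above: verifying the exact correspondence with $a^2 \geq 2s^2$ and handling the various geometric arrangements of the four distinguished petals (adjacent, skip-one, opposite) in the full case analysis, including triggers. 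The bookkeeping in the assembly step---in particular, tracking that petals of a flower used by only one witness can still influence the iris distribution seen by the other witness---is routine once the per-iris bound is in place.
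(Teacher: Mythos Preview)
Your identification of the decisive scalar inequality $s \le (a+s)^2$ (equivalently $a^2 \ge 2s^2$) is exactly right, and it is precisely the crux in the paper as well. However, the assembly step you sketch has a genuine gap. Once you ``expand each event as a union over minimal witness paths'' and establish $P(\pi^b \text{ and } \pi^y \mid \mathcal P) \le P(\pi^b \mid \mathcal P)\,P(\pi^y \mid \mathcal P)$ witness-pair by witness-pair, summing over all pairs $(\pi^b,\pi^y)$ yields $\bigl(\sum_{\pi^b} P(\pi^b \mid \mathcal P)\bigr)\bigl(\sum_{\pi^y} P(\pi^y \mid \mathcal P)\bigr)$, which wildly overcounts $P(X\mid\mathcal P)\,P(Y\mid\mathcal P)$. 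You then appeal to Reimer's inequality ``for the remaining product structure'', but Reimer applies to \emph{events}, not to the non-indicator random variables you obtain after averaging over irises; and it is not clear how to package your witness-path bounds into an event-level disjoint-occurrence statement on the filler/petal space. The bookkeeping you call ``routine'' is in fact where the real content lies.

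The paper sidesteps this by never expanding into witness paths. It conditions on the entire non-iris configuration $\sigma$ (petals \emph{and} filler), writes each conditional expectation $\mathbb E_I(X\mid\sigma)$ as a simple function $\sum_i a_i \mathbf 1_{A_i}(\sigma)$ taking only the values $\{1,\,a+2s,\,1/2,\,a+s,\,0\}$, and extends the $\circ$ operation to such simple functions via $X\circ Y = \sum_{i,j} a_i b_j \mathbf 1_{A_i \circ B_j}$. The per-flower analysis then shows the pointwise inequality $\mathbb E_I(X\circ Y\mid\sigma) \le \mathbb E_I(X\mid\sigma)\circ \mathbb E_I(Y\mid\sigma)$ by matching the five level-sets on the left against (disjoint occurrences of) level-sets on the right; the only nontrivial match is $\mathcal F(X\circ Y)\subset A_1(X)\circ A_1(Y)$ with coefficient comparison $s\le (a+s)^2$. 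Because the $A_i$, $B_j$ are genuine \emph{events} in the product space of $\sigma$, one now applies ordinary BK/Reimer termwise to $\mathbb E_\sigma[\cdot]$ and linearity finishes. This device---lifting $\circ$ to simple random variables so that BK can be applied after the iris has been integrated out---is the missing idea in your outline. The induction from one flower to many, and from two paths to $n$, then proceeds as you suggest, though one must check that $X_1\circ\cdots\circ X_{n-1}$ still behaves like a path-type event for the purposes of the per-flower case analysis.
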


\noindent
\begin{proof} 
Our proof is slightly reminiscent of the proof of Lemma 6.2 in \cite{cardy}.  Let $\sigma$ denote a configuration of petals and filler and let $I$ denote a configuration of irises.  We will use induction; first we prove the statement for the case of exactly one flower (i.e., supposing there is only one flower in all of $\Omega_\mathfrak F$) and two path events,  whose indicator functions we denote by $X$ and $Y$. We write
\[ \mathbb E(X \circ Y) = \mathbb E_\sigma [\mathbb E_I (X \circ Y
| \sigma)].\] If we can show that $\mathbb E_I (X \circ Y |
\sigma) \leq \mathbb E_I(X | \sigma) \circ \mathbb E_I(Y |
\sigma)$, then we may apply the BK--inequality to the outer expectation to yield the desired result since, on the outside, the measure is independent.  It is clear that the function $\mathbb E(X \circ Y | \sigma)$ can only take on five different values; we write
\begin{equation}\label{XinY} \begin{split} \mathbb E(X \circ Y | \sigma) &= 1
\cdot \mathbf 1_{\mathcal O(X \circ Y)}(\sigma) \\
&+ (a + s) \cdot \mathbf 1_{A_1(X \circ Y)}(\sigma)\\
&+ (1/2) \cdot \mathbf 1_{A_2(X \circ Y)}(\sigma)\\
&+  (a + 2s) \cdot \mathbf 1_{A_3(X \circ
Y)}(\sigma)\\
&+ s \cdot \mathbf 1_{\mathcal F(X \circ
Y)}(\sigma),\end{split}\end{equation} where e.g.
\[\mathcal O(X \circ Y) = \{\sigma \mid \mathbb E(X \circ Y |
\sigma) = 1\}.\]  

It is not difficult to see that $\mathcal O(X \circ Y)$ is the set of $\sigma$ configurations where $X \circ Y$ has occurred on the complement of the iris.  The remaining terms warrant some discussion.  We first point out that these terms correspond to configurations where the flower is pivotal for the achievement of at least one of $X$ and $Y$, and, due to the nature of the events in question, petal arrangements in these configurations satisfy certain constraints.  For instance, configurations in $A_3$ must exhibit a petal arrangement such that one of the paths is in a position where it must transmit through the iris, which can be accomplished by the preferred color or two of the split configurations; the flower must \emph{not} be in a triggering configuration and, needless to say, the other path has already occurred (independent of the iris).  

Finally we observe that $\sigma \in \mathcal F(X \circ Y)$ implies that both paths must use the iris and therefore can only occur when the paths in question have different colors.  It is not hard to see, via petal counting, that $\mathcal F(X \circ Y)$ forces the alternating configuration of petals and that indeed, we have a situation of a ``parallel transmission'' through the iris, with exactly
one iris configuration which achieves both desired transmissions. We
also note that in similar expressions for $\mathbb E(X | \sigma)$
and $\mathbb E(Y | \sigma)$, the corresponding terms $\mathcal
F(X)$ and $\mathcal F(Y)$ will be empty, since e.g., if the path is blue and some iris is capable of achieving the transmission, then certainly the pure blue iris will achieve the transmission.

Let us expand $\mathbb E(X| \sigma) \circ \mathbb E(Y | \sigma)$ in the sense defined above:
\begin{equation}\label{XoutY} \begin{split} \mathbb E(X |
\sigma) \circ \mathbb E(Y | \sigma) &= 1 \cdot \mathbf
1_{\mathcal O(X) \circ \mathcal O(Y)}(\sigma) \\&+ (a + s)\cdot
[\mathbf 1_{\mathcal O(X) \circ A_1(Y)}(\sigma) + \mathbf 1_{A_1(X)
\circ \mathcal O(Y)}(\sigma)]\\&+ (1/2)\cdot [\mathbf 1_{\mathcal
O(X) \circ A_2(Y)}(\sigma) + \mathbf 1_{A_2(X)\circ \mathcal
O(Y)}(\sigma)]\\&+ (a + 2s) \cdot [\mathbf 1_{\mathcal O(X) \circ
A_3(Y)}(\sigma) + \mathbf 1_{A_3(X) \circ \mathcal
O(Y)}(\sigma)]\\&+ (a+s)^2 \cdot [\mathbf 1_{A_1(X) \circ
A_1(Y)}(\sigma)]\\&+ \mathcal R(a, s, \sigma),\end{split}\end{equation} where
$\mathcal R(a, s, \sigma)$ contains all the remaining terms in the expansion,
e.g. the terms
\begin{equation}\label{null}(1/2)(a+s)\cdot[\mathbf 1_{A_1(X) \circ
A_2(Y)}(\sigma) + \mathbf 1_{A_2(X) \circ
A_1(Y)}(\sigma)]\end{equation} and
\begin{equation}\label{useless}(a+s)(a+2s) \cdot [\mathbf 1_{A_1(X) \circ
A_3(Y)}(\sigma) + \mathbf 1_{A_3(X) \circ
A_1(Y)}(\sigma)].\end{equation} We claim that Eq.(\ref{null}) will
evaluate to zero for each $\sigma$: In the first term, $A_1(X)$ requires
that the petals exhibit a configuration which precludes a trigger
and $A_2(Y)$ requires the petals to exhibit a configuration which
leads to a trigger, and similarly for the second term.  The terms in Eq.(\ref{useless}) may or may not evaluate to zero for
all $\sigma$ \emph{a priori}, but in any case will not be needed.

Now we match up the terms in Eq.(\ref{XinY}) and (\ref{XoutY}) and
demonstrate that indeed $\mathbb E(X \circ Y | \sigma) \leq
\mathbb E(X | \sigma) \circ \mathbb E(Y | \sigma)$.  First note
that $\mathcal O(X \circ Y) = \mathcal O(X) \circ \mathcal O(Y)$.
Next, as discussed previously, we see that $A_i(X \circ Y) \subset (A_i(X) \circ \mathcal O(Y))
\cup (\mathcal O(X) \circ A_i(Y)), 1 \leq i \leq 3$.  Finally, and this is the key case, we claim that
$\mathcal F(X\circ Y) \subset A_1(X) \circ A_1(Y)$.  This follows
from the observation we made before, which is that if $\sigma \in
\mathcal F(X \circ Y)$, then we must see the alternating
configuration on the flower, requiring next to nearest neighbor
transmissions through the iris for both paths; such a $\sigma$
certainly lies in $A_1(X) \circ A_1(Y)$.  Thus we are done, assuming that $(a+s)^2 \geq s$ -- but this is equivalent to the statement that $a^2 \geq 2 s^2$.  

We have established the claim for the case of a single flower and
two paths.  Next we may induct on the number of flowers, as follows.
Suppose now the claim is established for $K-1$ flowers.  We can now
let $\sigma$ denote the configuration of all petals, filler, and
irises of the first $K-1$ flowers.  We condition on $\sigma$ as
above and adapt the notation so that the sets $\mathcal O$, $A_i$'s, and $\mathcal
F$ correspond to the $K^{\text{th}}$ flower.  The argument can then be
carried out exactly as above to yield the result for $K$ flowers and
two paths.  Finally we induct on the number of paths.  Suppose the
claim is true for $n-1$ paths.  Since the $\circ$ operation is
associative, we consider $(X_1 \circ \dots \circ X_{n-1}) \circ
X_n$, where the $X_i$'s are indicator functions of the $n$ paths. We
simply view $(X_1 \circ \dots \circ X_{n-1})$ as a single path--type
event and repeat the proof (note that the analogue of equation (\ref{XoutY}) may now contain non--trivial $\mathcal{F}$--type terms; these are immaterial since what is listed is already enough for an upper bound).  This argument is sufficient since no more than
two paths may share an iris under any circumstance.
\end{proof}

\subsection{On the Generalization of Cardy's Formula for $M(\partial \Omega) < 2$}\label{cardy}

Here we provide the necessary interior analyticity statement required to extract Cardy's Formula for the model in \cite{cardy} (the actual, full proof requires additional ingredients found in the companion work \cite{pt2}).  As described in $\S$\ref{model}, \cite{cardy} contains a proof of Cardy's formula for piecewise smooth domains, so what is needed here is a generalization to domains $\Omega$ with $M(\partial \Omega) < 2$.  What we will prove is the following:  
 
\begin{lemma}\label{cardy_formula'}
Let $\Omega$ denote any conformal triangular domain with $M(\partial \Omega) < 2$.  Let $u_{\varepsilon}^Y$, $v_{\varepsilon}^Y$ and $w_{\varepsilon}^Y$ denote the crossing probability functions as defined in $\Omega$ for the lattice at scale $\varepsilon$.  Then for the model as defined in  $\S$\ref{model}, we have 
\[\lim_{\varepsilon \rightarrow 0} u_\varepsilon^Y = u,\]
with similar results for $v_\varepsilon^Y$ and $w_\varepsilon^Y$ and the corresponding blue versions of these functions, where $u$, $v$ and $w$ are the Cardy--Carleson functions.    
\end{lemma}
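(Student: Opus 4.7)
The plan is an interior-and-boundary argument based on approximation of $\Omega$ from inside by piecewise smooth subdomains. For the \emph{interior} part, the methods of \cite{cardy} transfer essentially unchanged: the RSW estimates of Theorem 3.10 of \cite{cardy}, together with the BK-type inequality established in Lemma \ref{bklemma}, give uniform H\"older equicontinuity of the triple $(u_{\varepsilon}^Y, v_{\varepsilon}^Y, w_{\varepsilon}^Y)$ on compact subsets of the interior of $\Omega$; neither ingredient requires smoothness of $\partial \Omega$. A standard diagonal extraction yields subsequential limits $(u^\ast, v^\ast, w^\ast)$, and the discrete Cardy--Carleson identities derived in \cite{cardy} pass to the limit, so that the limiting triple satisfies the Cardy--Carleson PDE system in the interior; in particular, the appropriate complex combination is analytic there.

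To identify the limit as the Cardy--Carleson function $u$ of $\Omega$, I would approximate $\Omega$ from inside by piecewise smooth subdomains $\Omega^{(n)} \subset \Omega$ with $d_H(\partial \Omega^{(n)}, \partial \Omega) \to 0$ and with marked prime ends $\mathcal{A}^{(n)}, \mathcal{B}^{(n)}, \mathcal{C}^{(n)}$ converging in the prime end sense. Theorem 2.4 of \cite{cardy} then gives $u_{\varepsilon}^Y(\Omega^{(n)}) \to u^{(n)}$ as $\varepsilon \to 0$, where $u^{(n)}$ is the Cardy--Carleson function of $\Omega^{(n)}$; conformal invariance of the Cardy--Carleson function, combined with Carath\'eodory convergence $\Omega^{(n)} \to \Omega$, yields $u^{(n)}(z) \to u(z)$ for every interior $z$. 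Thus everything reduces to bounding $|u_{\varepsilon}^Y(\Omega)(z) - u_{\varepsilon}^Y(\Omega^{(n)})(z)|$ uniformly in $\varepsilon$, with the bound vanishing as $n \to \infty$.

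The core estimate is the following. Any such discrepancy forces a percolation pathway utilizing the thin shell $S_n := \Omega \setminus \Omega^{(n)}$ of thickness $\lesssim 1/n$, and near the point of $\partial \Omega$ where this pathway is active one sees (in the worst case) a three-arm event of alternating colors --- two blue arms of the separating cluster, split by a yellow arm of separation --- running from a $\delta$-scale box out to the macroscopic scale $R \sim d(z, \partial \Omega)$. The half-plane three-arm exponent is $2$ (see \cite{aiz_chi,lsw2} and compare the argument in the proof of Lemma \ref{nodoublevisit}), yielding per-box probability $\lesssim (\delta/R)^{2}$. Taking $\delta \sim 1/n$ and covering $\partial \Omega$ by boxes of this scale requires at most $C\delta^{-(2-\theta)}$ boxes, where $\theta > 0$ is supplied by the hypothesis $M(\partial \Omega) < 2$. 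The resulting union bound gives a total contribution of order $R^{-2} \delta^{\theta}$, which vanishes as $\delta \to 0$ for any $\theta > 0$.

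The main obstacle I anticipate is securing the three-arm estimate in the actual non-smooth geometry near $\partial \Omega$, rather than in a literal half-plane. I would handle this by composing with the uniformization map $\varphi: \Omega \to \mathbb{D}$ and transferring the required arm bound through $\varphi$, using the Koebe and Bieberbach distortion estimates already employed in the proof of Lemma \ref{nodoublevisit} to relate Euclidean distances near $\partial \Omega$ to their images near $\partial \mathbb{D}$; the arm exponents are conformally invariant in the scaling limit, while at the discrete level the distortion enters only through multiplicative constants that are absorbed in $C$. Once this boundary estimate is secured, sending $\varepsilon \to 0$ first (invoking \cite{cardy} on each $\Omega^{(n)}$) and then $n \to \infty$ forces every subsequential limit of $u_{\varepsilon}^Y(\Omega)$ to coincide with $u$ on the interior of $\Omega$, completing the proof.
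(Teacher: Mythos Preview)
Your approach is genuinely different from the paper's, and the step you flag as ``the main obstacle'' is in fact a real gap. The half--plane three--arm exponent $2$ is an \emph{a priori} discrete estimate, but it requires a locally flat boundary; near the rough $\partial\Omega$ you cannot invoke it directly. Your proposed fix --- uniformize to $\mathbb D$ and transport the bound --- is circular: the argument in Lemma~\ref{nodoublevisit} that you cite explicitly writes the boundary three--arm event as a difference of crossing probabilities and then \emph{uses conformal invariance of those crossing probabilities}, i.e., uses Cardy's formula, to move the estimate to the disc. You cannot appeal to that mechanism while still in the process of establishing Cardy's formula. Nor does the claim that ``at the discrete level the distortion enters only through multiplicative constants'' hold up: the uniformization map does not respect the lattice, so there is no discrete three--arm statement on $\mathbb D$ to compare to.

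The paper sidesteps this entirely by not doing domain approximation at all. It reruns the Smirnov--style argument of \cite{cardy} and observes that the boundary regularity is used at exactly one point: controlling $|u_\varepsilon^{*}-u_\varepsilon|$ via essential lasso points (Lemma~7.2 of \cite{cardy}). That event produces six arms from the lasso point; when the point lies within $N^{\lambda}$ of $\partial\Omega$ one arm is short, but after a color switch the remaining five long arms live in the \emph{full plane}, so the superuniversal bound $\pi_5(N^{\mu_1},2N^{\lambda})\le C(N^{\lambda}/N^{\mu_1})^2$ applies with no reference to the boundary geometry. The hypothesis $M(\partial\Omega)<2$ enters only through the box count $N^{M+\delta-\lambda}$ of the near--boundary region, yielding a net exponent $M+\delta+\lambda-2\mu_1<0$. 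If you want to rescue your route, one option is to run the three--arm argument at the \emph{smooth} boundary $\partial\Omega^{(n)}$ rather than at $\partial\Omega$; then the discrete half--plane bound is legitimate, but the box count must track the length of $\partial\Omega^{(n)}$, which blows up with $n$, and you have to check that the Minkowski hypothesis still closes the estimate.
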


To prove the current statement, we start by repeating the proof in \cite{cardy} up to Lemma 7.2 and Corollary 7.4 -- the one place where the assumption on a piecewise smooth boundary is used.  We now give a quick exposition of the (relevant portions of the) strategy of proof in \cite{cardy}.  The idea (directly inherited from \cite{stas_perc}) is to represent the derivative of the crossing probability functions as a ``three--arm'' event, e.g., two blue paths and one yellow path from some point to the boundaries, with all paths disjoint, and then derive Cauchy--Riemann type identities by switching the color of one of the arms.  

In order to accomplish this color switching in our model, it was necessary to introduce a \emph{stochastic} notion of disjointness.  This amounted to the introduction of a large class of random variables which indicate whether or not a percolation configuration contributes to the event of interest (e.g., a blue path from $\mathcal A$ to $\mathcal B$, separating $z$ from $\mathcal C$).  We call the restrictions and permissions given by these random variables \emph{$*$--rules}.  The $*$--rules may at times call a self--avoiding path illegitimate if it contains \emph{close encounters}, i.e., comes within one unit of itself; on the other hand, the $*$--rules may at other times permit a path which is not self--avoiding but in fact shares a hexagon.  Thus the $*$--rules are invoked only at shared hexagons and close encounter points of a path.  When a close encounter or sharing at a hexagon is required to achieve the desired path event it is called an \emph{essential lasso point}.

The fact that these $*$--rules may be implemented by random variables in a fashion which allows color switching is the content of Lemma 3.17 in \cite{cardy}.  The strategy was then to first prove that the $*$--version of e.g., the function $u_\varepsilon$, denoted $u_\varepsilon^*$, converges to $u$, then show that in the limit the starred and unstarred versions of the function coincide.  For the current work, the precise statement is as follows: 

\begin{lemma}\label{boundary}
Let $\Omega$ be a domain such that 
\[ M \equiv M(\partial \Omega) < 2.\]
Let $z$ denote a point in $\Omega$.   Consider the (blue version of the) function $u_\varepsilon(z)$ as defined in $\S$\ref{model}.  Let $u_\varepsilon^*(z)$ denote the version of $u_\varepsilon$ with the $*$-rules enforced.  Then,
\[ \lim_{\varepsilon \rightarrow \infty} |u_\varepsilon^*(z) - u_\varepsilon(z)| = 0.\]
In particular, on closed subsets of $\Omega$, the above is uniformly bounded by a constant times a power of $\varepsilon$.
\end{lemma}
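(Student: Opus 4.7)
The plan is to bound $|u_\varepsilon(z) - u_\varepsilon^*(z)|$ by the probability that some site of $\Omega_\varepsilon$ serves as an essential lasso point for a configuration counted differently by the two functions, and to organize the resulting sum by distance to $\partial \Omega$. The bulk contribution will be controlled by the 6-arm estimate of Lemma \ref{arms}; the near-boundary contribution will be controlled by a combination of bulk multi-arm and half-plane 3-arm estimates, with the Minkowski condition used to bound the number of near-boundary $\varepsilon$-sites.

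First I would establish the pointwise inequality
\[ |u_\varepsilon(z) - u_\varepsilon^*(z)| \leq \sum_{v \in \Omega_\varepsilon} \mathbb{P}_\varepsilon(v \text{ is an essential lasso point for the blue path event}). \]
This follows by noting that a percolation configuration contributes to $u_\varepsilon$ but not $u_\varepsilon^*$ (or vice versa) only if the $*$-rules are invoked non-trivially along a realizing path, which happens precisely at essential lasso points. A flower-local inspection of the $*$-rules, along the lines of Lemma 3.17 of \cite{cardy}, verifies that every essential lasso point in the bulk supports at least a polychromatic $5$- or $6$-arm configuration, and that a near-boundary lasso point additionally forces a half-plane $3$-arm configuration out to the domain scale.

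Next, I would fix a cutoff $\Delta = \Delta(\varepsilon) \to 0$ and split the sum according to whether $d(v,\partial \Omega) \geq \Delta$ or $d(v,\partial \Omega) < \Delta$. For $v$ with $d(v) \geq \Delta$, Lemma \ref{arms} gives $\mathbb{P}_\varepsilon(\text{essential lasso at }v) \leq C(\varepsilon/\Delta)^{2+\sigma}$, and summing over the at most $\varepsilon^{-2}$ bulk sites produces a contribution of order $\varepsilon^{\sigma} \Delta^{-(2+\sigma)}$. For $v$ with $d(v) < \Delta$, I would combine a bulk $5$-arm event on scales $[\varepsilon,d(v)]$ with a half-plane $3$-arm event on scales $[d(v),R]$ for some $O(1)$ scale $R$; Lemma \ref{bklemma} gives a product bound of the form $(\varepsilon/d(v))^{2}\cdot(d(v)/R)^{2}$, up to constants (using $6$-arm in the bulk would yield an extra $(\varepsilon/d(v))^\sigma$ factor, which is convenient but not strictly necessary).

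Writing $M(\partial \Omega) = 2 - \theta_0$ with $\theta_0 > 0$, any $\theta < \theta_0$ yields a bound of $C_\theta d^{\theta}\varepsilon^{-2}$ on the number of $\varepsilon$-sites within distance $d$ of $\partial \Omega$. Summing the combined arm estimate over dyadic scales $d = 2^{-k}$ in the range $[\varepsilon,\Delta]$ then gives a boundary contribution of order
\[ \sum_{k}\;\frac{(2^{-k})^{\theta}}{\varepsilon^{2}}\cdot \varepsilon^{2+\sigma}(2^{-k})^{-\sigma}R^{-2} \;\lesssim\; \varepsilon^{\sigma}\cdot \max\!\big(1,\,\Delta^{\theta-\sigma}\big), \]
where one chooses $\theta$ a little smaller than $\min(\theta_0,\sigma)$ to make this vanish as $\varepsilon \to 0$. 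Adding back the bulk piece and optimizing $\Delta$ as a small power of $\varepsilon$ yields the required power-law bound on closed subsets.

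The main obstacle is the second step: verifying rigorously that every essential lasso point, even when it lies microscopically close to a rough boundary, actually produces the claimed combined (bulk $5$-arm $\times$ half-plane $3$-arm) event, and that these two arm events may be separated by Lemma \ref{bklemma}. This amounts to extending the arm-event analysis of Lemma 7.3 of \cite{cardy} (written there for piecewise smooth boundaries) to boundaries of Minkowski dimension less than $2$. The Minkowski dimension hypothesis enters at precisely one point in the argument --- the counting of near-boundary $\varepsilon$-sites --- and the borderline $M(\partial\Omega) = 2$ would make the corresponding sum diverge logarithmically, confirming that the hypothesis is essentially sharp for this method.
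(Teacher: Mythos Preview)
Your overall strategy---bound the difference by a sum over essential lasso points and split by distance to the boundary---matches the paper's, but the near-boundary estimate you propose does not go through as stated, and the paper handles this part quite differently.

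The gap is the ``half-plane $3$-arm'' input. The exponent $2$ for a polychromatic $3$-arm event is only known for a \emph{flat} (or piecewise smooth) boundary; for a boundary of Minkowski dimension $2-\theta_0$ with $\theta_0$ small there is no \emph{a priori} $3$-arm estimate of the form $(d/R)^2$ available at the discrete level. (Lemma~\ref{nodoublevisit} obtains such an estimate only after passing to the limit and invoking conformal invariance, which is not available here since you need bounds uniform in $\varepsilon$.) Without this input your boundary sum does not close: with only a full-space $3$-arm bound, whose exponent is strictly less than $2$, the powers do not balance against the $\varepsilon^{-2}$ site count.

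The paper's device is to avoid half-plane estimates entirely. It fixes a mesoscopic scale $N^\lambda$ (with $N=\varepsilon^{-1}$), covers the $N^\lambda$-neighborhood of $\partial\Omega$ by $O(N^{M+\delta-\lambda})$ boxes of side $2N^\lambda$, and shows that a lasso point in any such box forces a \emph{full-space} $5$-arm event on scales $[2N^\lambda, N^{\mu_1}]$: of the six arms at the lasso point exactly one is short (the one landing on the nearby piece of $\partial\Omega$), and a color-switching argument based on Lemma~3.17 of \cite{cardy}---conditioning on the minimal yellow crosscut of $\Omega\cap B_{\mu_1}$, then flipping $B_{\mathcal B}$ to yellow in the unconditioned region---converts the remaining five into strictly disjoint long arms. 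The full-space $5$-arm bound $(N^\lambda/N^{\mu_1})^2$, summed over $N^{M+\delta-\lambda}$ boxes, gives $N^{M+\delta+\lambda-2\mu_1}$, which is a negative power of $N$ once $M<2$. Note that this count is over mesoscopic boxes, not over $\varepsilon$-sites; the scale $\varepsilon$ never enters the near-boundary term.

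A smaller but related issue affects your bulk step: the six arms at an essential lasso point are \emph{not} automatically disjoint under the $*$-rules---the two blue arms $B_{\mathcal A}$ and $B_{\mathcal B}$ may share hexagons---so Lemma~\ref{arms} with exponent $2+\sigma$ does not apply directly, and the bare $5$-arm exponent $2$ only yields an $O(1)$ sum. The paper defers this to the ``$5^+$-arm'' analysis of Lemmas~7.2--7.3 in \cite{cardy}, which again uses the color switching of Lemma~3.17 to manufacture genuinely disjoint arms before invoking the multi-arm bounds.
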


Before we begin the proof we need some standard percolation notation.

\begin{defn}\label{perc}
Back on the unit hexagon lattice, if $L$ is a positive integer, let $B_L$ denote a box of side length $L$ centered at the origin.  Further, let $\Pi_5(L)$ denote the event of five disjoint paths, not all of the same color, starting from the origin and ending on $\partial B_L$.  Now let $m < n$ be positive integers, and let $\Pi(n, m)$ denote the event of five long arms, not all of the same color, connecting $\partial B_m$ and $\partial B_n$.  We use the notation $\pi_5(n)$ and $\pi_5(n, m)$ for the probabilities of $\Pi_5(n)$ and $\Pi_5(n, m)$, respectively.
\end{defn}

\begin{proof}[Proof of Lemma \ref{boundary}] We set $N = \varepsilon^{-1}$ and, without apology, we will denote the relevant functions by $u_{_N}$.  For convenience we recap the proof of Lemma 7.2 in \cite{cardy} (with one minor modification).  Let us first consider the event which is contained in both the starred and unstarred versions of the $u$--function, namely the event of a self--avoiding, non--self--touching path separating $z$ from $\mathcal C$, etc.  We will denote the indicator function of this event by $\mathfrak U_{_N}^-$.  Similarly, let us define an event, whose indicator is $\mathfrak U_{_N}^{*+}$, that contains both the starred and unstarred versions: This is the event that a separating path of the required type exists, with no restrictions on self--touching, and is allowed to share hexagons provided that permissions are granted.  It is obvious that 
\begin{equation}\label{tuff}
\mathbb E[\mathfrak U_{_N}^{*+} - \mathfrak U_{_N}^-] \geq |u_{_N}^*  - u_{_N}|.
\end{equation}

We turn to a description of the configurations, technically on $(\omega, X)$ (the enlarged probability space which include the permissions), for which $\mathfrak U_{_N}^{*+} = 1$ while $\mathfrak U_{_N}^- = 0$.  In such a configuration, the only separating paths contain an \emph{essential} lasso point which, we remind the reader, could be either a shared hexagon or a closed encounter pair.  Let us specify the lasso point under study to be the last such point on the journey from $\mathcal A$ to $\mathcal B$ (i.e., immediately after leaving this point, the path must capture $z$ without any further sharing or self--touching, then return to this point and continue on to $\mathcal B$). For standing notation, we denote this ``point'' by $z_0$.  A variety of paths converge at $z_0$: certainly there is a blue path from $\mathcal A$, denoted $B_{\mathcal A}$, a blue path to $\mathcal B$, denoted $B_{\mathcal B}$, and an additional loop starting from $z_0$ (or its immediate vicinity) which contains $z$ in its interior.  The loop we may view as two blue paths of comparable lengths, denoted $L_z^1$ and $L_z^2$.  However, since the lasso point was deemed to be essential, there are two additional yellow arms emanating from the immediate vicinity of $z_0$.  These yellow arms may themselves encircle the blue loop and/or terminate at the boundary $\mathcal C$.  We denote these yellow paths $Y_\mathcal C^1$ and $Y_\mathcal C^2$.  

Since $z_0$ is the last lasso point on the blue journey from $\mathcal A$ to $\mathcal B$, we automatically get that the two loop arms are \emph{strictly} self--avoiding.  Also, without loss of generality, we may take the yellow arms to be strictly self--avoiding.  Further, by Lemma 4.3 of \cite{cardy}, we may take either the portion of the path from $\mathcal A$ to $z_0$ to be strictly self--avoiding or the portion of the path from $\mathcal B$ to $z_0$ to be strictly self--avoiding.  To summarize, we have six paths emanating from $z_0$, four blue and two yellow, with all paths disjoint except for possible sharings between $B_{\mathcal A}$ and $B_{\mathcal B}$.  For simplicity, let us start with the connected component of $z$ in $\Omega \setminus (\alpha_k \cup \beta_k \cup \gamma_k)$ where $\alpha_k, \beta_k, \gamma_k$ are short crosscuts defining the prime ends $a, b, c$, respectively.  It is noted that in this restricted setting, the various portions of the boundary are at a finite (macroscopic) distance from one another.  Thus, on a mesoscopic scale, we are always near only a single boundary.  

The case where $z_0$ is close to $z$ is handled by RSW-type bounds (see proof of Lemma 7.2 in \cite{cardy}). The terms where $z_0$ is in the interior follow from the $5^+$ arm estimates; these arguments are the subject of Lemma 7.2 and Lemma 7.3 in \cite{cardy}.  We are left with the case where say $z_0$ is within a distance $N^\lambda$ of the boundary but outside some box of side $N^{\mu_2}$ separating $c$ from $z$.  

Let $\delta > 0$.  For $N$ large enough, $\partial \Omega$ can be covered by no more than $J_\delta N^{M+\delta - \lambda}$ boxes of side $N^\lambda$.  Now we take these boxes and expand by a factor of, say, two and we see that the region within $N^\lambda$ of the boundary can be covered by $J_\delta N^{M+\delta - \lambda}$ boxes of side $2N^\lambda$.  We surround each of these boxes by a box of side $N^{\mu_1}$, where $\mu_2 > \mu_1 > \lambda$.  

Now suppose $z_0$ is inside the inner box.  We still have the six arms $B_\mathcal A$, $B_\mathcal B$, $L_z^1$, $L_z^2$, $Y_\mathcal C^1$ and $Y_\mathcal C^2$, but since $z_0$ is now close to some boundary, we expect some arm(s) to be short (i.e., shorter than $N^\lambda$).  We note that the box of side $\mu_1$ is still away from $c$, and therefore we cannot have more than one of $B_\mathcal A$ and $B_\mathcal B$ be short due to being close to the boundary.  Also, since $z$ must be a distance of order $N$ away from the boundary, $z$ is outside of both of these boxes and therefore both $L_z^1$ and $L_z^2$ are long.  The upshot is that regardless of which boundary $z_0$ is close to, one and only one of the six arms will be short: If $z_0$ is close to $\mathcal A$ (respectively $\mathcal B$), then $B_\mathcal A$ (respectively $B_\mathcal B$) will be short, and if $z_0$ is close to $\mathcal C$, then a moment's reflection will show that only one of the yellow arms will be short.  

What we have is then five long arms and one short arm emanating from the immediate vicinity of $z_0$, and these arms either end on some boundary or the boundary of the outer box of side $N^{\mu_1}$.  For reasons which will momentarily become clear, we will now perform a color switch.  Topologically, the two yellow arms separate $L_z^1$ and $L_z^2$ from $B_\mathcal A$ and $B_\mathcal B$.  Denote the outer box by $B_{\mu_1}$ and consider now the region $T\equiv\Omega \cap B_{\mu_1}$.  The two yellow arms together form a ``crosscut'' (in the sense of Kesten \cite{ksz}) of $T$ .  This crosscut separates $T$ into two disjoint regions $T_b$ and $T_l$, where $T_b$ contains $B_{\mathcal A}$ and $B_{\mathcal B}$ and $T_l$ contains $L_z^1$ and $L_z^2$.  We condition on the crosscut which minimizes the area of $T_l$.  Next we apply Lemma 4.3 of \cite{cardy} to reduce the blue arm adjacent to the longer of the two yellow arms -- which we take to be $Y_\mathcal C^1$ -- to be strictly self--avoiding, which without loss of generality we assume to be $B_\mathcal A$.  Since $B_\mathcal A$ forms a crosscut of $T_b$, there is a crosscut which maximizes the region which contains $B_\mathcal B$, which we denote $\Omega_B$.  The region $\Omega_B$ is now an unconditioned region, and we may apply Lemma 3.17 of \cite{cardy} to switch the color of $B_\mathcal B$ from blue to yellow, while preserving the probability.  The resulting yellow path we will denote $Y_\mathcal B$.  

We now have three blue paths and three yellow paths.  The blue paths are now all strictly self--avoiding.  $Y_\mathcal C^1$ is still strictly self--avoiding, but the path $Y_\mathcal B$ may very well interact with (i.e., share hexagons with, due to the $*$-rules) $Y_\mathcal C^2$.  If indeed there is sharing, then let $\hat{Y} = Y_\mathcal B \cup Y_\mathcal C^2$ be the geometric union of the two paths.  $\hat{Y}$ can then be reduced to be a strictly self--avoiding path, which we now denote $Y$.  In any case, we now have (at least) five long paths emanating from $z_0$, three blue and two yellow, with the yellow paths separating the blue paths, and with all paths strictly self--avoiding.  The probability of such an event is certainly bounded above (possibly strictly since the boxes will most likely intersect $\Omega^c$) by the full space event $\Pi_{5}(N^{\mu_1}, 2N^\lambda)$ -- see Definition \ref{perc}.  The upshot of Lemma 5 of \cite{ksz} is that 
\begin{equation}\label{five}\pi_{5}(N^{\mu_1}, 2N^\lambda) \leq C \left( \frac{N^\lambda}{N^{\mu_1}}\right)^2,\end{equation}
where $C$ is a constant.  This result can, almost without modification, be taken verbatim from \cite{ksz}; the proviso therein which concerned ``relocation of arms'' was discussed in the first paragraph of the proof of Lemma 7.3 in \cite{cardy}.  We consider (\ref{five}) to be established.

If we sum over all such boxes of side $2N^\lambda$, we find that the contribution from the near boundary regions is a constant times 
\[N^{M+\delta - \lambda + 2\lambda - 2\mu_1} = N^{M + \delta + \lambda - 2\mu_1}.\]  
Since $M < 2$, we may first choose $\delta$ and $\lambda$ such that $M + \delta + \lambda < 2$, and next we will choose $\mu_2$ and then $\mu_1$ large enough so that the exponent is negative.  

Finally let us take care of the crosscuts.  We shall show that for large $k$, the event that a path emanates from the crosscut e.g., $\beta_k$ and goes to $\mathcal B$ tends to 0 as $k \rightarrow \infty$ (uniformly in $N$ for all $N$ sufficiently large): Indeed, although the prime end $b$ may be a continua, the probability of a path emanating from $b$ is ``as small'' as though $b$ were a point.  Let us begin by looking at the conformal rectangle $B_k \setminus B_{2k}$ defined by the relevant crosscuts.  We now mollify $B_k \setminus B_{2k}$ so that the resulting domain has smooth boundary and lies strictly in $\Omega$: This is easily accomplished by deleting from $B_k \setminus B_{2k}$ the image under the conformal map $\phi: \mathbb H \rightarrow \Omega$ of some $\delta$ neighborhood of $\partial \mathbb H$, where $\delta_k > 0$ is chosen so small that the said image is within some (Euclidean distance) $\eta_k$ of $\partial \Omega$.  Let us denote the resulting domain by $R_k$.  Since $R_k$ has smooth boundary, the result of \cite{cardy} applies and we may apply Cardy's Formula inside $R_k$ to see that the probability of a ``lateral'' yellow crossing (i.e., one ``parallel'' to $\beta_k$ and $\beta_{2k}$) is uniformly bounded from below, independently of $k$, if $\eta_k$ is properly chosen.  We may even assume that the crossing takes place in the ``bottom'' half of $R_k$, which will allow us to construct Harris annuli of order $\eta_k$ enabling a connection to the actual boundary.  Thus, having achieved all this, looking at the lowest such crossing, we may RSW continue the crossing to the actual $\partial \Omega$, with probability uniformly bounded from below.  It is now straightforward to observe that in the presence of such a yellow crossing, no blue path may emanate from $\beta_k$.  Performing this construction on a multitude of scales, it is clear, as $\varepsilon \to 0$ that with probability tending to one, no blue path emanates from this prime end.
%
%\vspace{2cm}  
%For $k$ large, the event that a path emanates from the crosscut e.g., $\beta_k$ and goes to $\mathcal B$ tends to 0 as $k \rightarrow \infty$ (uniformly in $N$ for all $N$ sufficiently large): Indeed, a path emanating from $\beta_k$ must pass through a minimal sized ``bottleneck'' -- whose diameter, $\eta_k$, tends to 0 as $k \rightarrow \infty$.  This implies the existence of a long path emanating from a small region, with probability which vanishes with some power of $\eta_k$.  Similarly for the other two prime ends.  

All estimates described above are uniform in $z$ provided $z$ remains a fixed non--zero (Euclidean) distance from the boundary.  And, finally, the proof of Lemma \ref{boundary} for $v_N$ and $w_N$ are the same. 
\end{proof}

\begin{proof}[Proof of Lemma \ref{cardy_formula'}]
Corollary 7.4 of \cite{cardy} concerned the difference between the blue and yellow versions of these functions (Cauchy--Riemann relations are only established for color--neutral sums).  However, the argument of Corollary 7.4 in \cite{cardy} reduced the difference between the two colored versions to six arm events in the bulk and five arm events near the boundary, to which the above arguments can be applied. 
Replacing Lemma 7.2 (and Lemma 7.3) in \cite{cardy} with Lemma \ref{boundary} gives a proof of Lemma \ref{cardy_formula'}.  
\end{proof}

\section*{\large{Acknowledgments}} 

\hspace{16 pt}The authors are grateful to the IPAM institute at UCLA for their hospitality and support during the ÒRandom Shapes ConferenceÓ (where this work began).  The conference was funded by the NSF under the grant DMS-0439872.  I.~B.~was partially supported by the NSERC under the DISCOVER grant 5810-2004-298433. L.~C.~was supported by the NSF under the grants DMS-0306167 and DMS-0805486.  H.~K.~L was supported by the NSF VIGRE grant, and the Graduate Research Mentorship Program and the Dissertation Year Fellowship Program at UCLA.

The authors would like to thank John Garnett for various pertinent conversations and would like to express their gratitude to Stas Smirnov for \emph{numerous} conversations and consultations.  The authors would also like to thank Wendelin Werner for some useful discussions.

\end{document}